\PassOptionsToPackage{table,dvipsnames}{xcolor}
\documentclass[%
 reprint,
superscriptaddress,
 amsmath,amssymb,
 aps,
pra,
]{revtex4-2}

\usepackage[normalem]{ulem}
\usepackage{svg}
\usepackage{xr}
\usepackage{graphicx}%
\usepackage{makecell}
\usepackage{dcolumn}%
\usepackage{tabularx}
\usepackage{multirow}
\usepackage{bm}%
\usepackage{hyperref}
\hypersetup{breaklinks=true, colorlinks=true, linkcolor={blue!90!black}, citecolor={blue!90!black}, urlcolor={blue!90!black}}
\usepackage{braket}
\usepackage{cleveref}
\usepackage{stmaryrd}
\usepackage{subfiles}
\usepackage{bibunits}

\let\oldaddcontentsline\addcontentsline
\newcommand{\stoptocentries}{\renewcommand{\addcontentsline}[3]{}}
\newcommand{\starttocentries}{\let\addcontentsline\oldaddcontentsline}

\newcommand{\QUK}{Quantinuum, Partnership House, Carlisle Place, London SW1P 1BX, UK}

\usepackage{amsthm, amsmath, amsfonts, amssymb}
\usepackage{braket}
\usepackage{graphicx}%
\usepackage{dcolumn}%
\usepackage{bm}%
\usepackage{hyperref}
\usepackage[dvipsnames]{xcolor}

\usepackage{times}
\usepackage{titlesec}
\usepackage{hyperref}
\hypersetup{breaklinks=true, colorlinks=true, linkcolor={blue!90!black}, citecolor={blue!90!black}, urlcolor={blue!90!black}}
\usepackage{cleveref}
\usepackage[utf8]{inputenc}
\usepackage{verbatim}
\usepackage{amsmath}
\usepackage{soul}
\usepackage{booktabs}
\usepackage{xr}
\usepackage{amssymb}
\usepackage{amsthm}
\usepackage{blindtext}
\usepackage{titlesec}
\usepackage{enumitem}
\usepackage{mathtools}
\usepackage{slashed}
\usepackage{comment}
\usepackage{newtxmath}
\usepackage{amsfonts}
\usepackage{makecell}
\usepackage{tabularx}
\usepackage{comment}
\usepackage{epstopdf}
\usepackage{float}
\usepackage{color}
\usepackage{multirow}
\usepackage{physics}
\usepackage{empheq}
\usepackage{bbm}
\usepackage{mdframed}
\usepackage{enumitem}
\usepackage{bbm}

\usepackage[figure,boxed,linesnumbered]{algorithm2e}
\DontPrintSemicolon
\SetKwInOut{Input}{Input}
\SetKwInOut{Output}{Output}
\SetKwComment{Comment}{/* }{ */}
\SetKw{Parameters}{Parameters : }
\SetKwFunction{SingleInvert}{SingleInvert}
\SetKwFunction{ForwardLearn}{ForwardLearn}
\IncMargin{1em}
\let\oldnl\nl
\newcommand{\nlnonumber}{\renewcommand{\nl}{\let\nl\oldnl}}

\makeatletter
\newcommand{\removelatexerror}{\let\@latex@error\@gobble}
\makeatother

\newcommand{\C}{\mathbb{C}}

\newcommand{\backvec}[1]{\reflectbox{$\vec{\reflectbox{$#1$}}$}}
\newcommand{\bb}[1]{\mathbb{#1}}
\renewcommand{\cal}[1]{\mathcal{#1}}

\renewcommand{\rm}[1]{\mathrm{#1}}

\DeclareMathOperator{\poly}{poly}
\DeclareMathOperator{\dist}{dist}

\newtheorem{theorem}{Theorem}

\newtheorem{conjecture}{Conjecture}
\newtheorem{definition}{Definition}
\newtheorem{corollary}{Corollary}
\newtheorem{lemma}{Lemma}

\newmdtheoremenv{protocol_framed}{Protocol}

\titleformat{\paragraph}
{\normalfont\normalsize\bfseries}{\theparagraph}{1em}{}
\titlespacing*{\paragraph}
{0pt}{3.25ex plus 1ex minus .2ex}{1.5ex plus .2ex}

\begin{document}

\title{Digital signatures with classical shadows on near-term quantum computers}%

\author{Pradeep Niroula}
\email{pradeep.niroula@jpmchase.com}
\thanks{Equal contribution}
\affiliation{Global Technology Applied Research, JPMorganChase, New York, NY 10017, USA}
\author{Minzhao~Liu}
\thanks{Equal contribution}
\affiliation{Global Technology Applied Research, JPMorganChase, New York, NY 10017, USA}
\author{Sivaprasad Omanakuttan}
\affiliation{Global Technology Applied Research, JPMorganChase, New York, NY 10017, USA}
\author{David Amaro}
\affiliation{\QUK}
\author{Shouvanik Chakrabarti}
\affiliation{Global Technology Applied Research, JPMorganChase, New York, NY 10017, USA}
\author{Soumik Ghosh}
\affiliation{Department of Computer Science, University of Chicago, Chicago, IL 60637, USA}
\author{Zichang He}
\affiliation{Global Technology Applied Research, JPMorganChase, New York, NY 10017, USA}
\author{Yuwei Jin}
\affiliation{Global Technology Applied Research, JPMorganChase, New York, NY 10017, USA}
\author{Fatih Kaleoglu}
\affiliation{Global Technology Applied Research, JPMorganChase, New York, NY 10017, USA}
\author{Steven Kordonowy}
\affiliation{Global Technology Applied Research, JPMorganChase, New York, NY 10017, USA}
\author{Rohan Kumar}
\affiliation{Global Technology Applied Research, JPMorganChase, New York, NY 10017, USA}
\author{Michael A. Perlin}
\affiliation{Global Technology Applied Research, JPMorganChase, New York, NY 10017, USA}
\author{Akshay Seshadri}
\affiliation{Global Technology Applied Research, JPMorganChase, New York, NY 10017, USA}
\author{Matthew Steinberg}
\affiliation{Global Technology Applied Research, JPMorganChase, New York, NY 10017, USA}
\author{Joseph Sullivan}
\affiliation{Global Technology Applied Research, JPMorganChase, New York, NY 10017, USA}
\author{Jacob~Watkins}
\affiliation{Global Technology Applied Research, JPMorganChase, New York, NY 10017, USA}
\author{Henry Yuen}
\affiliation{Columbia University, New York, NY 10027, USA}
\author{Ruslan~Shaydulin}
\email{ruslan.shaydulin@jpmchase.com}
\affiliation{Global Technology Applied Research, JPMorganChase, New York, NY 10017, USA}

\date{\today}%

\begin{abstract}
Quantum mechanics provides cryptographic primitives whose security is grounded in hardness assumptions independent of those underlying classical cryptography. However, existing proposals require low-noise quantum communication and long-lived quantum memory, capabilities which remain challenging to realize in practice. In this work, we introduce a quantum digital signature scheme that operates with only classical communication, using the classical shadows of states produced by random circuits as public keys. We provide theoretical and numerical evidence supporting the conjectured hardness of learning the private key (the circuit) from the public key (the shadow). A key technical ingredient enabling our scheme is an improved state-certification primitive that achieves higher noise tolerance and lower sample complexity than prior methods. We realize this certification by designing a high-rate error-detecting code tailored to our random-circuit ensemble and experimentally generating shadows for 32-qubit states using circuits with $\geq 80$ logical ($\geq 582$ physical) two-qubit gates, attaining $0.90 \pm 0.01$ fidelity. 
With increased number of measurement samples, our hardware‑demonstrated primitives realize a proof‑of‑principle quantum digital signature, demonstrating the near-term feasibility of our scheme.
\end{abstract}

\maketitle
\stoptocentries
\begin{bibunit}[apsrev4-2]

The study of quantum computing has traditionally focused on reducing computational resources such as time, memory, and energy, resulting in landmark quantum algorithms such as Shor’s algorithm for factorization \cite{shor1999polynomial}, Grover’s algorithm for unstructured search \cite{grover1996fast}, and methods for quantum Hamiltonian simulation \cite{lloyd1996universal, childs2010relationship, berry2015simulating, low2019hamiltonian}. Recently, a new dimension of quantum computational advantage has emerged, based on the development of cryptographic primitives whose security does not depend on the mathematical assumptions underpinning classical cryptography.
Classical digital signatures and related primitives typically rely on one-way functions (OWFs), families of functions that are efficiently computable yet hard to invert on average. RSA instantiates such hardness through integer factorization, which fails against a fault-tolerant quantum computer. Learning-with-Errors forms the foundation of many post-quantum cryptographic (PQC) protocols and is intended to be secure against quantum adversaries, but is a subject of ongoing research. These pressures motivate security assumptions that do not depend on the existence of OWFs.

Quantum mechanics offers independent routes to secure cryptographic schemes. While it has been long known that quantum \emph{communication} can enable unconditionally-secure key distribution \cite{bennett2014quantum}, recently there have been significant advances in using quantum \emph{computation} to propose cryptographic primitives that either rest on alternative assumptions than classical schemes or have no classical counterpart. Notably, some of these primitives are compatible with noisy devices~\cite{2511.03686,Liu2025}, unlike many traditional algorithms that typically require fault tolerance, opening opportunities to strengthen communication, commerce, and national security by grounding security in more fundamental physical assumptions.

\begin{figure*}
    \centering
    \includegraphics[width=\linewidth]{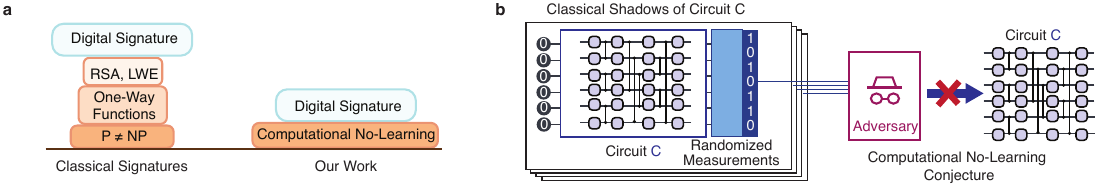}
    \caption{\textbf{Computational no-learning conjecture. a}, Our shadow-based signature schemes uses assumptions independent of those traditionally used in classical cryptography. \textbf{b}, Our conjecture states that it is not feasible for an adversary to learn circuits from randomized measurements (shadows).
    } 
    \label{fig:complexity-hierarchy}
\end{figure*}

One well-explored paradigm leverages the fact that quantum systems cannot be reliably observed without altering them~\cite{portmann2022security}. A more recent approach relies on the difficulty of learning quantum processes or states. In particular, it has been demonstrated that certain cryptographic primitives can be achieved through quantum communication, and could remain secure even in scenarios where OWFs do not exist \cite{kretschmer2021quantum,morimae2024one, fefferman2025hardness,morimae2022quantum,morimae2024one,khurana2024commitments}. However, existing schemes rely on the ability to transmit quantum states between communicating parties, a task that remains challenging with current and near-term technologies.

In this work, we present a digital signatures scheme that does not require OWFs or quantum communication. Digital signatures enable a recipient to verify the message’s true author using a signature generated by the sender. 
Our signature scheme modifies earlier proposals from \cite{fefferman2025hardness} to use classical shadows, instead of quantum states, as public keys, making the scheme friendly to near-term experiments. The security of our scheme is based on the conjectured computational hardness of learning quantum states from their classical shadows (``computational no-learning'' conjecture). This conjecture is independent of the existence of one-way functions as illustrated in Fig.~\ref{fig:complexity-hierarchy}.

More precisely, this protocol implements the quantum cryptographic primitive known as ``one-way puzzles", distinct from OWFs. One-way puzzles have been used \cite{khurana2024commitments} to construct secure quantum multiparty computation \cite{BCQ23} and non-interactive quantum bit commitments \cite{BCQ23,Yan22,HMY23}, providing OWF-free building blocks for quantum cryptography.

At present, no algorithm exists to efficiently learn a quantum circuit given its classical shadows. To provide evidence for the ``computational no-learning" conjecture, we instead consider algorithms requiring a more powerful access model. First, we show that existing algorithms for learning shallow circuits fail in our setting~\cite{fefferman2025hardness, huang2024learning, kim2024learning, landau2025learning, fefferman2024anti}. Second, we derive a learning algorithm specialized to the class of all-to-all circuits considered in this work, extending the approach from Ref.~\cite{fefferman2024anti}, and show security against it. In addition to support for CNL, we also provide evidence that ``spoofing'' the protocol is exponentially harder than verifying the signature. Specifically, we prove the separation for two general classes of potential attacks: low-degree polynomial learning algorithms~\cite{chen2025information} and variational circuit learning~\cite{anschuetz2026critical, anschuetz2025unified, anschuetz2022quantum}.

We demonstrate the near-term feasibility of the proposed scheme by generating shadows on a trapped-ion processor, successfully certifying the fidelity of the prepared quantum state. To enable the experiment, we introduce improved classical shadow schemes with better tolerance to experimental imperfections and a variant of Iceberg error-detecting code~\cite{Steane_iceberg,Gottesman_1998,self2024protecting} tailored to our random circuits. 
Using multi‑block Iceberg encoding, we achieve beyond-break-even performance on circuits with 40 logical qubits and 680 logical gates (560 single‑qubit and 120 two‑qubit), where the encoded fidelity exceeds the fidelity without error detection, a result of independent interest. For hard-to-learn 32-qubit states, we achieve a fidelity of $0.90\pm0.01$ with 17,316 samples.
We show that increasing the number of samples by 19.5 times is sufficient for a proof-of-principle demonstration of a signature scheme, demonstrating that our proposal can be realized in the near term.

\section*{Digital signatures}

\begin{figure*}[!ht]
    \centering
    \includegraphics[width=1\textwidth]{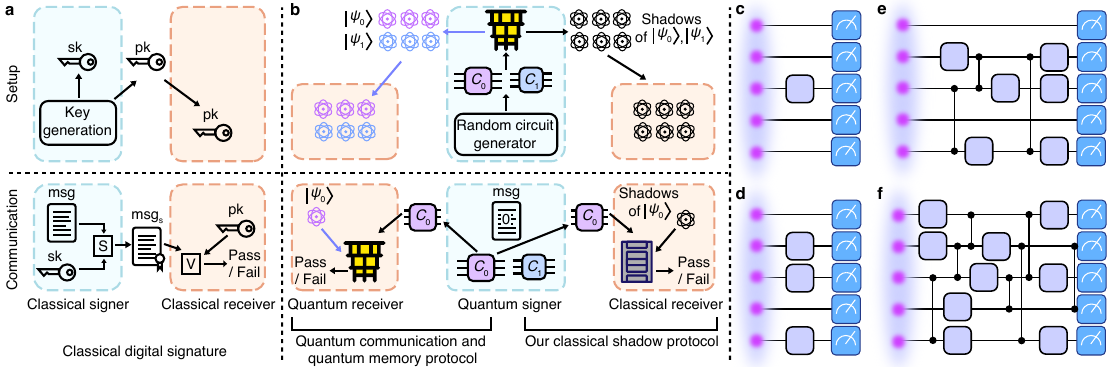}
    \caption{\textbf{Digital signature schemes.} \textbf{a}, The setup and the communication phases of classical digital signature. \textbf{b}, The setup and the communication phases of quantum digital signature. We show an example where the message communicated is a single bit $\mathrm{msg}=0$. Blue arrows denote quantum communication, and black arrows represent classical communication. The left half of the panel shows a protocol requiring quantum communication and quantum memory. The right half of the panel shows our classical shadow protocol. Here, shadows of a quantum state correspond to classical descriptions of measurement outcomes on the state. Verification by the receiver only requires a classical computer but takes time exponential in the number of qubits. \textbf{c,d,e,f,} Classical shadow schemes. \textbf{c}, Single-qubit random Pauli basis protocol of Ref.~\cite{huang2024certifying}, where a randomly chosen qubit is measured in a random Pauli basis and the rest are measured in the computational basis. \textbf{d}, An $m$-level shadow overlap protocol of Ref.~\cite{huang2024certifying}, where $m=3$ qubits are randomly chosen and measured in random Pauli bases, and the rest are measured in the computational basis. We improve the multi-qubit protocol by modifying the post-processing step. \textbf{e}, An $m$-level protocol, where $m=3$ qubits are randomly chosen and an $m$-qubit random Clifford circuit is applied to them before all qubits are measured in the computational basis. This protocol is proposed by this work. \textbf{f}, A special case of the protocol in \textbf{e} where $m$ is the same as the number of qubits. A global random Clifford circuit is applied to all qubits before measurement. This is the same as the classical shadow protocol based on random Clifford measurements for predicting quantum fidelities \cite{huang2020predicting}.%
    }
    \label{fig:protocols}
\end{figure*}

In a classical digital signature scheme, a sender cryptographically ``signs" a message such that a receiver may verify that the message truly originated from the sender and could not have been sent by an impersonator. To do this, the signer produces a key pair consisting of a public and a secret key. The public key $\mathrm{pk}$ is broadcast while the secret key remains private to the sender. When the sender wishes to sign a message $\rm{msg}$, they use the secret key $\mathrm{sk}$ to generate the signed message $\rm{msg}_{\rm{s}}$~$=S(\mathrm{sk}, \rm{msg})$, where $S$ denotes the signing function. The signed message $\rm{msg}_{\rm s}$ is then sent to the receiver, who uses the sender's public key and a verification function $V(\mathrm{pk}, \rm{msg}_{\rm s})$ to check if the signed message $\rm{msg}_{\rm s}$ is valid, i.e., corresponds to the sender. This protocol is illustrated in Fig. \ref{fig:protocols}\textbf{a}.

The security of such a protocol is based on the fact that, without the knowledge of the secret key $\mathrm{sk}$, it is hard for an adversary to alter the contents of message $\rm{msg}_{\rm s}$ and still pass the verification by the receiver. To enable this, it must be difficult to invert the signing function $S$ and reconstruct the secret key $\mathrm{sk}$, even if an adversary were to intercept both $\rm{msg}$ and $\rm{msg}_{\rm s}$. The non-invertibility of the signing function $S$ and therefore the existence of OWFs are central to a secure classical signature scheme. 

Many quantum cryptographic protocols are based on quantum analogues of OWFs whose existence is, in a formal sense, independent of classical OWFs. As a notable example friendly to near-term implementation, Ref.~\cite{fefferman2025hardness} constructs one-way state generators \cite{morimae2022quantum} using random quantum circuits, a class of circuits that have been extensively studied both theoretically and numerically. Specifically, it is easy to generate (on a quantum computer) a quantum state given the classical description of a randomly sampled quantum circuit as input, whereas it is hard in general to infer the circuit description given copies of the quantum state alone. This hinges on the so-called ``computational no-learning" conjecture that forbids efficient learning of random quantum circuits from copies of the quantum state. Although proving the conjecture outright would require significant breakthroughs in computational complexity theory, there is extensive evidence in favor of the conjecture \cite{fefferman2025hardness}.

Such a one-way state generator may also be leveraged to devise a quantum analog of a digital signature \cite{fefferman2025hardness}. In the quantum digital signature protocol proposed by Ref.~\cite{fefferman2025hardness}, to sign a single bit, the sender begins with classical descriptions of two quantum circuits $C_0$ and $C_1$, which serve as the secret key. For the public key, the sender distributes copies of the corresponding quantum states, $\ket{C_0} = C_0 \ket{0}$ and $\ket{C_1} = C_1 \ket{0}$, for example via a trusted cloud repository. To transmit the bit $b$, the sender reveals the signed message $(b, C_b)$ to the receiver. The receiver then downloads copies of the quantum state $\ket{C_b}$ from the cloud and verifies that the state matches the circuit $C_b$—for instance, by checking that $C_b^\dagger \ket{C_b}$ yields the all-zero state. This protocol is illustrated in the left half of Fig. \ref{fig:protocols}\textbf{b}.

For an adversary to alter the signed message, i.e., flip the signed bit from $b$ to its negation $\neg b$, it must output a message $(\neg b,D)$, without prior knowledge of $C_{\neg b}$ (which is private to the sender), such that $\ket{D} \approx \ket{C_{\neg b}}$. The security of this protocol is based on the assumption that learning an equivalent circuit $D \sim C_{\neg b}$ is hard, even if the adversary possesses copies of the public key $\ket{C_{\neg b}}$.

However, while both the generation and verification steps are efficient with a quantum computer, the protocol relies on the coherent storage and distribution of quantum public keys—a requirement that far exceeds current technological capabilities. Establishing a cloud repository for quantum signatures also introduces new challenges. Although digital signatures protect against man-in-the-middle attacks on messages, they do not prevent an adversary from impersonating the cloud server itself. In classical digital signature schemes, this risk is mitigated through the use of digital certificates: users perform a one-time download of a trusted list that associates public keys with their legitimate owners. For quantum signatures, additional mechanisms must be developed to safeguard the integrity of public keys, or new trust assumptions must be introduced. Likewise, unclonability of quantum states limits the broadcasting and publishing of quantum public keys. It is therefore desirable for the public keys to be classical.

We propose a protocol where the public keys are not quantum states but classical shadows. In particular, for secret keys $(C_0, C_1)$, the sender publishes polynomial-length classical shadows $\mathcal{S}(\ket{C_0}), \mathcal{S}(\ket{C_1})$ as public keys. Here, $\mathcal{S}$ denotes the quantum procedure that, across a polynomial number of samples, selects random local measurement bases on the qubits, measures the state accordingly, and for each sample records the chosen bases together with the corresponding measurement outcomes; the collection of these recorded basis choices and outcomes is the classical shadow.
To communicate a bit $b$, the signer releases $(b, C_b)$ as before. The receiver verifies the message by certifying that $\mathcal{S}(C\vert0\rangle)\approx \mathcal{S}(C_b\vert0\rangle)$ using the shadow overlap protocol defined below, from which the receiver concludes that $C\vert0\rangle \approx C_b \vert0\rangle$ \cite{huang2024certifying, gupta2025few}. Assuming the hardness of learning circuits from shadows, the verifier concludes that $C$ indeed came from the sender's private key. The protocol is illustrated in the right half of Fig. \ref{fig:protocols}\textbf{b}. To facilitate the analysis of our scheme, we introduce a general compiler that converts any shadow overlap protocol into a digital signature protocol (Supplementary Information (SI) Sec.~\ref{sec:protocol}).

\section*{Evidence for Security}

For our shadow-based digital signatures protocol to be secure, it is crucial that learning the circuit $C_i$, given public classical shadows, is infeasible. In the protocol where the public keys are instead quantum states, the assumption of hardness is formalized by the computational no-learning (CNL) from states conjecture. This conjecture precludes the existence of efficient algorithms to learn certain ensembles of quantum circuits given copies of the output state \cite{fefferman2025hardness}. In a similar spirit, we introduce the ``computational no-learning from shadows" conjecture, which asserts the hardness of learning quantum circuits given polynomial-length shadows. This new conjecture is a strictly weaker 
assumption (which is desirable) than the original CNL because, given polynomially many copies of an output state $C\ket{0}$, polynomial-length shadows can be obtained by simple measurements.
As a result, the hardness of learning from shadows follows directly from the conjectured hardness of learning from states.

Protocol security requires the existence of a family of random quantum circuits for which CNL holds. We provide evidence for this by showing in SI Sec.~\ref{sec:learning_alg} that no existing algorithms~\cite{huang2024learning,landau2025learning, fefferman2024anti, kim2024learning} satisfy even two out of the three major requirements necessary to falsify CNL for the all-to-all circuits used in our experiment, namely (i) ``proper learning'' (i.e., recovering a circuit with the exact prescribed architecture), (ii) support for arbitrary circuit geometries, and (iii) reliance only on shadows and not on stronger forms of access to the circuit that are not reasonably available to an adversary. 

To provide further evidence for the security of our protocol, we design a learning algorithm that is proper yet effective for all-to-all shallow circuits (the type of circuits in our experiment), extending the protocol from Ref.~\cite{fefferman2024anti} (SI Sec.~\ref{subsec:new_learning_alg}). However, this algorithm cannot be directly extended to satisfy all three requirements since it relies on a stronger access model. Furthermore, we show that even under this stronger access model, we can design a family of circuits secure against our new attack. The central idea is that known approaches for learning circuits (including ours) proceed by inverting the gates one-by-one. Such local-inversion strategies are thwarted if the causal ``lightcone" saturates the entire computational register, which we show can be achieved with shallow all-to-all circuits used in our experiment. Combined, this evidence indicates that major algorithmic breakthroughs are required to falsify CNL and compromise the security of our protocol.

The arguments above provide theoretical and empirical evidence against the efficient learning of private circuits from public shadows. Since, for shadow-based signatures, the verification step -- wherein a verifier validates a signature -- is also expected to take exponential time, it would be desirable to choose circuit families such that verification can be done in a feasible wall-time while spoofing remains prohibitively expensive. We expect such a ``gap" between verification time and learning or spoofing time to arise since circuit learning involves a search from the space of all possible circuits in the ensemble, which is as large as $\exp(O(nd))$ where $d$ is the depth of the ensemble. In this work, we call a protocol secure if the time of verification $T_V$ and the time of learning the circuits $T_L$ satisfy $T_L/T_V \geq \Omega(\exp(n))$, where $n$, the circuit size, serves as the security parameter. In other words, there is an \textit{exponential} gap between the cost to cheat and the cost to verify. In SI Sec.~\ref{sec:further_gap}, we formalize this conjecture and provide evidence towards it. 

The first part of our evidence derives from the study of so-called ``low-degree" learning algorithms~\cite{chen2025information}, whose outputs can be expressed as low-degree polynomials over the input. In our case, the inputs to the adversary are the classical shadows that form the public key. Since the shadows are obtained by local measurements of quantum states, this setting is an instance of the low-degree algorithms for quantum state learning discussed in Ref.~\cite{chen2025information}, where it was shown that this family encompasses all known state-of-the-art learning algorithms. 
In particular, it has been posited~\cite[Hypothesis 2.1.5]{hopkins2018statistical} that for most statistical learning problems, the class of algorithms of degree $\le d$ is no less powerful than the class of algorithms that take time $O(\exp(d))$. Under this heuristic, if a learning algorithm requires a degree $n^{1+\delta}$ for $\delta > 0$, it would be expected to take time $T_L = O(\exp(n^{1+\delta}))$, which would be enough for an exponential gap since verification takes time $T_V = \Theta(2^n)$. We show that we can indeed construct circuit ensembles that form ``approximate designs'' with respect to the Haar measure, such that the degree of a learning algorithm that takes as input non-adaptive local measurements on quantum states (such as the classical shadows that form the public key) has to be at least $n^{1+\delta}$ for $\delta > 0$. Such a choice of circuit ensemble would give us the desired exponential gap between spoofing and learning.

The runtime-degree correspondence discussed above, while applicable to a wide variety of algorithms, does not, however, capture \textit{all} potential attacks. In particular, it may fail for settings where the best known algorithm is a greedy procedure such as gradient descent. Hence, the argument above cannot be applied to adaptive learning strategies that optimize a parameterized quantum circuit~\cite{anschuetz2026critical, anschuetz2025unified, anschuetz2022quantum}.
As the second part of our evidence, we show that our circuits exhibit exponentially-many poor local minima, making any variational training difficult (See SI Sec.~\ref{sec:variational}). In particular, such variational algorithms would have to compute the overlap with the target circuit exponentially many times, where each overlap computation incurs an exponential cost. Therefore, spoofing a circuit via variational training is also exponentially harder than verification. This exponential gap between learning and verification establishes the security of our protocol.

\section*{Shadow overlap}

The digital signature scheme sketched out on the right side of Fig.~\ref{fig:protocols}b requires certifying that a revealed circuit $C$ prepares a state close to the honest state $\ket{\psi_{\mathrm{hon}}} = C_b\ket{0}$, that is, $C\ket{0} \approx \ket{\psi_{\mathrm{hon}}}$. Given classical shadows of a noisy laboratory state $\sigma_{\mathrm{lab}}$, we estimate the overlap with any target pure state $\ket{\psi}$ specified by a circuit $C_{\psi}$.

We build on classical shadow overlap certification while introducing modifications tailored to our setting. To fix notation, we briefly recall the Pauli–based baseline of Ref.~\cite{huang2024certifying} (illustrated in Fig.~\ref{fig:protocols}\textbf{c,d}). That scheme performs $T$ measurements. On each shot, it selects a subset $\bm{k}$ of $m$ qubits, measures those $m$ qubits in random single–qubit Pauli bases $\bm{P} \in \{X,Y,Z\}^m$, measures the remaining $n-m$ qubits in the $Z$ basis, and records the bitstring $\bm{z} \in \{0,1\}^n$. This produces the shadow dataset
\begin{equation}
\bigl\{\, \bm{k}^{(i)},\; \bm{P}^{(i)} \in \{X,Y,Z\}^m,\; \bm{z}^{(i)} \,\bigr\}_{i \in [T]} .
\label{eq:pauli-shadows}
\end{equation}
Using these shadows together with a polynomial number of target amplitudes $\psi(x) \equiv \langle x \vert C_{\psi} \vert 0 \rangle$, one computes a score $\omega_i$ for each $i$. The expectation $\mathbb{E}[\omega]$ is the shadow overlap, which controls the fidelity between $\ket{\psi}$ and $\sigma_{\mathrm{lab}}$~\cite{huang2024certifying}.

Before introducing our modifications, we summarize the verification complexity and the role of the ``relaxation time''. In this framework, two standard implications connect shadow overlap and fidelity for a fixed hypothesis state $\ket{\psi}$: if $\mathbb{E}[\omega] \ge 1 - \varepsilon$ then $\bra{\psi}\sigma_{\mathrm{lab}}\ket{\psi} \ge 1 - \tau \varepsilon$, and if $\bra{\psi}\sigma_{\mathrm{lab}}\ket{\psi} \ge 1 - \varepsilon$ then $\mathbb{E}[\omega] \ge 1 - \varepsilon$. Here $\tau \ge 1$ is the relaxation time of a Markov chain induced by the chosen measurement rule and operator used in post–processing, and depends on the protocol and the circuit ensemble. Computing the overlap statistic requires a polynomial number of amplitudes of $C_{\psi}\ket{0}$, so exact verification scales as $O(2^n)$ up to polynomial factors in $T$ and $n$. Large $\tau$ and large estimator variance increase the number of measurements, $T$, needed to avoid false accepts and false rejects; see Fig.~\ref{fig:discrimination}.

Motivated by these considerations, our protocol adopts the same overall structure but modifies both the measurement rule and the estimator to reduce $\tau$ and the variance. For shadow generation, on each shot we apply a random $m$-qubit Clifford circuit to a uniformly random subset of $m$ qubits, then measure all qubits in the computational basis (Fig.~\ref{fig:protocols}\textbf{e}). This yields shadows
\begin{equation}
\bigl\{\, \bm{k}^{(i)},\; U^{(i)} \in \mathrm{Clifford}[m],\; \bm{z}^{(i)} \,\bigr\}_{i \in [T]} .
\label{eq:clifford-shadows}
\end{equation}
In post–processing, we use a refined overlap estimator that achieves the minimal possible relaxation time $\tau=1$ in the limit $m=n$. Together, these choices define a tunable family of shadow–based state–certification protocols that interpolate between two well–studied extremes: global random Clifford measurements, which are most sample efficient but require global control (Fig.~\ref{fig:protocols}\textbf{f}; Ref.~\cite{huang2020predicting}), and single–qubit random Pauli measurements, which are less sample efficient but only require local control (Fig.~\ref{fig:protocols}\textbf{c,d}; Ref.~\cite{huang2024certifying}). Details appear in SI Sec.~\ref{sec:shadow}.

We now state the honest and adversarial overlap bounds used in verification, focusing on the depolarizing model and deferring general noise to SI Sec.~\ref{subsec:security_of_single_bit_signatures}. Let $\phi_{\mathrm{hon}} \equiv \langle\psi_{\mathrm{hon}}|\sigma_{\mathrm{lab}}|\psi_{\mathrm{hon}}\rangle = 1 - \varepsilon_{\mathrm{hon}}$ denote the honest signer’s fidelity. By the overlap–fidelity implication, this gives the honest lower bound
$$
\mathbb{E}[\omega_{\mathrm{hon}}] \ge 1 - \varepsilon_{\mathrm{hon}}.
$$
For an adversary, the computational no–learning assumption limits the state overlap with the honest target to $|\langle\psi_{\mathrm{adv}}|\psi_{\mathrm{hon}}\rangle|^2 \le 1 - \varepsilon_{\mathrm{CNL}}$. Under depolarizing noise, Lemma~\ref{lem:depolarizing_noise} (SI Sec.~\ref{subsec:security_of_single_bit_signatures}) implies an upper bound on the adversarial fidelity to the laboratory state,
$$
\phi_{\mathrm{adv}} \equiv \langle\psi_{\mathrm{adv}}|\sigma_{\mathrm{lab}}|\psi_{\mathrm{adv}}\rangle \le 1 - \bigl(\varepsilon_{\mathrm{CNL}} - \varepsilon_{\mathrm{hon}}\bigr) + \varepsilon_{\mathrm{hon}}\,2^{-n}.
$$
Applying the contrapositive of the overlap–fidelity implication then yields an upper bound on the adversarial overlap,
$$
\mathbb{E}[\omega_{\mathrm{adv}}] \le 1 - \frac{\varepsilon_{\mathrm{CNL}} - \varepsilon_{\mathrm{hon}}}{\tau} + \frac{\varepsilon_{\mathrm{hon}}}{2^{n}\tau} \approx 1 - \frac{\varepsilon_{\mathrm{CNL}} - \varepsilon_{\mathrm{hon}}}{\tau},
$$
where $\tau$ is the relaxation time associated with the chosen hypothesis and protocol. The $2^{-n}$ term is negligible at the system sizes considered here. Together, these bounds imply a gap between the honest and adversarial overlap distributions whenever $\varepsilon_{\mathrm{hon}}$ is small and $\varepsilon_{\mathrm{CNL}}$ is bounded away from zero. Reducing $\tau$ and the estimator variance, as achieved by our modified measurement rule and estimator, enlarges this gap at practical $T$, lowering the sample complexity and increasing robustness to experimental noise in signature verification.

\begin{figure}[!t]
    \centering
    \includegraphics[width=1\linewidth]{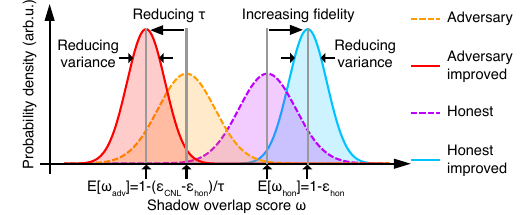}
    \caption{\textbf{Discriminating between the honest signer and the adversary}. To improve the discriminating power of the shadow overlap, we devise an improved protocol that reduces the relaxation time $\tau$ to shift the adversary distribution to the left and reduces the variance of the score. Additionally, we use quantum error detection to increase the fidelity to shift the honest distribution to the right.}
    \label{fig:discrimination}
\end{figure}

\section*{Experimental Demonstration}

To demonstrate near-term feasibility, we implement shadow-based state certification on a trapped-ion processor using the random Clifford shadow protocol at level $m=4$ (Protocol~\ref{prot:using_lower_bound_procedure} in SI Sec.~\ref{sec:shadow}). Certification must operate in a regime where the honest and adversarial shadow overlaps are separated, i.e. $\mathbb{E}[\omega_{\mathrm{adv}}] < \mathbb{E}[\omega_{\mathrm{hon}}]$, which implies an upper bound on tolerable experimental infidelity in public-key generation $\varepsilon_{\mathrm{hon}} < \varepsilon_{\mathrm{CNL}}/(\tau+1)$. We construct a circuit family with low depth that yields causal light cones obstructing learning algorithms based on gate-by-gate local inversion described in SI Sec.~\ref{sec:learning_alg}. For our circuits at $n=32$ and $m=4$, the relaxation time is $\tau \approx 8$ (SI Sec.~\ref{sec:supp_improved_shadow}). Assuming that learning a circuit with overlap at least $1-\varepsilon_{\mathrm{CNL}}=0.01$ is hard, one needs an experimental fidelity $\phi_{\mathrm{hon}}=1-\varepsilon_{\mathrm{hon}}\geq0.89$. Table~\ref{tab:qed_improvement} shows that this fidelity is out of reach without encoding. 

To overcome this hardware limitation, we introduce two constructions tailored to our random-circuit ensemble and hardware constraints. The first is a multi-block Iceberg encoding that partitions the $k$ logical qubits into several independent Iceberg blocks, each with its own stabilizer checks. Blocks are coupled only through transversal CZ gates, which limits the spread of errors between blocks (Methods and Fig.~\ref{fig:experimental_circuit}). The introduction of multiple blocks increases the number and reduces the weight of syndrome measurements, thereby reducing idling overhead during high-weight checks. The second is a gauge-fixing scheme for the Iceberg code that produces additional ancilla qubits, allowing parallel execution of single-qubit logical layers and introducing additional low-weight checks that are less sensitive to errors (SI Sec.~\ref{sec:qed}). As shown in \Cref{tab:qed_improvement}, the use of multiple code blocks and gauge fixing enables beyond-break-even performance with only a modest sampling overhead.

We experimentally certify a four-block-encoded 32-qubit hard-to-learn random circuit with logical two-qubit gate depth 5 (Fig.~\ref{fig:experimental_circuit}). Using Protocol~\ref{prot:using_lower_bound_procedure} at $m=4$, we apply random four-qubit Clifford circuits after decoding and before measurement. The total physical two-qubit gate count per shot is 582. A total of 44{,}560 shots are executed. After discarding trials with detected errors, 17{,}316 classical shadows are retained, yielding an shadow overlap of $0.91$ and $\phi_{\mathrm{hon}}=0.90\pm0.01$ (see SI Sec.~\ref{sec:qed_results}). This fidelity is lower than the fidelities estimated in Table \ref{tab:qed_improvement} due to the use of random Clifford circuits which require additional gates compared to the random Paulis in Table~\ref{tab:qed_improvement}.

Under $\varepsilon_{\mathrm{hon}}<0.11$ and $1-\varepsilon_{\mathrm{CNL}}\leq0.01$, the expected adversary shadow overlap is $\mathbb{E}[\omega_{\mathrm{adv}}]\leq0.89$. The protocol parameters are such that, in the malicious case, no adversarial circuit can pass our protocol with probability higher than $\delta=0.15$ (SI Theorem~\ref{thm:lower_bound_protocol_zhang}). On the other hand, our experimentally observed scores $\{\omega_{\mathrm{exp},i}\}_{i\in[T]}$ give shadow overlap $\mathbb{E}[\omega_{\mathrm{exp}}]=0.91$. The data passes the protocol and is consistent with the honest case.

\begin{table}
    \centering
    \begin{tabular}{|c|c|c|c|c|c|c|}
        \hline
        Type & \makecell{Logical\\qubits} & Blocks & Shots & $p$ & \makecell{2-Q\\depth} & \makecell{Fidelity\\$\phi_{\mathrm{hon}}$} \\
        \hline
        Physical & 32 & N/A & 500 & 1 & 5 & $0.82(7)$ \\
        \hline
        Iceberg & 32 & 2 & 2,000
        & 0.37(1)  %
        & 128 & $0.79(6)$ \\
        \hline
        Iceberg & 32 & 4 & 2,000
        & 0.39(1)  %
        & 75 & $0.98(5)$ \\
        \hline
        Iceberg+ & 32 & 2 & 2,000
        & 0.38(1)  %
        & 88 & $0.94(5)$ \\
        \hline
        Physical & 40 & N/A & 700 & 1 & 6 & $0.82(6)$ \\ 
        \hline
        Iceberg & 40 & 2 & 3,000
        & 0.204(7)  %
        & 177 & $0.63(7)$ \\
        \hline
        Iceberg & 40 & 4 & 3,000
        & 0.226(8)  %
        & 98 & $0.94(6)$ \\
        \hline
        Iceberg+ & 40 & 2 & 3,000
        & 0.162(7)  %
        & 108 & $0.75(7)$ \\
        \hline
    \end{tabular}
   \caption{\textbf{Estimated fidelity for random circuits.} 
   The number of logical qubits is 32 and 40, and the logical two-qubit gate depth is 5 and 6 respectively. All data is taken from Quantinuum H2 trapped-ion processor with 56 physical qubits~\cite{Moses2023}.
    ``Physical" refers to running the circuit without encoding into the Iceberg code.
    ``Iceberg" means that the circuit is encoded into a specified number of Iceberg code blocks.
    ``Iceberg+" indicates that the circuit is encoded into augmented Iceberg code blocks, in which $l=2$ logical qubits (per code block) are reserved for parallelizing logical operations.
    $p$ is the probability of acceptance after post-selection based on error detection.
    2-Q depth is the physical two-qubit gate depth required to implement the circuit.
    Fidelity $\phi_{\mathrm{hon}}$ is estimated from shadow overlap as described in SI Sec.~\ref{sec:qed_results}.
    The reduction in physical two-qubit depth for Iceberg+ over Iceberg (with two code blocks) improves the estimated fidelity, as memory and idling errors are significant in current hardware. Uncertainties show standard error.}
    \label{tab:qed_improvement}
\end{table}

\begin{figure*}[!ht]
    \centering
    \includegraphics[width=1\textwidth]{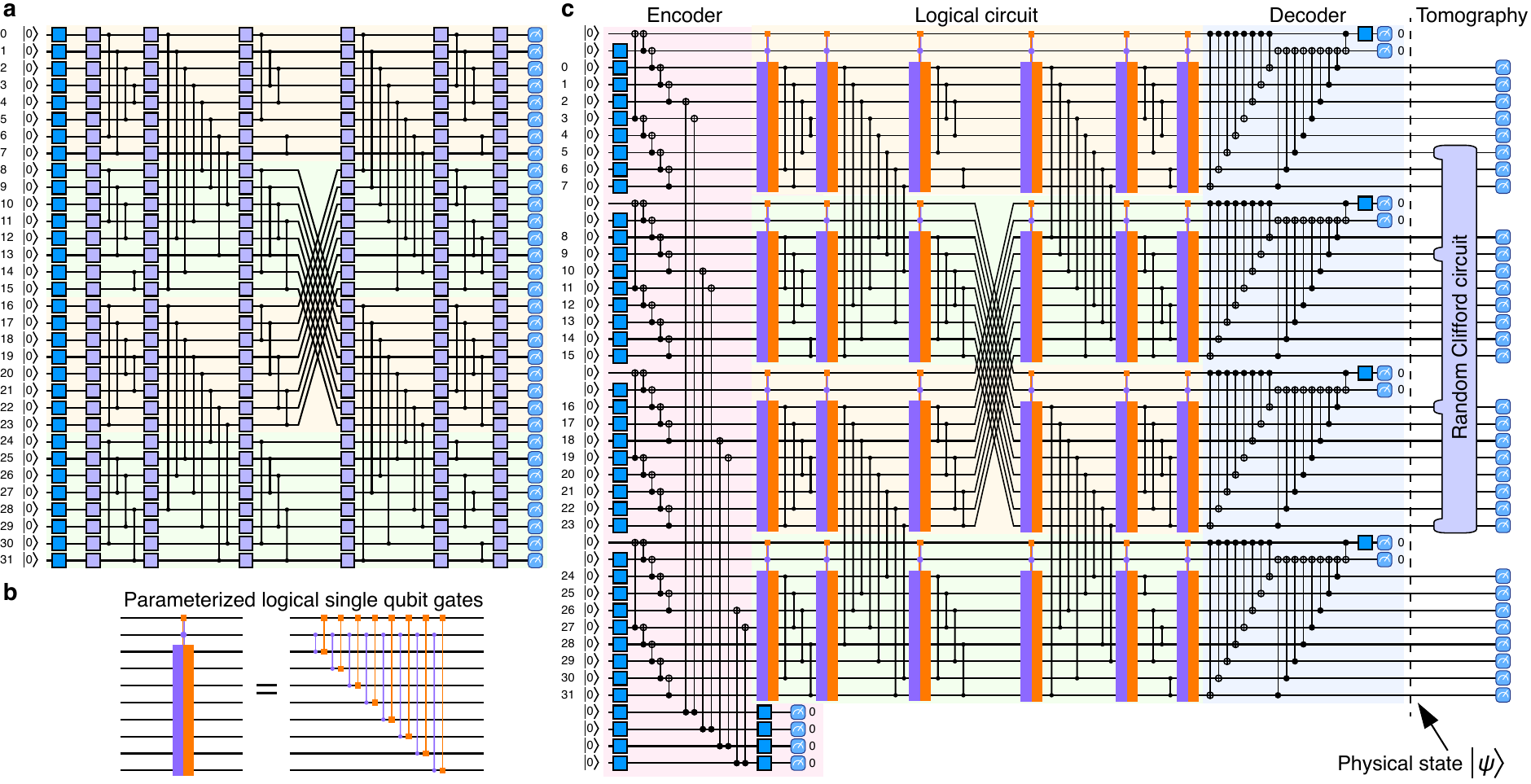}
    \caption{\textbf{Quantum circuits used in the experiment.} \textbf{a}, The circuit without encoding. The circuit has four blocks.
    Blue gates are Hadamard gates, and purple gates are parameterized single-qubit gates $Z^p X^{1/2}$ for $p\in\{-1,-\frac{3}{4},\dots,\frac{3}{4}\}$. We interleave layers of random pairs of inner-block CZ gates and random pairs of transversal intra-block CZ gates.
    \textbf{b}, One layer of logical single-qubit gates. Each logical single-qubit $R_{Z}$ gate is implemented using a physical parameterized $R_{ZZ}$ gate (purple gates with circular ends), and each logical $R_{X}$ gate is implemented using a physical $R_{XX}$ gate (orange gates with square ends). \textbf{c}, The circuit with Iceberg code encoding. The encoder prepares the logical $\vert{+}\rangle$ state with post selection.
    All logical two-qubit gates are implemented by physical two-qubit gates. 
    The decoder transforms the encoded state into the physical state with post selection. 
    Then, tomography on the decoded physical state is performed by collecting random Clifford classical shadows.
    The random Clifford circuit is applied to a random set of physical four qubits after decoding. 
    }
    \label{fig:experimental_circuit}
\end{figure*}

Simply running the certification experiment for longer would allow our scheme to achieve nontrivial security for the digital signature. Corollary \ref{cor:sample_complexity_soundness} (SI Sec.~\ref{sec:supp_cert_experiment}) shows that the sample complexity is proportional to $\ln(1/\delta)$. Therefore, decreasing $\delta$ from $\textcolor{black}{0.15}$ to $10^{-15}$ requires $\ln(1/10^{-15})/\ln(1/\textcolor{black}{0.15})=19.5$ times more samples, resulting in a total of $340$k samples. Consider a computationally bounded adversary that tries 10 billion adversary circuits and checks if each one passes the protocol. The probability that it spoofs the signature protocol is only $10^{-5}$. To do this, it computes $10^{10}\times340{,}000=3.4\times10^{15}$ values of $\omega$. The computation of each $\omega$ takes about $0.3$ seconds on an AMD EPYC 7R13 processor using our implementation. Using the same implementation, an adversary that is one million times more powerful still requires 32 years.
While future improvements to learning algorithms may make be possible to learn our 32-qubit circuit, a modest increase in circuit size would restore the security of the scheme.

\section*{Outlook}

The protocol described above only signs a single bit. In the future, it would be interesting to demonstrate multi-bit quantum digital signatures. In SI Sec.~\ref{sec:protocol}, we present a $k$-bit message protocol with a favorable verification cost. Instead of verifying every bit and paying a total verification cost of $k\cdot O(2^{n})$, we can pay a cost independent of $k$ by verifying a constant number of bits chosen randomly. If all verified bits pass, then it is guaranteed that most bits are not altered by the adversary. If the message is encoded in an error-correcting code, the message can be recovered even if some bits are altered. Using an error-correcting code for multi-bit message digital signatures gives a signer complexity of $k\cdot\mathrm{poly}(n)$, a receiver complexity of $O(1)\cdot O(2^{n})$, and a spoofing complexity of $O(k)\cdot \Omega(2^{n})$. Even for long messages with large $k$, signing remains feasible since each signature can be produced efficiently. In turn, verification is realistic via spot checks on a subset of signatures, while spoofing remains computationally prohibitive.

The equivalence between our protocol, which realizes one-time signatures, and one-way puzzles allows one to construct many other quantum cryptographic primitives without assuming the existence of OWFs (see SI Sec.~\ref{sec:microcrypt}). Our work is a step towards realizing this broad class of primitives, opening up new avenues of quantum advantage realizable with near-term quantum computers. %

\putbib[citations]

\clearpage

\section*{Methods}
\label{sec:methods}

\begin{figure*}[!ht]
    \centering
    \includegraphics[width=\linewidth]{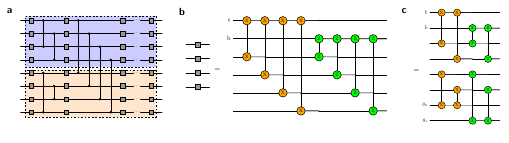}
    \caption{\textbf{Multi-block random circuit and encoding in Iceberg code.}
    \textbf{a}, Logical random circuit is composed of blocks of qubits, with each block encoded in a separate Iceberg code block. 
    \textbf{b}, Single-qubit logical gates in Iceberg error-detection code: logical $R_{X}$ and $R_{Z}$ rotations are implemented with a physical two-qubit interaction involving the “top” qubit and the “bottom” qubit respectively. 
    \textbf{c}, Promoting select logical operators to stabilizers (``gauge fixing'') enables multiple logical single-qubit rotations to be performed in parallel. 
    The qubits gauge-fixed to eigenstates of logical $x$ and logical $z$ are denoted $g_x$ and $g_z$ respectively.
}
    \label{fig:random_circuit_in_iceberg}
\end{figure*}

\subsection{Multi-Block Iceberg Encoding of Random Circuits}

The success of quantum state certification and the overall digital signature protocol relies crucially on achieving high fidelity. To improve fidelity, we use a multi-block random circuit whose structure is chosen to improve the performance of the Iceberg error-detecting code.  

The Iceberg code~\cite{Steane1996, Gottesman1998, self2024protecting} is a $\llbracket k+2,k,2\rrbracket$ error-detecting code that encodes $k$ logical qubits into $k+2$ physical qubits. It is a stabilizer code with two weight-$(k+2)$ stabilizers ($Z$ and $X$) and is capable of detecting any single-qubit error. Trapped-ion devices like the Quantinuum H2 race-track processor~\cite{Moses2023} used in this work are particularly suitable for this code due to the ability to implement high-weight stabilizers using atom movement.

The structure of random circuit used in this work is summarized in Fig.~\ref{fig:random_circuit_in_iceberg}\textbf{a}. The circuit consists of repeating sequence of four layers: (i) single-qubit Clifford gates sampled independently and uniformly for each qubit, (ii) two-qubit controlled-Z (CZ) gates on random pairs of qubits within block, (iii) another layer of single-qubit gates, and (iv) two-qubit inter-block CZ gates. We remark that similar structure has been used in Ref.~\cite{bluvstein2024logical} for IQP circuits encoded in the 3D color code. When encoding the random circuit in Iceberg code, each ``block'' of qubits is encoded within a separate Iceberg code block, with inter-block CZ gates applied transversally. We measure all stabilizers at the end of the circuit and post-select on the outcome.

\subsection{Parallelization in Iceberg Code by Gauge Fixing}

The random circuit we use has more single-qubit than two-qubit gates. This creates a challenge for the Iceberg code since each logical single-qubit gate is implemented with a physical two-qubit gate involving either ``top'' (for $R_{X}$ rotation) or ``bottom'' ($R_{Z}$) ancilla qubit (see Fig.~\ref{fig:random_circuit_in_iceberg}\textbf{b}). Additionally, the high weight of the Iceberg code stabilizers leads to memory error accumulation during atom movement necessary to read out the stabilizer~\cite{he2025performance}.

We overcome both of these challenges using a stabilizer extension procedure that we call \emph{gauge fixing}, in which a selected subset of logical operators (such as $\overline{Z}_j$ or $\overline{X}_j$) is added to the stabilizer group. This projects the code space onto the joint $+1$ eigenspace of the chosen operators, thereby restricting the logical subspace and increasing the number of physical qubits to encode a fixed number of logical qubits.
The resulting code has parameters $\llbracket k + \ell + 2, k, 2 \rrbracket$, where $k$ is the number of computational logical qubits and $\ell$ is the number of logical operators promoted to stabilizers.
See the SI Sec.~\ref{sec:qed} for a full mathematical treatment and further discussion.

Promoting additional logical operators to stabilizers introduces ``proxy'' top or bottom qubits, allowing multiple logical rotations to be performed in parallel as shown in Fig.~\ref{fig:random_circuit_in_iceberg}\textbf{c}. Furthermore, the stabilizer checks corresponding to the introduced small stabilizers can be executed with much smaller atom movement overhead as compared to the stabilizers of the standard Iceberg code.
For $l=2$, we have four stabilizers per block and each of them are measured during each readout. 
Only runs in which all the stabilizer outcomes are $+1$ are retained for post processing.

\section*{Acknowledgments}
We thank Kaushik Chakraborty for helpful discussions of digital signature schemes.
We thank Bill Fefferman for discussions on the security of a shadow-based digital signature protocol.
We thank Anthony Armenakas, Soorya Rethinasamy, and Cameron Foreman for helpful feedback on the manuscript.
We thank Brian Neyenhuis and Zachary Massa for their assistance with the experimental implementation of the Clifford Shadow protocol on the Quantinuum devices.
We thank Rob Otter for the executive support of the work and invaluable feedback on this project. We thank Vas Rajan for helpful feedback on the project. We are grateful to Jamie Dimon, Lori Beer, and Scot Baldry for their executive support of JPMorganChase’s
Global Technology Applied Research Center and our work in quantum computing.
We thank the technical staff at JPMorganChase’s Global Technology Applied Research for their support. H.Y. is supported by AFOSR award FA9550-23-1-0363, NSF awards CCF2530159, CCF-2144219, and CCF-2329939, and the Sloan Foundation.

\section*{Author Contributions}
P.N., S.C., R.S. conceptualized the project. P.N., S.C. devised the single-bit digital signature protocol from shadow overlap protocols and conducted exploratory analysis. M.L. devised the improved shadow overlap protocol and the multi-bit signature protocol, proved the security of single-bit and multi-bit protocols, and provided low-degree polynomial hardness evidence. J.S., J.W. investigated classical error detection codes for multi-bit signatures.
S.O., Y.J., J.S., M.A.P., M.S., Z.H., R.K. devised the quantum error detection protocol and compatible circuit ansatz. 
D.A. devised the parallelization technique shown in Figure~\ref{fig:random_circuit_in_iceberg}~\textbf{c}.
S.O., Y.J., P.N. conducted the experiment. P.N. and A.S. analyzed the data. P.N., S.C., S.K. analyzed the variational learning adversary. J.W. and S.K. developed and analyzed the gate-by-gate proper learning adversary. F.K. established the connection to one-way puzzles and other quantum cryptographic primitives. All authors contributed to the writing of the manuscript and shaping of the project.

\section*{Competing Interests}

The authors declare no competing interests.

\section*{Additional Information}
Supplementary Information is available for this paper.

\section*{Disclaimer}
This paper was prepared for informational purposes with contributions from the Global Technology Applied Research center of JPMorgan Chase \& Co. This paper is not a product of the Research Department of JPMorgan Chase \& Co. or its affiliates. Neither JPMorgan Chase \& Co. nor any of its affiliates makes any explicit or implied representation or warranty and none of them accept any liability in connection with this paper, including, without limitation, with respect to the completeness, accuracy, or reliability of the information contained herein and the potential legal, compliance, tax, or accounting effects thereof. This document is not intended as investment research or investment advice, or as a recommendation, offer, or solicitation for the purchase or sale of any security, financial instrument, financial product or service, or to be used in any way for evaluating the merits of participating in any transaction.

\end{bibunit}
\clearpage
\newpage
\starttocentries

\begin{bibunit}[apsrev4-2]

\onecolumngrid

\begin{center}
    \textbf{\large Supplementary Material for: Digital signatures with classical shadows on near-term quantum computers}
\end{center}

\tableofcontents
\newpage

\section{Digital signature from shadow overlap protocols}\label{sec:protocol}
A theoretical contribution of our work is a general compiler that converts any shadow overlap protocol into a digital signature protocol. A key ingredient of our digital signature protocol is the ability to verify if a given classical shadow corresponds to a quantum circuit.  Shadow overlap protocols are used to certify that classical shadows of a ``laboratory" quantum state (i.e., published shadows) is indeed close to a given hypothesis state (i.e., state generated by the quantum circuit), where closeness is defined with respect to state fidelity. To start, we define general shadow overlap protocols.

\begin{definition}[Shadow Overlap Protocol]\label{def:shadow_overlap_protocol}
    Let $\ket{\psi}$ be a fixed hypothesis state. A non-adaptive, shadow overlap quantum state fidelity certification protocol (``shadow overlap protocol") consists of a data collection phase and a certification phase. In the data collection phase, $T$ copies of a laboratory state $\rho$ are manipulated and measured to produce $T$ independent shadows $\{o_i\}_{i\in[T]}$. In the certification phase, using the shadows $\{o_i\}_{i\in[T]}$ and any known information about the hypothesis state, with probability at least $1-\delta$, the protocol will correctly output \textsc{Failed} if the fidelity between the hypothesis state and the laboratory state is low, i.e. $\langle\psi\vert\rho\vert\psi\rangle<1-\varepsilon$, and will correctly output \textsc{Certified} if the fidelity is high, i.e. $\langle\psi\vert\rho\vert\psi\rangle\geq1-{\varepsilon}/{2\tau}$, where $\tau$ is a function that depends on $\vert\psi\rangle$. We call this a $(T, \delta, \varepsilon,\tau(\cdot))$ protocol, emphasizing that $\tau(\cdot)$ is a function.
\end{definition}
\noindent The parameter $\tau$ is the relaxation time for a certain Markov chain generated by the distribution of local measurements of the state $\ket{\psi}$. Its functional relationship with the hypothesis state $\ket{\psi}$ is a key factor in the performance of the shadow-based certification, and hence we highlight this dependence in our parametrization. Protocol \ref{prot:random_pauli_improved} and Protocol \ref{prot:random_clifford_improved} in this work, as well as Protocol 2 in \cite{huang2024certifying}, are examples of shadow overlap protocols. In this work, each shadow $o_i$ is a tuple of a) a uniformly random subset of $l \subset [n]$, b) a random Clifford operation acting on the qubits in $l$, and c) the computational basis measurement outcome of all $n$ qubits. 

Now, we formally define the digital signature protocol.
\begin{protocol_framed}[Digital Signature of a single bit via Shadow Overlap]\label{prot:single_bit_digital_signature}\quad 
\normalfont
\newline
\begin{itemize}
    \item $\mathsf{SKGen}$: Sample descriptions of quantum circuits $C^{b}$ uniformly at random from some ensemble $\mathcal{C}$, for both $b = 0,1$. Set $\mathsf{sk}=\left( C^b\right)_{b\in \{0, 1\}}$
    \item $\mathsf{PKGen}$: Perform the data collection phase of some $(T, \delta, \varepsilon,\tau(\cdot))$ non-adaptive shadow overlap quantum state fidelity certification protocol on the state $\rho$ obtained by applying $C^b$ to the all zero state for $b\in\{0,1\}$. The public key is $\mathsf{pk} = \{o_i^b\}_{i \in [T], b \in \{0, 1\}}$.
    \item  $\mathsf{Sign}$: To sign bit $b$, output the bit and the classical description of the circuit $m_{\rm s}=(b, C^b)$ as the signed message.
    \item $\mathsf{Ver}$: Use the certification phase of the non-adaptive shadow overlap-type quantum state fidelity certification protocol. Specifically, use $\{o_i\}_{i \in [T]}$ that corresponds to the signed bit $b$ in $\mathsf{pk}$ and certify the state $C^b\vert 0\rangle$. Output the same output as the fidelity certification protocol.
\end{itemize}
\end{protocol_framed}
\noindent The security of the protocol relies on the assumption that it is hard for an adversary to learn a circuit $C'$ in the circuit ensemble $\cal{C}$ given classical shadows $\{o^b_i\}_{i\in[T]}$ generated using a secret key $C^b$. The shadow overlap protocol certifies that the unknown quantum state producing the shadows $\{o^b_i\}_{i\in[T]}$ is close to the state generated by the circuit revealed by the signer in the $\mathsf{Sign}$ phase of the protocol. Therefore, for an adversary to pass the protocol, it must learn the circuit $C^b$ (or more precisely, must learn a circuit $\tilde{C}$ such that $\tilde{C}\ket{0} \equiv C^b \ket{0}$. If the circuit ensemble is such that learning $C^b$ (or equivalent circuits) is difficult from the shadows, the protocol is hard to spoof if the adversary does not know $C^b$ a priori.

The above digital signature protocol certifies a single bit. Longer messages can be certified, in principle, by iterating the protocol over the entire message. This, however, is highly inefficient. Single-bit certification with shadows is already exponential time (as it involves calculation of quantum state amplitudes), and thus one would like to minimize the number of required calls to this protocol.

Thankfully, we can substantially reduce the number of certifications using classical error detection. %
Take some block code with parameters $[M, k, d]$, where a logical message of $k$ bits is encoded into an $M$-bit codespace, and the minimum Hamming distance between valid code words is at least $d$. Certification will now only take place on some subset of the bits. Below we give the updated protocol in Protocol \ref{prot:multi_bit_digital_signature} for a $k$-bit logical message.

\parbox{0.96\textwidth}{\begin{protocol_framed}[Signature for a $k$-bit message]\label{prot:multi_bit_digital_signature}\quad 
\normalfont
\newline
\begin{flushleft}\underline{Input:}\end{flushleft}
\begin{itemize}
    \item An $[M, k, d]$ error detection code.
    \item $\mathsf{Enc}:\{0,1\}^k\rightarrow\{0,1\}^M$: a function that encodes a $k$-bit message into $M$ bits.
    \item $\mathsf{Chk}:\{0,1\}^M\rightarrow\{0,1\}$: a boolean function that outputs $1$ if the input is not in the code space (i.e. an indicator function). 
    \item $\mathsf{Dec}:\{0,1\}^M\rightarrow\{0,1\}^k$: a function that decodes the $M$-bit encoded message into the original $k$-bit message.
    \item $V$: an integer parameter that determines how many circuits and classical shadows to check.
\end{itemize}
\begin{flushleft}\underline{Procedure:}\end{flushleft}
 \vspace{-1.5em}
\begin{itemize}
    \item $\mathsf{SKGen}$: Sample descriptions of quantum circuits $C^{b,j}$ uniformly at random from some ensemble $\mathcal{C}$, for $b = 0,1$ and $j\in[M]$. Set $\mathsf{sk}=\left( C^{b,j}\right)_{{b\in \{0, 1\}},j\in[M]}$
    
    \item $\mathsf{PKGen}$: Perform the data collection phase of some $(T, \delta, \varepsilon,\tau(\cdot))$ non-adaptive shadow overlap-type quantum state fidelity certification protocol on the state $\rho$ obtained by applying $C^{b,j}$ to the all zero state subject to device infidelity for $b\in\{0,1\}$ and $j\in[M]$. The public key is $\mathsf{pk} = \{o_i^{b,j}\}_{i \in [T], b \in \{0, 1\}, j\in[M]}$.
    
    \item  $\mathsf{Sign}$: To sign the message $m=(m^1,\dots,m^k)\in\{0, 1\}^k$, first encode it into encoded message $m_{\rm e}=(m_{\rm e}^1,\dots, m_{\rm e}^M)=\mathsf{Enc}(m)$. Then, output $m_{\rm s}=\left(m_{\rm e},\{C^{m_{\rm e}^j,j}\}_{j\in[M]}\right)$ as the signed message.
    
    \item $\mathsf{Ver}$:
    \begin{itemize}
        \item To verify $m_{\rm s}$, first check $\mathsf{Chk}(m_{\rm e})=0$, otherwise abort.
        \item Randomly select a subset $\mathcal{V}\subset [M]$ satisfying $\vert\mathcal{V}\vert=V$.
        \item For all $j\in\mathcal{V}$, use the certification phase of the shadow overlap protocol. Specifically, use $\{o_i^{m_{\rm e}^j, j}\}_{i \in [T]}$ as the classical shadows and the target state $C^{m_{\rm e}^j,j}\vert 0\rangle$.
        \item Abort if shadow overlap protocol outputs \textsc{Failed} for any $j\in\mathcal{V}$.
        \item Decode the message by outputting $\mathsf{Dec}(m_{\rm e})$.
    \end{itemize}
\end{itemize}
\end{protocol_framed}}

\noindent To see why error detection helps, assume learning from classical shadows is intractable, and thus an adversary $E$ is prevented from forging a signature by providing the circuit secret key. Then $E$ must resort to changing a subset of bits and hoping they are not certified. To change the underlying message, the error correcting code forces $E$ to modify at least $d$ bits (the receiver will throw away any non-codeword). If $d = d(M)$ scales sufficiently with code size, then these changes can be detected with high probability by, say, random sampling of bits on the codeword. The details of this argument are carried out in~\Cref{subsec:security_of_multi_bit_signatures}.

\subsection{Security of single-bit signatures} \label{subsec:security_of_single_bit_signatures}

We begin our security analysis with single bit messages. For our digital signatures protocol to be secure, it is necessary that a computationally-bounded adversary cannot obtain quantum circuits (within the circuit class of the protocol) that pass the shadow-based certification, in a time faster than certification itself. Thus, the critical assumption is the inability to efficiently learn circuits from classical shadows.
\begin{conjecture}[Computational no-learning from shadows]\label{conj:nlfcs}
    Let $\varepsilon, \eta \in [0,1]$ be fixed constants, and let $\cal{C}$ be a family of unitary quantum circuits. The family $\cal{C}$ is said to satisfy $(\varepsilon,\eta)$-Computational No-Learning from Shadows conjecture if the following holds with respect to any polynomial-time quantum adversary $\cal{A}$: given polynomial-sized classical shadows of some state $\ket{C} = C \ket{0}$ for $C \in \cal{C}$, $\cal{A}$ cannot output a classical description of some $D \in \cal{C}$ such that $\abs{\braket{C}{D}}  = \abs{\bra{0}C^\dagger D\ket{0}} \geq 1 - \varepsilon$ with probability at least $1 - \eta$.
\end{conjecture}
\noindent The task of learning quantum circuits from shadows has received relatively little attention compared to the related task of learning circuits from \textit{quantum states} and from black-box queries to the circuit unitary~\cite{landau2025learning,huang2024learning,fefferman2024anti,kim2024learning}. Nonetheless, hardness results of the latter serve as evidence towards the former. For example, the no-learning from shadows assumption is implied (and inspired) by ``no-learning from states" conjecture as introduced in~\cite{fefferman2025hardness}.
\begin{conjecture}[Computational no-learning from states, Conjecture 1.1 of~\cite{fefferman2025hardness}] \label{conj:nlfqs}
    Let $\varepsilon, \eta \in [0,1]$ be fixed constants, and let $\cal{C}$ be a family of unitary quantum circuits. The family $\cal{C}$ is said to satisfy $(\varepsilon,\eta)$-Computational No-Learning from State conjecture if, given $T=\poly(n)$ copies of an $n$-qubit quantum state $\ket{C} = C\ket{0}$, a polynomial-time quantum adversary $\cal{A}$ cannot produce a $\poly(n)$-length classical representation of a circuit $D$ such that  $\abs{\braket{C}{D}} = \abs{\bra{0}C^\dagger D \ket{0}} \geq 1-\varepsilon$ with probability at least $1-\eta$.
\end{conjecture}
\noindent This conjecture is strictly stronger than Conjecture ~\ref{conj:nlfcs}. To see that, note that if one can learn circuits from shadows using an algorithm $A$, one can also learn the same circuit from copies of a state by first measuring the shadows, and then using $A$ thereafter. 
Ref. \cite{fefferman2025hardness} provides several arguments in support of Conjecture~\ref{conj:nlfqs}. Direct proof of the conjecture is obstructed: proving it for depth $d = \log^2 n$ brickwork Haar-random circuits would imply $\mathrm{P}\neq\mathrm{PSPACE}$, amounting to a mathematical breakthrough. However, examining the best current circuit learning algorithms suggest the hardness of learning sufficiently beyond $\log$-depth, such as with depth $\log^2 n$. In fact, such hardness has been proven against the family of low-degree polynomial algorithms, which encompasses a large class of algorithms. Furthermore, the hardness of RingLWE used by post-quantum cryptography also implies the hardness of learning circuits, although assuming such hardness would defeat the purpose of avoiding assumptions used in classical cryptography. Finally, it is also shown that the no-learning conjecture is equivalent to assuming the existence of random circuit one-way state generators.

Below, we formally instantiate the computational no-learning from states ($\mathrm{CNL}$) conjecture for a set of concrete circuit ensemble, adversary, and parameters.
\begin{definition}
We say the conjecture $\mathrm{CNL}_{\mathcal{C},T,\mathcal{A},\varepsilon,\eta}$ holds if the following is true. Given $T$ copies of the state $\vert\psi\rangle$ such that $\vert\psi\rangle=C\vert0\rangle$ for some $C\in\mathcal{C}$, with probability at least $1-\eta$ the adversary $\mathcal{A}$ cannot learn a circuit $C'\in\mathcal{C}$ such that $\vert\langle0\vert C'^\dagger C\vert0\rangle\vert^2\geq1-\varepsilon$.
\end{definition}

\begin{theorem}\label{thm:noiseless_single_bit_security}
If $\mathrm{CNL}_{\mathcal{C},T,\mathcal{A},\varepsilon_{\mathrm{CNL}},\eta}$ holds and $\tau(C\vert 0\rangle)\leq\tau^*$ for all $C\in\mathcal{C}$, then, with probability at least $1-\vert\mathcal{C}\vert\delta-\eta$, an adversary that does not have access to $C^b$ fails Protocol \ref{prot:single_bit_digital_signature} instantiated with a $(T,\delta,\varepsilon_{\mathrm{CNL}},\tau(\cdot))$ non-adaptive shadow overlap protocol. Additionally, an honest signer with fidelity at least $1-\varepsilon_{\mathrm{CNL}}/(2\tau^*)$ passes it with probability at least $1-\delta$.
\end{theorem}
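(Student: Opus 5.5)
The proof splits into the two claims of Theorem~\ref{thm:noiseless_single_bit_security}, soundness and completeness. Completeness is immediate from Definition~\ref{def:shadow_overlap_protocol}. The honest signer reveals $C^b$, so the certification phase is run with hypothesis state $\ket{\psi}=C^b\ket{0}$ on the public shadows $\{o_i^b\}$ produced from the laboratory state $\rho$. By assumption $\langle\psi\vert\rho\vert\psi\rangle\ge 1-\varepsilon_{\mathrm{CNL}}/(2\tau^*)\ge 1-\varepsilon_{\mathrm{CNL}}/(2\tau(C^b\ket{0}))$, where the second inequality uses $\tau(C^b\ket{0})\le\tau^*$. This is exactly the \textsc{Certified} regime of a $(T,\delta,\varepsilon_{\mathrm{CNL}},\tau(\cdot))$ protocol, so $\mathsf{Ver}$ accepts with probability at least $1-\delta$.

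For soundness, I would model the adversary as receiving the public key (shadows of the laboratory state, which in the noiseless setting is $\ket{C^b}$) and outputting $(b,C')$ with $C'\in\mathcal{C}$. I would define two bad events.
\begin{itemize}
\item $E_1$ is the event that the adversary outputs a $C'$ with $\vert\langle 0\vert C'^\dagger C^b\vert 0\rangle\vert^2\ge 1-\varepsilon_{\mathrm{CNL}}$. The shadows can be produced from $T$ copies of $\ket{C^b}$ by measurement, so an adversary acting on shadows is also an adversary acting on states. Hence $\mathrm{CNL}_{\mathcal{C},T,\mathcal{A},\varepsilon_{\mathrm{CNL}},\eta}$ gives $\Pr[E_1]\le\eta$. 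I would state explicitly that the adversary class $\mathcal{A}$ is understood to be closed under this preprocessing.
\item $E_2$ is the event that some $D\in\mathcal{C}$ with $\langle D\vert\rho\vert D\rangle<1-\varepsilon_{\mathrm{CNL}}$ is nevertheless \textsc{Certified} by the certification phase run on the fixed shadows $\{o_i^b\}$.
\end{itemize}

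Outside $E_1\cup E_2$, the adversary's circuit satisfies $\langle C'\vert\rho\vert C'\rangle<1-\varepsilon_{\mathrm{CNL}}$, and since $E_2$ fails it is rejected. So the failure probability is at most $\Pr[E_1]+\Pr[E_2]$.

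The main obstacle is $\Pr[E_2]$. The guarantee in Definition~\ref{def:shadow_overlap_protocol} holds for a \emph{fixed} hypothesis state, whereas the adversary chooses $C'$ after seeing the shadows, so $C'$ is correlated with the data. I would handle this with a union bound over the finite ensemble. For each fixed $D\in\mathcal{C}$ with fidelity below $1-\varepsilon_{\mathrm{CNL}}$, the probability over the shadow randomness that certification on hypothesis $D\ket{0}$ outputs \textsc{Certified} is at most $\delta$. Summing over at most $\vert\mathcal{C}\vert$ such $D$ gives $\Pr[E_2]\le\vert\mathcal{C}\vert\delta$. This bound holds no matter how the adversary selects $C'$, because $E_2$ is defined independently of the adversary. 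If the certification phase has internal randomness, I would either fold it into the event or note that it is drawn after $C'$ is fixed. Combining the two bounds gives failure with probability at least $1-\vert\mathcal{C}\vert\delta-\eta$.
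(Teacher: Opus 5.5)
Your proposal is correct and follows the paper's proof essentially step for step. The paper bounds by $\eta$ the probability that the adversary's circuit has fidelity at least $1-\varepsilon_{\mathrm{CNL}}$, via the same shadows-from-states reduction. It then union-bounds the \textsc{Failed} guarantee over the set $\mathcal{W}\subseteq\mathcal{C}$ of low-fidelity circuits to get $\vert\mathcal{C}\vert\delta$, and reads completeness directly off the definition.

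The only difference is cosmetic. The paper also invokes $\tau(C'\vert 0\rangle)\le\tau^*$ in the soundness step, reflecting that one fixed $T$ must serve every hypothesis. You instead note that, under Definition~\ref{def:shadow_overlap_protocol} as literally stated, the \textsc{Failed} guarantee does not depend on $\tau$. You also spell out why the union bound is needed once $C'$ depends on the shadows.
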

\begin{proof}
If $\mathrm{CNL}_{\mathcal{C},T,\mathcal{A},\varepsilon_{\mathrm{CNL}},\eta}$ holds, then with probability $1-\eta$, any circuit $C'$ that $\mathcal{A}$ returns must satisfy $\vert\langle0\vert C'^\dagger C\vert0\rangle\vert^2<1-\varepsilon_{\mathrm{CNL}}$. To show this, we use proof by contradiction. The only information $\cal{A}$ has access to are the shadows for the circuit $C^b$. If $\cal{A}$ can learn $C'$ such that $\vert\langle0\vert C'^\dagger C\vert0\rangle\vert^2\geq1-\varepsilon_{\mathrm{CNL}}$ just from the shadows, then it must also be able to learn $C'$ from copies of the state since it can measure the states and produce the shadows.

Denote as $\mathcal{W}$ the set of circuits $C'$ that satisfy $\vert\langle0\vert C'^\dagger C\vert0\rangle\vert^2<1-\varepsilon_{\mathrm{CNL}}$, and we have $\vert\mathcal{W}\vert<\vert\mathcal{C}\vert$. Denote the output of $\mathsf{Ver}$ when verifying on $\mathsf{pk}$, $b$, and $C^b$ as $\mathsf{Ver}(\mathsf{pk}, b, C^b)$. For any $C'$ such that $\vert\langle0\vert C'^\dagger C\vert0\rangle\vert^2<1-\varepsilon_{\mathrm{CNL}}$, we have 
\begin{align}
\Pr_{\mathsf{pk\sim\mathsf{Sign}}}[\mathsf{Ver}(\mathsf{pk}, b, C')=\textsc{Failed}]\geq1-\delta,
\end{align}
which is guaranteed by the shadow overlap protocol since $\tau(C'\vert0\rangle)\;\leq\tau^*$. Here, the hypothesis state is $\vert\psi\rangle=C'\vert0\rangle$, and the state to certify is the lab state prepared by noisy $C^b$.

The probability that no state the adversary can output leads to false certification is
\begin{align}
&\Pr_{\mathsf{pk\sim\mathsf{Sign}}}[(\mathsf{Ver}(\mathsf{pk}, b, C')=\textsc{Failed})\text{ for all }C'\in\mathcal{W}]\\
\geq&1-\sum_{C'\in\mathcal{W}}\Pr_{\mathsf{pk\sim\mathsf{Sign}}}[(\mathsf{Ver}(\mathsf{pk}, b, C')=\textsc{Certified})] \geq 1-\vert\mathcal{C}\vert\delta.
\end{align}
Overall, this bounds the malicious-case false positive probability to $\vert\mathcal{C}\vert\delta+\eta$. The honest-case false negative probability (the probability of a honest signer signer failing the test) is simply $\delta$.
\end{proof}
Note that since $\vert\mathcal{C}\vert$ is exponential in $n$, we need to make $\delta$ exponentially small in $n$. However, since the sample complexity (size of classical shadows) scales with $\ln(1/\delta)$, this only increases sample complexity by a $\poly(n)$ factor. 

The above derivation assumes that the classical shadows are collected for the ideal pure state $C\vert 0\rangle$. In reality, a mixed laboratory state $\sigma_{\mathrm{lab}}$ is prepared to generate classical shadows with fidelity $\phi_{\mathrm{hon}}\equiv1-\varepsilon_{\mathrm{hon}}\equiv\langle 0\vert C^\dagger\sigma_{\mathrm{lab}}C\vert 0\rangle$. In this case, we need to place a minimum assumed honest case fidelity $\phi_{\mathrm{hon}}$ and find the corresponding upper bound on $\vert\langle0\vert C'^{\dagger}\sigma_{\mathrm{lab}}C'\vert0\rangle\vert^2$ which is no longer bounded by simply $\vert\langle0\vert C'^{\dagger}C'\vert0\rangle\vert^2\leq1-\varepsilon_{\mathrm{CNL}}$. Instead, the bound is given by the following two theorems.

\begin{lemma}\label{lem:depolarizing_noise}
Under depolarizing noise,
\begin{align}
\vert\langle0\vert C'^{\dagger}\sigma_{\mathrm{lab}}C'\vert0\rangle\vert^2\leq1-(\varepsilon_{\mathrm{CNL}}-\varepsilon_{\mathrm{hon}})+\varepsilon_{\mathrm{hon}}\frac{1}{2^n}.
\end{align}
\end{lemma}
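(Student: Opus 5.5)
The plan is to write the depolarized laboratory state explicitly and split the adversarial fidelity into a pure-state term and a maximally mixed term. Under global depolarizing noise we model $\sigma_{\mathrm{lab}} = (1-p)\,\ket{\psi_{\mathrm{hon}}}\bra{\psi_{\mathrm{hon}}} + p\,\frac{I}{2^n}$ with $\ket{\psi_{\mathrm{hon}}}=C\ket{0}$. The first step fixes $p$ in terms of the honest fidelity: $\phi_{\mathrm{hon}} = \bra{\psi_{\mathrm{hon}}}\sigma_{\mathrm{lab}}\ket{\psi_{\mathrm{hon}}} = (1-p) + p2^{-n}$. Hence $\varepsilon_{\mathrm{hon}} = p(1-2^{-n})$, and in particular $p = \varepsilon_{\mathrm{hon}}/(1-2^{-n}) \ge \varepsilon_{\mathrm{hon}}$. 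We read the quantity on the left of the lemma as the fidelity $\phi_{\mathrm{adv}} = \bra{0}C'^\dagger\sigma_{\mathrm{lab}}C'\ket{0}$, which is real and in $[0,1]$, consistent with the main text. If it is literally squared, the bound follows a fortiori, since $x^2\le x$ on $[0,1]$.

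The second step evaluates the adversary's fidelity with $\ket{\psi_{\mathrm{adv}}}=C'\ket{0}$:
\begin{align}
\phi_{\mathrm{adv}} = (1-p)\,\vert\langle\psi_{\mathrm{adv}}\vert\psi_{\mathrm{hon}}\rangle\vert^2 + p\,2^{-n}.
\end{align}
By the CNL assumption, $\vert\langle\psi_{\mathrm{adv}}\vert\psi_{\mathrm{hon}}\rangle\vert^2 < 1-\varepsilon_{\mathrm{CNL}}$, so $\phi_{\mathrm{adv}} \le (1-p)(1-\varepsilon_{\mathrm{CNL}}) + p2^{-n}$.

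The third step is bookkeeping to reach the stated form. Substitute $1-p = \phi_{\mathrm{hon}} - p2^{-n} = 1-\varepsilon_{\mathrm{hon}} - p2^{-n}$. This gives $\phi_{\mathrm{adv}} \le (1-\varepsilon_{\mathrm{hon}})(1-\varepsilon_{\mathrm{CNL}}) + p2^{-n}\varepsilon_{\mathrm{CNL}}$. Next use $(1-\varepsilon_{\mathrm{hon}})(1-\varepsilon_{\mathrm{CNL}}) \le 1-\varepsilon_{\mathrm{CNL}} \le 1-(\varepsilon_{\mathrm{CNL}}-\varepsilon_{\mathrm{hon}})$, since $\varepsilon_{\mathrm{hon}}\ge 0$.

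The main obstacle is the residual term $p2^{-n}\varepsilon_{\mathrm{CNL}}$, which must be shown to be at most $\varepsilon_{\mathrm{hon}}2^{-n}$. Since $p$ slightly exceeds $\varepsilon_{\mathrm{hon}}$, a naive bound does not quite close. I would absorb the discrepancy using the slack $\varepsilon_{\mathrm{hon}}$ that was discarded in the previous step, together with $\varepsilon_{\mathrm{CNL}}\le 1$. Concretely, we need
\begin{align}
\varepsilon_{\mathrm{CNL}}\,\frac{\varepsilon_{\mathrm{hon}}\,2^{-n}}{1-2^{-n}} \le \varepsilon_{\mathrm{hon}}\varepsilon_{\mathrm{CNL}}' + \varepsilon_{\mathrm{hon}}2^{-n},
\end{align}
where $\varepsilon_{\mathrm{hon}}\varepsilon_{\mathrm{CNL}}'$ denotes the slack available from the previous step. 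The exact slack is $(1-\varepsilon_{\mathrm{CNL}}) - (1-\varepsilon_{\mathrm{hon}})(1-\varepsilon_{\mathrm{CNL}}) + \varepsilon_{\mathrm{hon}} - \varepsilon_{\mathrm{hon}} \cdot 0 = \varepsilon_{\mathrm{hon}}(1-\varepsilon_{\mathrm{CNL}}) + \varepsilon_{\mathrm{hon}}$. The inequality then holds for all $n\ge1$, because the left side is at most $\varepsilon_{\mathrm{hon}}\cdot 2^{-n}/(1-2^{-n}) \le \varepsilon_{\mathrm{hon}}$. Alternatively, I would simply bound $(1-p)\le 1-\varepsilon_{\mathrm{hon}}$ and $p\le 1$ in the two terms separately, checking that the result is still dominated by the stated right-hand side. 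Either route gives the lemma.
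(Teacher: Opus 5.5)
Your argument is correct and follows the paper's route: write $\sigma_{\mathrm{lab}}$ as a mixture of $C\ket{0}\bra{0}C^\dagger$ and $I/2^n$, use linearity to get $\phi_{\mathrm{adv}}=(1-p)\vert\langle\psi_{\mathrm{adv}}\vert\psi_{\mathrm{hon}}\rangle\vert^2+p2^{-n}$, insert the CNL bound, and relax. The one difference is the parametrization.

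The paper takes the depolarizing weight to be $\varepsilon_{\mathrm{hon}}$ itself. It therefore reaches $(1-\varepsilon_{\mathrm{hon}})(1-\varepsilon_{\mathrm{CNL}})+\varepsilon_{\mathrm{hon}}2^{-n}$ in one line, and your ``main obstacle'' never arises. Strictly, though, that choice makes the honest fidelity $1-\varepsilon_{\mathrm{hon}}(1-2^{-n})$ rather than $1-\varepsilon_{\mathrm{hon}}$. Your choice $p=\varepsilon_{\mathrm{hon}}/(1-2^{-n})$ is the one consistent with the definition of $\phi_{\mathrm{hon}}$.

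Your absorption step is valid. For $n\ge 1$ the residual $p2^{-n}\varepsilon_{\mathrm{CNL}}$ is at most $\varepsilon_{\mathrm{hon}}$, while the discarded slack is $\varepsilon_{\mathrm{hon}}(2-\varepsilon_{\mathrm{CNL}})\ge\varepsilon_{\mathrm{hon}}$. State that inequality directly with $\varepsilon_{\mathrm{hon}}(2-\varepsilon_{\mathrm{CNL}})$ on the right. Do not introduce $\varepsilon'_{\mathrm{CNL}}$, which the paper uses for something else in the general-noise lemma.

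Reading the left-hand side as the unsquared fidelity is also the intended meaning. The paper's own computation is linear in $\sigma_{\mathrm{lab}}$, and its square and $C'^\dagger C'$ (for $C'^\dagger C$) are typos.

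Your closing ``alternatively'' route does not work, so drop the claim that either route gives the lemma. Bounding $p\le 1$ in the term $p2^{-n}$ gives $(1-\varepsilon_{\mathrm{hon}})(1-\varepsilon_{\mathrm{CNL}})+2^{-n}$. At $\varepsilon_{\mathrm{hon}}=0$ this is $1-\varepsilon_{\mathrm{CNL}}+2^{-n}$, which exceeds the target $1-\varepsilon_{\mathrm{CNL}}$. In general it fails whenever $2^{-n}(1-\varepsilon_{\mathrm{hon}})>\varepsilon_{\mathrm{hon}}(2-\varepsilon_{\mathrm{CNL}})$.

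A crude bound has to keep the factor $\varepsilon_{\mathrm{hon}}$. For example, $p\le 2\varepsilon_{\mathrm{hon}}$ (valid for $n\ge1$) works. It leaves a margin of $\varepsilon_{\mathrm{hon}}(2-\varepsilon_{\mathrm{CNL}}-2^{-n})\ge 0$ below the stated right-hand side.
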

\begin{proof}
Under the depolarizing noise,
\begin{align}
\sigma_{\mathrm{lab}}=(1-\varepsilon_{\mathrm{hon}})C\vert0\rangle\langle0\vert C^\dagger+\varepsilon_{\mathrm{hon}}\frac{I}{2^n}.
\end{align}
Therefore,
\begin{align}
\vert\langle0\vert C'^{\dagger}\sigma_{\mathrm{lab}}C'\vert0\rangle\vert^2&=(1-\varepsilon_{\mathrm{hon}})\vert\langle0\vert C'^{\dagger}C'\vert0\rangle\vert^2+\varepsilon_{\mathrm{hon}}\frac{1}{2^n}\\
&\leq(1-\varepsilon_{\mathrm{hon}})(1-\varepsilon_{\mathrm{CNL}})+\varepsilon_{\mathrm{hon}}\frac{1}{2^n}\\
&\leq1-(\varepsilon_{\mathrm{CNL}}-\varepsilon_{\mathrm{hon}})+\varepsilon_{\mathrm{hon}}\frac{1}{2^n}.
\end{align}
\end{proof}

\begin{lemma}\label{lem:general_noise}
Under general noise,
\begin{align}
\vert\langle0\vert C'^{\dagger}\sigma_{\mathrm{lab}}C'\vert0\rangle\vert^2=\sup_{0\leq\varepsilon'_{\mathrm{CNL}}\leq\varepsilon_{\mathrm{CNL}}}\left(\sqrt{(1-\varepsilon_{\mathrm{hon}})(1-\varepsilon_{\mathrm{CNL}}')}+\sqrt{\varepsilon_{\mathrm{hon}}\varepsilon_{\mathrm{CNL}}'}\right)^2.
\end{align}
\end{lemma}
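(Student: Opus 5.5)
The plan is to read the statement as a worst-case bound over all laboratory states $\sigma_{\mathrm{lab}}$ consistent with the honest fidelity $\langle0\vert C^\dagger\sigma_{\mathrm{lab}}C\vert0\rangle=1-\varepsilon_{\mathrm{hon}}$ and all adversarial circuits $C'$ permitted by $\mathrm{CNL}$. Throughout, the quantity $\vert\langle0\vert C'^{\dagger}\sigma_{\mathrm{lab}}C'\vert0\rangle\vert^2$ is understood, as in Lemma~\ref{lem:depolarizing_noise}, as the fidelity between $\sigma_{\mathrm{lab}}$ and $\vert\psi_{\mathrm{adv}}\rangle=C'\vert0\rangle$. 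The proof then has two parts:
\begin{itemize}
\item an upper bound, obtained from the triangle inequality for the Bures angle;
\item a matching construction, which justifies the equality sign.
\end{itemize}

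\textbf{Upper bound.} I would use the Bures angle $A(\rho,\sigma)=\arccos\sqrt{F(\rho,\sigma)}$, which is a metric on density matrices. For a pure argument it reduces to $\arccos\sqrt{\langle\psi\vert\rho\vert\psi\rangle}$.

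Set $\alpha=A(\sigma_{\mathrm{lab}},\psi_{\mathrm{hon}})=\arccos\sqrt{1-\varepsilon_{\mathrm{hon}}}$. Write the adversary's actual overlap as $\vert\langle\psi_{\mathrm{adv}}\vert\psi_{\mathrm{hon}}\rangle\vert^2=1-\varepsilon'$, and set $\beta'=\arccos\sqrt{1-\varepsilon'}$.

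The triangle inequality gives $A(\sigma_{\mathrm{lab}},\psi_{\mathrm{adv}})\ge\vert\beta'-\alpha\vert$. Since $\cos^2$ is decreasing on $[0,\pi/2]$, this yields
\begin{align}
F(\sigma_{\mathrm{lab}},\psi_{\mathrm{adv}})\le\cos^2(\beta'-\alpha)=\left(\sqrt{(1-\varepsilon_{\mathrm{hon}})(1-\varepsilon')}+\sqrt{\varepsilon_{\mathrm{hon}}\varepsilon'}\right)^2 .
\end{align}
The equality on the right is the angle-subtraction formula, using $\cos\alpha=\sqrt{1-\varepsilon_{\mathrm{hon}}}$, $\sin\alpha=\sqrt{\varepsilon_{\mathrm{hon}}}$, and similarly for $\beta'$.

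The adversary's $\varepsilon'$ is not known exactly; $\mathrm{CNL}$ only constrains it to an admissible range. Taking the supremum over admissible $\varepsilon'$ gives the displayed expression.

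\textbf{Achievability.} For equality I would exhibit a worst case. Take $\sigma_{\mathrm{lab}}$ to be the pure state on the great circle (the Fubini--Study geodesic) joining $\psi_{\mathrm{hon}}$ and $\psi_{\mathrm{adv}}$, at angle $\alpha$ from $\psi_{\mathrm{hon}}$ in the direction of $\psi_{\mathrm{adv}}$. Concretely, write $\psi_{\mathrm{adv}}=\cos\beta'\,\psi_{\mathrm{hon}}+\sin\beta'\,\psi^\perp$ after fixing phases, and set $\sigma_{\mathrm{lab}}=\vert\chi\rangle\langle\chi\vert$ with $\chi=\cos\alpha\,\psi_{\mathrm{hon}}+\sin\alpha\,\psi^\perp$. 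Then $\sigma_{\mathrm{lab}}$ has honest fidelity exactly $1-\varepsilon_{\mathrm{hon}}$, and $\vert\langle\chi\vert\psi_{\mathrm{adv}}\rangle\vert^2=\cos^2(\beta'-\alpha)$. Hence the bound is attained.

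\textbf{Main obstacle.} I expect the main difficulty to be the bookkeeping of the range of the supremum rather than any inequality. The function $\cos^2(\beta'-\alpha)$ is unimodal in $\beta'$, peaking at $\beta'=\alpha$, i.e.\ at $\varepsilon'=\varepsilon_{\mathrm{hon}}$. So the supremum over the allowed interval of $\varepsilon'$ is attained at the endpoint nearest $\varepsilon_{\mathrm{hon}}$.

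I would split into two regimes:
\begin{itemize}
\item $\varepsilon_{\mathrm{hon}}<\varepsilon_{\mathrm{CNL}}$, the regime relevant to the protocol. Here the worst case sits at the $\mathrm{CNL}$ boundary $\varepsilon'=\varepsilon_{\mathrm{CNL}}$, which is what the parametrization in the statement encodes.
\item $\varepsilon_{\mathrm{hon}}\ge\varepsilon_{\mathrm{CNL}}$. Here the bound degenerates to $1$ and gives no security.
\end{itemize}

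I would state explicitly which convention for $\varepsilon'$ is used. Finally, as a sanity check, I would verify that the result dominates the depolarizing bound of Lemma~\ref{lem:depolarizing_noise} up to the $2^{-n}$ term.
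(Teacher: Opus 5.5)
Your argument is correct, but it reaches the bound by a different route from the paper's.

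\textbf{The paper's route.} The paper decomposes $\sigma_{\mathrm{lab}}=\sum_i w_i\vert\psi_i\rangle\langle\psi_i\vert$. It writes each $\vert\psi_i\rangle$, and also $\vert\psi_{\mathrm{adv}}\rangle$, as a component along $\vert\psi_{\mathrm{hon}}\rangle$ plus an orthogonal remainder. Bounding the overlap of the two remainders by $1$ gives $\vert\langle\psi_{\mathrm{adv}}\vert\psi_i\rangle\vert^2\le f(1-\varepsilon_i)$, where $f(x)=(\sqrt{x(1-\varepsilon')}+\sqrt{(1-x)\varepsilon'})^2$. It then uses concavity of $f$ (linear in $x$ plus a nonnegative multiple of $\sqrt{x(1-x)}$) and Jensen's inequality to pass from the average of $f(1-\varepsilon_i)$ to $f(1-\varepsilon_{\mathrm{hon}})$.

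\textbf{Comparison.} You get the same expression in one step, from the reverse triangle inequality for the Bures angle and the angle-subtraction formula. Your route is shorter and makes the geometry explicit. However, it imports the metric property of the Bures angle on mixed states, which rests on Uhlmann's theorem. The paper's route uses only Cauchy--Schwarz and Jensen. In effect, it is a direct proof of exactly the special case of that triangle inequality needed here, where two of the three states are pure.

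Your achievability construction goes beyond the paper. Its proof only establishes an upper bound; it ends by calling the optimized expression ``the upper bound''. Your pure state $\chi$ on the geodesic is what actually justifies reading the ``$=$'' as a worst case over laboratory states.

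\textbf{A correction to the range of the supremum.} This is the bookkeeping you flag as the main obstacle. CNL gives $\vert\langle\psi_{\mathrm{adv}}\vert\psi_{\mathrm{hon}}\rangle\vert^2\le 1-\varepsilon_{\mathrm{CNL}}$, i.e.\ $\varepsilon'\ge\varepsilon_{\mathrm{CNL}}$. This is the constraint the paper's proof actually uses; it writes $1-\varepsilon'_{\mathrm{CNL}}\le 1-\varepsilon_{\mathrm{CNL}}$.

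Your regime analysis is right for the supremum over $\varepsilon_{\mathrm{CNL}}\le\varepsilon'\le 1$: the supremum is attained at $\varepsilon'=\varepsilon_{\mathrm{CNL}}$ when $\varepsilon_{\mathrm{hon}}<\varepsilon_{\mathrm{CNL}}$, and equals $1$ otherwise. But this is not what the displayed interval $0\le\varepsilon'\le\varepsilon_{\mathrm{CNL}}$ encodes, contrary to your remark.
\begin{itemize}
\item Over the displayed interval, the supremum equals $1$ whenever $\varepsilon_{\mathrm{hon}}\le\varepsilon_{\mathrm{CNL}}$.
\item For $\varepsilon_{\mathrm{hon}}>\varepsilon_{\mathrm{CNL}}$, it is not even a valid worst-case bound. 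An admissible adversary with $\varepsilon'=\varepsilon_{\mathrm{hon}}$ reaches fidelity $1$ under your own construction, where $\chi=\psi_{\mathrm{adv}}$.
\end{itemize}
So when you fix the convention, say explicitly that the supremum runs over $[\varepsilon_{\mathrm{CNL}},1]$ and that the interval in the statement is reversed. With that reading, your proof establishes the intended result.
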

\begin{proof}
Denote $\vert\psi_{\mathrm{hon}}\rangle=C\vert0\rangle$ and $\vert\psi_{\mathrm{adv}}\rangle=C'\vert0\rangle$. For any mixed state $\sigma_{\mathrm{lab}}$ such that
\begin{align}
1-\varepsilon_{\mathrm{hon}}\equiv\langle\psi_{\mathrm{hon}}\vert\sigma_{\mathrm{lab}}\vert\psi_{\mathrm{hon}}\rangle,
\end{align}
we can write it as a sum of pure states
\begin{align}
\sigma_{\mathrm{lab}}=\sum_iw_i\vert\psi_i\rangle\langle\psi_i\vert
\end{align}
such that
\begin{align}
1-\varepsilon_i&\equiv\vert\langle\psi_{\mathrm{hon}}\vert\psi_i\rangle\vert^2\\
\sum_iw_i(1-\varepsilon_i)&=1-\varepsilon_{\mathrm{hon}},
\end{align}
can we can write the pure states as
\begin{align}
\vert\psi_i\rangle=\sqrt{1-\varepsilon_i}\vert\psi_{\mathrm{hon}}\rangle+\sqrt{\varepsilon_i}\vert\psi_{\mathrm{hon},i}^\bot\rangle
\end{align}
for some normalized $\vert\psi_{\mathrm{hon},i}^\bot\rangle$ orthogonal to $\vert\psi_{\mathrm{hon}}\rangle$.

Similarly, for any adversary state $\vert\psi_{\mathrm{adv}}\rangle$ such that
\begin{align}
\vert\langle\psi_{\mathrm{adv}}\vert\psi_{\mathrm{hon}}\rangle\vert^2=1-\varepsilon'_{\mathrm{CNL}}\leq1-\varepsilon_{\mathrm{CNL}},
\end{align}
we can write it as
\begin{align}
\vert\psi_{\mathrm{adv}}\rangle=\sqrt{1-\varepsilon'_{\mathrm{CNL}}}\vert\psi_{\mathrm{hon}}\rangle+\sqrt{\varepsilon'_{\mathrm{CNL}}}\vert\psi_{\mathrm{hon},\mathrm{adv}}^\bot\rangle
\end{align}
for some normalized $\vert\psi_{\mathrm{hon},\mathrm{adv}}^\bot\rangle$ orthogonal to $\vert\psi_{\mathrm{hon}}\rangle$.

Therefore,
\begin{align}
\vert\langle\psi_{\mathrm{adv}}\vert\psi_i\rangle\vert^2&=\left(\sqrt{(1-\varepsilon_{i})(1-\varepsilon'_{\mathrm{CNL}})}+\sqrt{\varepsilon_{i}\varepsilon'_{\mathrm{CNL}}}\langle\psi_{\mathrm{hon},\mathrm{adv}}^\bot\vert\psi_{\mathrm{hon},i}^\bot\rangle\right)^2\\
&\leq\left(\sqrt{(1-\varepsilon_{i})(1-\varepsilon'_{\mathrm{CNL}})}+\sqrt{\varepsilon_{i}\varepsilon'_{\mathrm{CNL}}}\right)^2\\
&\equiv f(1-\varepsilon_i).
\end{align}
Further, one can easily check that for $0\leq\varepsilon_i,\varepsilon_{\mathrm{CNL}}\leq 1$, we have $f''(1-\varepsilon_i)<0$ and $f$ is concave.

With this, we can write
\begin{align}
\langle\psi_{\mathrm{adv}}\vert\sigma_{\mathrm{lab}}\vert\psi_{\mathrm{adv}}\rangle&=\sum_i w_i\vert\langle\psi_{\mathrm{adv}}\vert\psi_i\rangle\vert^2\\
&=\sum_i w_i f(1-\varepsilon_i)\\
&\leq f\left(\sum_i w_i (1-\varepsilon_i)\right)\\
&=f(1-\varepsilon_{\mathrm{hon}})\\
&=\left(\sqrt{(1-\varepsilon_{\mathrm{hon}})(1-\varepsilon'_{\mathrm{CNL}})}+\sqrt{\varepsilon_{\mathrm{hon}}\varepsilon'_{\mathrm{CNL}}}\right)^2.
\end{align}

However, since we do not know the fidelity $\vert\langle\psi_{\mathrm{adv}}\vert\psi_{\mathrm{hon}}\rangle\vert^2=1-\varepsilon'_{\mathrm{CNL}}$ other than the fact that $\varepsilon'_{\mathrm{CNL}}\geq\varepsilon_{\mathrm{CNL}}$ and that $f$ is not necessarily monotonic in $\varepsilon'_{\mathrm{CNL}}$, we have to optimize $\varepsilon'_{\mathrm{CNL}}$ over the allowed interval to find the upper bound.
\end{proof}

With this, we can have an error aware version of the security guarantee.
\begin{corollary}
Set $\varepsilon_{\mathrm{noisy}}=\varepsilon_{\mathrm{CNL}}-\varepsilon_{\mathrm{hon}}$ under depolarizing noise, or
\begin{align}
\varepsilon_{\mathrm{noisy}}=1-\sup_{0\leq\varepsilon'_{\mathrm{CNL}}\leq\varepsilon_{\mathrm{CNL}}}\left(\sqrt{(1-\varepsilon_{\mathrm{hon}})(1-\varepsilon_{\mathrm{CNL}}')}+\sqrt{\varepsilon_{\mathrm{hon}}\varepsilon_{\mathrm{CNL}}'}\right)^2
\end{align}
under general noise. If $\mathrm{CNL}_{\mathcal{C},T,\mathcal{A},\varepsilon_{\mathrm{CNL}},\eta}$ holds and $\tau(C\vert 0\rangle)\leq\tau^*$ for all $C\in\mathcal{C}$, then with probability at least $1-\vert\mathcal{C}\vert\delta-\eta$, an adversary that does not have access to $C^b$ fails Protocol \ref{prot:single_bit_digital_signature} instantiated with a $(T,\delta,\varepsilon_{\mathrm{noisy}},\tau(\cdot))$ non-adaptive shadow overlap protocol. Additionally, an honest signer with fidelity at least $1-\frac{\varepsilon_{\mathrm{noisy}}}{2\tau^*}$ passes it with probability at least $1-\delta$.
\end{corollary}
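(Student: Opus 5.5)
The plan is to rerun the proof of Theorem~\ref{thm:noiseless_single_bit_security} verbatim, with the noiseless fidelity threshold $1-\varepsilon_{\mathrm{CNL}}$ replaced by $1-\varepsilon_{\mathrm{noisy}}$. Lemmas~\ref{lem:depolarizing_noise} and \ref{lem:general_noise} supply exactly this replacement: they bound the fidelity of any non-learnable adversarial hypothesis with respect to the \emph{laboratory} state. In the depolarizing case I will drop the $\varepsilon_{\mathrm{hon}}2^{-n}$ term, or absorb it into $\varepsilon_{\mathrm{noisy}}$; this is a negligible correction that I will state explicitly.

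First, invoke $\mathrm{CNL}_{\mathcal{C},T,\mathcal{A},\varepsilon_{\mathrm{CNL}},\eta}$. The adversary sees only shadows of $\sigma_{\mathrm{lab}}$, and these can be generated from copies of the (noisy) state. Hence, with probability at least $1-\eta$, every circuit $C'$ the adversary can output lies in
\[
\mathcal{W}=\bigl\{C' : \vert\langle0\vert C'^\dagger C^b\vert0\rangle\vert^2<1-\varepsilon_{\mathrm{CNL}}\bigr\}.
\]
There is a subtlety here. CNL is phrased for copies of the ideal pure state, whereas the shadows here come from $\sigma_{\mathrm{lab}}$. I will note that a learner given pure copies can simulate the noise channel (depolarizing, or any known preparation noise) before measuring. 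This is the same reduction as in Theorem~\ref{thm:noiseless_single_bit_security}, with the noise applied on top.

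Second, for each $C'\in\mathcal{W}$, apply Lemma~\ref{lem:depolarizing_noise} (or Lemma~\ref{lem:general_noise}) with $\varepsilon'_{\mathrm{CNL}}\ge\varepsilon_{\mathrm{CNL}}$ to get $\langle0\vert C'^\dagger\sigma_{\mathrm{lab}}C'\vert0\rangle\le 1-\varepsilon_{\mathrm{noisy}}$. For general noise, taking the supremum over the admissible range gives exactly the stated $\varepsilon_{\mathrm{noisy}}$. Because the shadow overlap protocol is run with parameter $\varepsilon_{\mathrm{noisy}}$, it must fail on hypothesis $C'\vert0\rangle$ whenever this fidelity is below $1-\varepsilon_{\mathrm{noisy}}$. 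One boundary case, where the fidelity equals $1-\varepsilon_{\mathrm{noisy}}$ exactly, needs care. I will either interpret the lemmas' bounds as strict, inheriting strictness from $\varepsilon'_{\mathrm{CNL}}>\varepsilon_{\mathrm{CNL}}$ in $\mathcal{W}$, or shave $\varepsilon_{\mathrm{noisy}}$ infinitesimally.

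Then, as before, each $C'$ is falsely certified with probability at most $\delta$. A union bound over $\mathcal{W}\subseteq\mathcal{C}$ gives a total false-accept probability of at most $\vert\mathcal{C}\vert\delta+\eta$. The bound $\tau(C'\vert0\rangle)\le\tau^*$ is only needed for the completeness direction.

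Third, for completeness, an honest signer's laboratory state satisfies $\langle0\vert C^{b\dagger}\sigma_{\mathrm{lab}}C^b\vert0\rangle\ge 1-\varepsilon_{\mathrm{noisy}}/(2\tau^*)\ge 1-\varepsilon_{\mathrm{noisy}}/(2\tau(C^b\vert0\rangle))$. By Definition~\ref{def:shadow_overlap_protocol}, the protocol then outputs \textsc{Certified} with probability at least $1-\delta$.

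The main obstacle is the first step, namely justifying that CNL (stated for ideal copies) constrains an adversary holding noisy shadows. Everything else is substitution into the earlier lemmas and the union bound.
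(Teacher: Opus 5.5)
Your argument follows the paper's proof. You rerun Theorem~\ref{thm:noiseless_single_bit_security} with the lab-state fidelity bounds of Lemmas~\ref{lem:depolarizing_noise} and~\ref{lem:general_noise}, union-bound over $\mathcal{W}$, and read completeness off Definition~\ref{def:shadow_overlap_protocol}.

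\textbf{The step that fails as written.} For general noise you claim that the supremum over the admissible range ``gives exactly the stated $\varepsilon_{\mathrm{noisy}}$''. You correctly take the admissible range to be $\varepsilon'_{\mathrm{CNL}}\ge\varepsilon_{\mathrm{CNL}}$, as does the proof of Lemma~\ref{lem:general_noise}. The statement, however, takes the supremum over the complementary interval $0\le\varepsilon'_{\mathrm{CNL}}\le\varepsilon_{\mathrm{CNL}}$. The paper's one-line proof inherits this mismatch. Write $\theta_x=\arcsin\sqrt{x}$. Then the bracket equals $\cos^2(\theta_{\varepsilon'_{\mathrm{CNL}}}-\theta_{\varepsilon_{\mathrm{hon}}})$, which peaks at $\varepsilon'_{\mathrm{CNL}}=\varepsilon_{\mathrm{hon}}$. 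Two consequences follow.
\begin{itemize}
\item In the relevant regime $\varepsilon_{\mathrm{hon}}<\varepsilon_{\mathrm{CNL}}$, the printed formula gives the vacuous $\varepsilon_{\mathrm{noisy}}=0$. The supremum over $[\varepsilon_{\mathrm{CNL}},1]$ instead sits at the endpoint and gives $\varepsilon_{\mathrm{noisy}}=\sin^2(\theta_{\varepsilon_{\mathrm{CNL}}}-\theta_{\varepsilon_{\mathrm{hon}}})$.
\item For $\varepsilon_{\mathrm{hon}}>\varepsilon_{\mathrm{CNL}}$, the printed value is positive but cannot hold. An adversarial state with $\varepsilon'_{\mathrm{CNL}}=\varepsilon_{\mathrm{hon}}$, which CNL does not exclude, has fidelity $1$ with the lab state $\sigma_{\mathrm{lab}}=\vert\psi_{\mathrm{adv}}\rangle\langle\psi_{\mathrm{adv}}\vert$, whose honest fidelity is exactly $1-\varepsilon_{\mathrm{hon}}$.
\end{itemize}
So you should say that you prove the corollary with the interval corrected to $[\varepsilon_{\mathrm{CNL}},1]$, rather than claim agreement with the printed formula. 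On that interval, for $\varepsilon_{\mathrm{hon}}\le\varepsilon_{\mathrm{CNL}}$, the bracket is strictly decreasing in $\varepsilon'_{\mathrm{CNL}}$. This also settles your boundary worry.

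\textbf{Smaller points.}
\begin{itemize}
\item Your noise-simulation reduction goes beyond the paper, which just reuses the noiseless one-liner. It only covers depolarizing noise, or any known $C$-independent channel acting on the output. Circuit-level noise may depend on $C$, cannot in general be simulated from copies of $C\vert0\rangle$, and could even leak information about $C$. For general noise, CNL for the noisy shadows must therefore be an explicit assumption; the paper assumes this tacitly.
\item In the depolarizing case nothing needs to be dropped. The intermediate bound in the proof of Lemma~\ref{lem:depolarizing_noise} already satisfies $(1-\varepsilon_{\mathrm{hon}})(1-\varepsilon_{\mathrm{CNL}})+\varepsilon_{\mathrm{hon}}2^{-n}\le1-\varepsilon_{\mathrm{CNL}}+\varepsilon_{\mathrm{hon}}\varepsilon_{\mathrm{CNL}}\le1-\varepsilon_{\mathrm{noisy}}$. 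The resulting bound on $\langle0\vert C'^\dagger\sigma_{\mathrm{lab}}C'\vert0\rangle$ is strict for $C'\in\mathcal{W}$.
\item You say $\tau^*$ is needed only for completeness; the paper uses $\tau(C'\vert0\rangle)\le\tau^*$ on the soundness side. That is why the hypothesis quantifies over all of $\mathcal{C}$ rather than just $C^b$. With $T$ fixed when the public key is published, the concrete protocols give the \textsc{Failed} guarantee only for hypotheses whose relaxation time is at most the $\tau^*$ used to set $T$ and the threshold. Your reading matches Definition~\ref{def:shadow_overlap_protocol} taken literally, but not how the protocol is instantiated.
\end{itemize}
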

\begin{proof}
The proof is similar to Theorem \ref{thm:noiseless_single_bit_security}. For all circuits $C'$ such that $\vert\langle0\vert C'^\dagger C\vert0\rangle\vert^2\leq1-\varepsilon_{\mathrm{CNL}}$, we have
\begin{align}
\vert\langle0\vert C'^{\dagger}\sigma_{\mathrm{lab}}C'\vert0\rangle\vert^2\leq 1-\varepsilon_{\mathrm{noisy}}
\end{align}
by Lemma \ref{lem:depolarizing_noise} and Lemma \ref{lem:general_noise}. Then, the shadow overlap protocol is going to output failed with probability at least $1-\delta$ for such circuits. A similar union bound for the probability is applied for all circuits in the ensemble. The honest case is similarly trivial.
\end{proof}

\subsection{Security of multi-bit signatures} \label{subsec:security_of_multi_bit_signatures}
In our proposal for a multi-bit signature, the sender encodes a $k$-bit message into a $M$-bit codeword using a classical error detection code $[M, k, d]$ of distance $d$. Since the generation of signature is efficient, the sender signs all $M$ bits of the encoded message. The receiver, on the other hand, only verifies the signature of a small fraction $\alpha M$ of the message received. Specifically, the receiver accepts the signature if and only if a) the received message is a valid encoding per the error detection code and b) a random subset of $\alpha M$ bits pass the signature verification. For any adversary to alter the signed message, it would have to ``spoof" or alter at least $d$ bits among the $M$-bit long message (otherwise, the received message is no longer a valid encoding). If $d$ bits of the message were forged by an adversary, by uniformly spot-checking a small fraction $V = \alpha M$ of the bits, the receiver would be able to detect any tampering of the original message.

We begin by showing that the probability the verifier fails to detect at least one tampered bit decreases exponentially with $V$, the number of bits whose signature is verified.
\begin{lemma}\label{lem:FP_FN_prob}
Consider an $M$-bit encoded message and suppose the verifier checks the signature of $V$ bits uniformly at random. If an altered bit is checked, the verifier outputs \textsc{Failed} with probability at least $1-\delta_1$. If an honest bit is checked, the verifier outputs \textsc{Certified} with probability at least $1-\delta_2$. Denote as $\Omega_{\mathrm{FP}}$ the false-positive event where an adversary alters at least $\alpha M$ bits but none of the $V$ signature certifications output \textsc{Failed}. Additionally, denote as $\Omega_{\mathrm{FN}}$ the false-negative event where at least one of the $V$ signature certification outputs \textsc{Failed} but there is no adversarial tampering (no bits were truly altered) (False Negative). Then
\begin{align}
\Pr[\Omega_{\mathrm{FP}}]\leq e^{-\alpha V}+\delta_1, \qquad \text{and} \qquad 
\Pr[\Omega_{\mathrm{FN}}]\leq \delta_2 V.
\end{align}
\end{lemma}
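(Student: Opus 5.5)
The two bounds are handled separately. For the false-positive bound we condition on whether a tampered bit lands in the checked set. For the false-negative bound a union bound over the $V$ checked honest bits suffices.

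\emph{False positive.} Suppose the adversary has altered a set $\mathcal{B}\subseteq[M]$ with $|\mathcal{B}|\ge \alpha M$. Let $E$ be the event that the uniformly random subset $\mathcal{V}$ of size $V$ misses $\mathcal{B}$ entirely. Then
\[
\Pr[\Omega_{\mathrm{FP}}]\le \Pr[E]+\Pr\bigl[\Omega_{\mathrm{FP}}\mid \neg E\bigr].
\]

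The first term is a hypergeometric tail. Sampling without replacement gives
\[
\Pr[E]=\prod_{t=0}^{V-1}\frac{M-|\mathcal{B}|-t}{M-t}\le \Bigl(1-\frac{|\mathcal{B}|}{M}\Bigr)^{V}\le (1-\alpha)^V\le e^{-\alpha V}.
\]
Each factor is at most $1-|\mathcal{B}|/M$ because $(M-b-t)/(M-t)$ is decreasing in $t$. The last step uses $1-x\le e^{-x}$.

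For the second term, on $\neg E$ at least one altered bit $j\in\mathcal{V}\cap\mathcal{B}$ is checked. By hypothesis that certification outputs \textsc{Failed} with probability at least $1-\delta_1$. So the probability that none of the $V$ checks fails is at most the probability that this particular check passes, which is at most $\delta_1$.

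One subtlety is that $\mathcal{V}$ is chosen independently of the shadow data. The per-bit guarantee $1-\delta_1$ therefore holds for each fixed altered index, in particular for, say, the smallest index in $\mathcal{V}\cap\mathcal{B}$. This yields $\Pr[\Omega_{\mathrm{FP}}]\le e^{-\alpha V}+\delta_1$.

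\emph{False negative.} With no tampering, each of the $V$ checked bits is honest and independently fails with probability at most $\delta_2$. A union bound over the $V$ checks gives $\Pr[\Omega_{\mathrm{FN}}]\le V\delta_2$.

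The only mild obstacle is the hypergeometric bound. It is handled by the monotonicity of the factors shown above, or equivalently by noting that sampling without replacement is dominated by sampling with replacement for the all-miss event.
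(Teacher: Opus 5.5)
Your proposal is correct and follows the paper's proof essentially step for step: the same split on whether the random check set $\mathcal{V}$ hits an altered bit, the same product bound $\prod_{t}(M-|\mathcal{B}|-t)/(M-t)\le(1-\alpha)^V\le e^{-\alpha V}$ for the all-miss event, and a $\delta_1$ bound once an altered bit is checked. The one difference is in the false-negative part. The paper writes $1-(1-\delta_2)^V\le\delta_2 V$, which implicitly uses independence of the $V$ checks, whereas your union bound reaches $\delta_2 V$ without it. Your version also quietly corrects the paper's typos: $\Omega_{\mathrm{adv}}$ should mean at least $\alpha M$ altered bits, and the conditioning should be on $n_{\mathrm{altered}}\ge 1$.
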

\begin{proof}
Let $\Omega_{\mathrm{adv}}$ denote the event that an adversary alters at most $\alpha M$ bits and let $n_{\text{altered}}$ denote the number of altered bits in the randomly chosen validation set. The probability that the validation set will contain none of the ``altered" bits is:
\begin{align}
\Pr[n_{\mathrm{altered}}=0\vert\Omega_{\mathrm{adv}}]&\leq\frac{(1-\alpha)M}{M}\times\frac{(1-\alpha)M-1}{M-1}\times\cdots\times\frac{(1-\alpha) M-V+1}{M-V+1} \leq (1-\alpha)^V\leq e^{-\alpha V}.
\end{align}
On the other hand, if $n_{\mathrm{altered}}\geq1$, then the probability that at least one of the $V$ signature certification outputs \textsc{Failed} is at least $1-\delta_1$. Denote the number of \textsc{Failed} signatures as $n_{\textsc{Failed}}$. Overall,
\begin{align}
\Pr[\Omega_{\mathrm{FP}}]=& \Pr[n_{\mathrm{altered}}=0\vert\Omega_{\mathrm{adv}}]+\Pr[n_{\textsc{Failed}}=0\vert n_{\mathrm{altered}}\geq 0\wedge\Omega_{\mathrm{adv}}]\cdot\Pr[n_{\mathrm{altered}}\geq0\vert\Omega_{\mathrm{adv}}] \leq e^{-\alpha V}+\delta_1.
\end{align}

In the event $\Omega_{\mathrm{honest}}$ where there is no adversary (no bits altered), the probability that at least one signature outputs \textsc{Failed} is
\begin{align}
\Pr[\Omega_{\mathrm{FN}}]=\Pr[n_{\textsc{Failed}}\geq1\vert\Omega_{\mathrm{honest}}]=1-(1-\delta_2)^V \leq \delta_2 V.
\end{align}
\end{proof}

\begin{corollary}\label{cor:FP_FN_requirements}
For a message with $M$ encoded bits, to achieve $\Pr[\Omega_{\mathrm{FP}}]\leq\varepsilon$ and $\Pr[\Omega_{\mathrm{FN}}]\leq\varepsilon$ under an adversary that alters at least $\alpha M$ bits,
having a single-bit digital signature protocol with $\delta_1\leq\frac{\varepsilon}{2},\delta_2\leq \frac{\varepsilon}{V}$ and a verification fraction $V\geq \frac{\ln(2/\varepsilon)}{\alpha}$ is sufficient.
\end{corollary}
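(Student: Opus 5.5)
The plan is to plug the stated parameter choices directly into the two bounds of Lemma~\ref{lem:FP_FN_prob} and check that each bound is at most $\varepsilon$. No new probabilistic argument is needed: Lemma~\ref{lem:FP_FN_prob} already reduces both failure probabilities to elementary expressions in $\alpha$, $V$, $\delta_1$ and $\delta_2$.

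\textbf{False positive.} Lemma~\ref{lem:FP_FN_prob} gives $\Pr[\Omega_{\mathrm{FP}}]\le e^{-\alpha V}+\delta_1$, and we split the budget $\varepsilon$ evenly between the two terms.
\begin{itemize}
\item Since $V\ge \ln(2/\varepsilon)/\alpha$ and $e^{-x}$ is decreasing, we get $e^{-\alpha V}\le e^{-\ln(2/\varepsilon)}=\varepsilon/2$.
\item The assumption on the single-bit protocol gives $\delta_1\le\varepsilon/2$.
\end{itemize}
Adding these yields $\Pr[\Omega_{\mathrm{FP}}]\le\varepsilon$.

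\textbf{False negative.} Lemma~\ref{lem:FP_FN_prob} gives $\Pr[\Omega_{\mathrm{FN}}]\le\delta_2 V$. The choice $\delta_2\le\varepsilon/V$ then gives $\Pr[\Omega_{\mathrm{FN}}]\le\varepsilon$ immediately.

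\textbf{Consistency of the regime.} Nothing here is a genuine obstacle, but two points about the hypotheses of Lemma~\ref{lem:FP_FN_prob} need checking.
\begin{itemize}
\item The adversary must alter at least $\alpha M$ bits. This is exactly the hypothesis of the corollary.
\item We need $V\le M$ so that the spot-check set exists. This is implicit whenever $M\ge\ln(2/\varepsilon)/\alpha$.
\end{itemize}
The per-bit requirements $\delta_1\le\varepsilon/2$ and $\delta_2\le\varepsilon/V$ can be met by a single shadow-overlap protocol with failure parameter $\min(\varepsilon/2,\varepsilon/V)$. Its sample cost grows only as $\ln(1/\delta)$, so these requirements are achievable at the cost of logarithmic factors.
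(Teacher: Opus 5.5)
Your proposal is correct and follows the paper's proof: both substitute the parameter choices into the two bounds of Lemma~\ref{lem:FP_FN_prob}. The false-positive budget is split as $e^{-\alpha V}\le\varepsilon/2$ (from $V\ge\ln(2/\varepsilon)/\alpha$) plus $\delta_1\le\varepsilon/2$, and $\delta_2 V\le\varepsilon$ handles false negatives; your write-up states these inequalities in the correct direction, whereas the paper's proof writes them reversed (e.g.\ $\varepsilon\le e^{-\alpha V}+\delta_1$ and $\varepsilon/2\le e^{-\alpha V}$).
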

\begin{proof}
By Lemma \ref{lem:FP_FN_prob}, we require $\delta_2\leq \frac{\varepsilon}{V}$ by the false negative condition. Similarly, by the false positive condition, we require $\varepsilon\leq e^{-\alpha V}+\delta_1$. If $\delta_1\leq\varepsilon/2$, then
$\varepsilon/2\leq e^{-\alpha V}\Rightarrow V\geq {\ln(2/\varepsilon)}/{\alpha}$.
\end{proof}

Finally, we show Protocol \ref{prot:multi_bit_digital_signature} is secure with constant verification under reasonable assumptions.

\begin{theorem} \label{thm:signatures_ecc}
    If there exists a linear $[M, k, d]$ error detection code with relative distance $d_r = d/M$, the multi-bit signature Protocol 2 has false positive and false negative probabilities of at most $\varepsilon$. In particular, it requires $2M$ shadow-circuit pairs to use as public/secret keys, $V = \max\{2, d_r^{-1}\ln(2/\varepsilon)\}$ of which are checked by the verifier. Each signature check must output \textsc{Failed} on an altered bit with probability $1-\delta_1 \geq 1-\varepsilon/2$, and it must output $\textsc{Certified}$ on an honest bit with probability $1-\delta_2 \geq 1-\varepsilon/2$.
\end{theorem}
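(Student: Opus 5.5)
The plan is to reduce Theorem~\ref{thm:signatures_ecc} to Lemma~\ref{lem:FP_FN_prob} and Corollary~\ref{cor:FP_FN_requirements} by identifying the right fraction $\alpha$. I will treat the resource count and the false-positive bound directly, and handle the false-negative bound last, since that is where the stated hypothesis does not obviously suffice.

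\textbf{Resource count.} The first step is immediate from $\mathsf{SKGen}$ and $\mathsf{PKGen}$ of Protocol~\ref{prot:multi_bit_digital_signature}. For each position $j\in[M]$ and each $b\in\{0,1\}$ one circuit $C^{b,j}$ and its shadows $\{o_i^{b,j}\}$ are produced. That gives $2M$ shadow--circuit pairs.

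\textbf{Every forgery alters at least $d_r M$ bits.} Suppose an adversary outputs a signed message that passes $\mathsf{Chk}$ but decodes to a message different from the honest one. Then its encoded string $m_{\rm e}'$ is a codeword distinct from the honest codeword $m_{\rm e}$. Since the code has minimum distance $d$, the two differ in at least $d=d_r M$ positions. At each such position $j$ the adversary must supply a circuit for the bit $m_{\rm e}'^{\,j}$, whose secret key $C^{m_{\rm e}'^{\,j},j}$ it was never given. By the single-bit analysis (Theorem~\ref{thm:noiseless_single_bit_security} and its noisy corollary), that circuit fails certification with probability at least $1-\delta_1$. Any adversary that does not change the decoded message is harmless. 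So every relevant adversary alters at least $\alpha M$ bits with $\alpha=d_r$, and the hypotheses of Lemma~\ref{lem:FP_FN_prob} hold with this $\alpha$.

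\textbf{False positives.} Lemma~\ref{lem:FP_FN_prob} gives $\Pr[\Omega_{\mathrm{FP}}]\le e^{-d_r V}+\delta_1$. With $V\ge d_r^{-1}\ln(2/\varepsilon)$ the first term is at most $\varepsilon/2$, and with $\delta_1\le\varepsilon/2$ the total is at most $\varepsilon$, exactly as in Corollary~\ref{cor:FP_FN_requirements}. The $\max\{2,\cdot\}$ in the choice of $V$ only enforces a nontrivial lower bound on the number of spot checks when $d_r^{-1}\ln(2/\varepsilon)<2$. A larger $V$ never hurts the false-positive bound, so this branch is consistent.

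\textbf{False negatives (main obstacle).} Lemma~\ref{lem:FP_FN_prob} only gives $\Pr[\Omega_{\mathrm{FN}}]\le 1-(1-\delta_2)^V\le\delta_2 V$. With $\delta_2\le\varepsilon/2$ this yields $\varepsilon V/2$, which exceeds $\varepsilon$ once $V>2$. To stay with the statement as worded, I will read ``each signature check must output \textsc{Certified} on an honest bit with probability $1-\delta_2\ge1-\varepsilon/2$'' as a requirement on the single-bit checks as they are run inside the $V$-fold verification. Concretely, the requirement is that the honest verification as a whole errs with probability at most $\varepsilon/2$.

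This requirement can be met at only logarithmic cost. Each single-bit check is instantiated with a shadow overlap protocol whose certification failure probability is $\delta_2'=\varepsilon/(2V)$. Since the sample complexity scales as $\ln(1/\delta)$ (Corollary~\ref{cor:sample_complexity_soundness}), this raises $T$ only by an additive $O(\ln V)$ in the log factor. A union bound over the $V$ honest checks then gives a total honest failure probability of at most $V\delta_2'=\varepsilon/2\le\varepsilon$.

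The hard part of the proof is precisely this reconciliation of the per-check $\delta_2$ with the union bound over $V$ checks. I would state this interpretation explicitly and check that it leaves the false-positive budget unaffected: tightening $\delta_2$ only increases $T$ and does not change $\delta_1$.
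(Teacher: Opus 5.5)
Your proposal is correct and follows the paper's own route. Like the paper, you note that any accepted forgery must be a distinct codeword and hence alter at least $d$ bits, then reduce to Lemma~\ref{lem:FP_FN_prob} and Corollary~\ref{cor:FP_FN_requirements} with $\alpha=d_r$ (the paper's condition $V\geq\ln(2/\varepsilon)/(d_rM)$ is a typo for your $d_r^{-1}\ln(2/\varepsilon)$). The false-negative discrepancy you flagged is genuine, since $1-(1-\varepsilon/2)^V>\varepsilon$ for $V\geq 3$ and small $\varepsilon$; the paper's proof in fact silently uses the Corollary's per-check requirement $\delta_2\leq\varepsilon/V$ rather than the stated $\varepsilon/2$, so your $1/V$-scaled honest error, at logarithmic cost in $T$, is exactly the repair the argument needs.
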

\begin{proof}
 If the adversary alters fewer than $d$ bits, the tampering can be detected using the error detection code. If an adversary tampers at least $d$ bits, by Corollary \ref{cor:FP_FN_requirements}, such an adversary may be detected with  $\delta_1\leq\frac{\varepsilon}{2},\delta_2\leq\frac{\varepsilon}{V}$ provided that $V \geq \frac{\ln(2/\varepsilon)}{d_rM}$. Taking $V$ at least 2 ensures that the confidence bound holds even for $\epsilon$ near 1.
\end{proof}

We saw how good error correcting codes could be used in the digital signatures protocol to avoid certifying a large fraction of the message. The key properties we would like in such a code are constant rate and relative distance as the block size increases. Furthermore, we would naturally like efficient encoding and decoding. 
Thankfully, there exists a restricted ensemble which retains good performance and has structure which allows for efficient decoding. Introduced by Robert Gallager in a 1963 PhD thesis~\cite{gallager1962low}, low density parity check (LDPC) codes are classical linear codes defined by a parity check matrix $H$ which is sparse, in the sense that (a) few variables are assigned to each check and (b) each variable participates in a small number of checks. See~\cite{mezard2009information} for a pedagogical introduction and~\cite{eczoo_ldpc} for specific code properties.

    As stated above, LDPC codes form a very large and imprecise class of sparse codes. For our purposes, it will suffice to consider a specific class known as \emph{Gallager's ensemble}, where the parity check matrix $H$ is sampled uniformly at random among ``regular" LDPC codes. Here regular means each check has fixed number of variables (degree) $k$, and similarly each variable has fixed degree $\ell$. Note that $m k = n \ell$, where $m$ is the number of checks and $n$ the number of variables.

    Let $M$ be the block size and $N \leq M$ the message size, with rate $r = N/M$, for a classical code $H \sim \cal{G}(k,\ell)$ sampled from Gallager's ensemble of degree $k$ checks and degree $\ell$ variables. It is conjectured~\cite{zyablov1975estimation} that the maximum efficiently decodable error weight $w$ is given by
    \begin{equation}
        \frac{w}{M} = (1 + o_k(1)) \rm{H}_2^{-1}\left(\frac{c^*(1-r)}{k}\right)
    \end{equation}
    where $o_k$ denotes asymptotics with respect to $k$, $c^*$ is a universal constant, and $\rm{H}_2^{-1}:[0,1]\rightarrow [0,1/2]$ is the inverse of the binary Shannon entropy. For fixed $k$ and $r$, we thus see that, under this conjecture, the fraction of efficiently correctable errors in the message $M$ remains fixed, even for constant rate. Moreover, it was shown in~\cite{gallager1962low} that the typical minimum distance of $H \sim \cal{G}(k,\ell)$ scales linearly in the block size $M$. Thus, when LDPC codes are used in the digital signatures protocol, one only need check $\gamma M = d_r^{-1} \ln(2/\epsilon) = O(\log(1/\epsilon))$ bits per~\Cref{thm:signatures_ecc}. Thus, for such codes the number of verifications scales only with respect to the desired confidence, and logarithmically.

\newpage
\section{Evidence for computational no-learning} \label{sec:learning_alg}

Section \ref{sec:protocol} above formally introduces the proposed protocol for quantum digital signatures, and analyzes its security under the assumption of Computational No-Learning (CNL). 
This section discusses existing, and provides new, evidence for the CNL assumption within the setting of shallow quantum circuits. 

In order to falsify CNL for a given family of random quantum circuits and thereby compromise our proposed quantum digital signatures protocol, an adversary must have access to a circuit learning algorithm $L$ which satisfies the following requirements.
\begin{enumerate} [label=(\alph*)]
    \item The learning must be \emph{proper}: $L$ must return a circuit $C$ within the circuit family $\cal{F}$ agreed upon by all parties. \\[-1em]
    \item $L$ must work for arbitrary circuit geometries, including circuits designed for devices with all-to-all connectivity. \\[-1em]
    \item $L$ must learn given only access to the  classical shadow measurement results published by the communicating parties.
\end{enumerate}
As we discuss in Sec.~\ref{subsec:failure_of_known_algos_for_learning}, there is no known circuit learning algorithm that satisfies even two of the three requirements. In Sec.~\ref{subsec:new_learning_alg}, we introduce an algorithm satisfying the first two requirements. However, there is no clear path to extending it to satisfy all three requirements as it requires a much stronger access model. Finally, Sec.~\ref{subsec:security_against_new_learning_algo} we show that the circuits used in our experiment are secure against this new algorithm \emph{even if adversary is allowed stronger access}. In combination, this suggests that major algorithmic breakthroughs are required to compromise the security of the our protocol.

\subsection{Failure of known quantum algorithms for learning shallow quantum circuits}\label{subsec:failure_of_known_algos_for_learning}

At present, there is no known algorithm for learning shallow quantum circuits which satisfies property (c) above: learning from access to classical shadow measurements and a fixed input state. This rules out any immediate challenges to the protocol. However, there is also no obvious barrier to the development of such algorithms, and the study of learning shallow circuit has only been initiated relatively recently. To assess the potential for effective learning algorithms to break the quantum signatures protocol, here we consider algorithms for closely related circuit learning scenarios. Even allowing this flexibility, we will see that these algorithms fail our stringent learning requirements, and overcoming any of them would require major algorithmic developments or conceptual breakthroughs. 

Most existing algorithms for learning shallow circuits take advantage of the fact that each qubit at input has causal influence over a small subset of qubits \cite{huang2024learning,landau2025learning, fefferman2024anti}. Such causal structure, for circuits with interaction locality (each gate has bounded number of input qubits), tends to break down at logarithmic depth or greater. This causes this class of algorithms to fail on circuits of depth $O(\log^2 n)$ and higher. However, there are several reasons to consider circuit families which are as shallow as possible, yet hard to learn. Besides reducing computational footprint, using shallower circuits as the private key brings quantum digital signatures closer to the realm achievable by near-term quantum hardware. Additionally, it is desirable to delineate the precise transition into CNL with respect to depth, both for its own sake and for assessing how secure a family of circuits is for the protocol. 

The work of~\cite{huang2024learning} initiated the recent studies of learning shallow circuits, showing how circuits $C$ of constant depth, with arbitrary geometry, can be learned in diamond distance in polynomial time, producing a circuit representation $C'$ that is also depth $O(1)$. However, the algorithm requires either unitary query access to $C$ or to randomized measurements over randomized (tomographically complete) inputs. This fails requirement (c) in several ways, in particular, requiring access to variable initial states. Moreover, the algorithm is an improper learner and fails point (a): the representation circuit $C'$ has additional qubits and longer depth compared to $C$ itself. Thus, this algorithm only satisfies point (b) above.

A more suitable algorithm, which learns state preparation circuits is given by~\cite{landau2025learning,kim2024learning}. Here the task is more aligned with digital signatures, since one is required only to produce a circuit $C'$ such that $C'\ket{0}\approx C\ket{0}$. However, these algorithms also fail to be proper (requirement (a)), requiring ancillary qubits and greater circuit depth than the original. Moreover, unlike the work of~\cite{huang2024learning}, the algorithm is limited to geometrically local circuits, and thus does not satisfy property (b). However, (c) is closer to being satisfied compared with the aforementioned algorithm, since the learner only requires a fixed input state.

To summarize, all but one existing results on learning shallow circuits give \emph{improper} algorithms, failing to learn circuit representations within the class $\cal{C}_{n,d}$ of $n$-qubit $d$-depth circuits. While circuit compression schemes could conceivably reduce improperly-learned circuits to proper ones, such computations ought to be generally expensive (though there appears few concrete statements theoretically towards this). To our knowledge, the only competitive algorithm for proper learning of shallow unitary quantum circuits comes from~\citeauthor{fefferman2024anti}~\cite{fefferman2024anti}. There, it was shown that brickwork random circuits with discretized Haar random gates can be efficiently and properly learned up to logarithmic depth. This impressive performance comes at the price of loss of generality: a gate set is specified and the architecture (brickwork) is known to the learner a priori. While the authors recognized that the method can be generalized to circuits with ``good lightcone properties," the full dimensions of this statement have not been explored. Moreover, the algorithm fails to satisfy point (c) above, requiring query access to $C$ so that the learner can test cause-effect relationships with changes to the input. 

Taken with the condition of proper learning, the need for strong access to $C$ seems to eliminate all existing candidates for learning shallow circuits in our protocol. While this already provides evidence in favor of CNL, we further strengthen our claims of security by assuming a hypothetical learner that has the appropriately strong access and devising an extension of the proper learning algorithm of Ref.~\cite{fefferman2024anti}.

\subsection{Proper learning algorithm for all-to-all shallow circuits} \label{subsec:new_learning_alg}

In pursuit of better learning algorithms, and a better understanding of thee hardness of learning under relevant settings, we extend \citeauthor{fefferman2024anti} to accommodate more flexible situations and architectures, such as all-to-all circuits. We choose the algorithm of~\cite{fefferman2024anti} as a starting point because, despite the overly strong circuit access not available in our setting, this algorithm is closest to the type needed to properly learn shallow circuits. It is plausible that an adversary may have knowledge of the circuit architectures used in the signature protocol, without knowing specific gates. Additionally, we note that hardness results derived under stronger access conditions necessarily hold in the weaker case of practical relevance. These results on learning may be of interest beyond cryptographic applications, and additional details of the method will be the subject of a forthcoming paper. 

Because the core idea of the method is simple, we start with a description before moving into technical conditions for success. Most existing learning algorithms for shallow circuits attempt to exploit what is perhaps the key feature induced by shallowness: limited causal influence. Let $Q$ be the collection of $n$ qubits in a circuit. A unitary quantum circuit $C$ naturally defines a directed, acyclic graph, with vertices representing gates and edges representing qubit wires, with specific qubit label. This graph defines causal relationships between qubits according to path connectivity. A forward causal set $\vec{q}_\ell \subseteq Q$ for qubit $q \in Q$ is the collection of qubits which are contained in some path originating at $q$ at layer $\ell \in [d]$. Likewise, a backward causal set $\backvec{q}_\ell$ can be defined in the reverse sense. Forward and backward causal sets generally offer independent sources of information about causal connectivity in the circuit graph. In principle, any $q' \in \vec{q}$ can be influenced by changes in the state of $q$ at input. 

The original causal, or ``light cone," approaches to learning shallow circuits are due to~\cite{huang2024learning} and~\cite{landau2025learning}. These approaches utilize the low operator growth of local observables to learn their time evolution and, from a ``stitching" procedure, produce a single circuit which implements each Heisenberg evolution. Although~\cite{fefferman2024anti} also uses a light cone argument, their method instead utilizes architecture knowledge to detect whether connections have been ``broken" by local gate inversions at the front (or back) of the circuit. To illustrate, consider the setup of Fig. \ref{fig:light_cone}. There is a gate $G$ at the beginning of the circuit such that, when $G$ is present, $q$ and $q'$ are path connected, but when removed, $q, q'$ are disconnected. We call such a gate a \emph{pivot gate}. While~\cite{fefferman2024anti} considered brickwork circuits in particular, this method can be generalized to various architectures, even geometrically nonlocal ones. In an analogous manner, pivot gates can be identified at the back of the circuit.
\begin{figure}
\includegraphics[width=\linewidth]{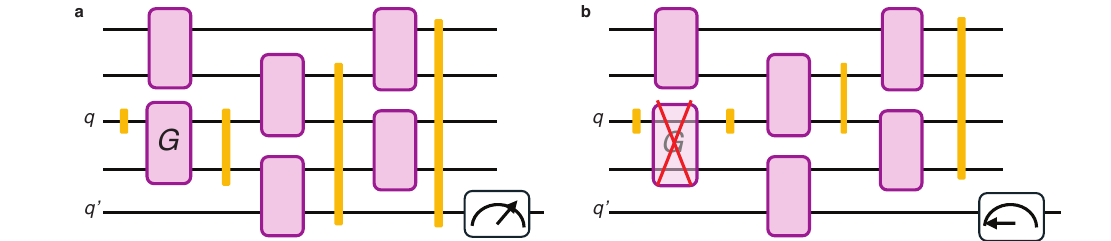}
    \caption{\textbf{Basic principle behind local inversion learning algorithm, using pivot gate $G$. a}, Qubit $q$ has a forward light cone, shown with yellow lines, determined by the gate layout of the circuit. Because qubit $q'$ is contained in this light cone, effects from $q$ can be measured, in principle, at output. \textbf{b}, When gate $G$ is removed, however, $q$ at input does not connect to $q'$ at output, as shown by the new light cones. Measurements observe no changes when the initial state of $q$ is modified. In some cases, this is a computationally efficient observable that can allow for effective learning at shallow depths.}
    \label{fig:light_cone}
\end{figure}

Given the circuit graph (``gate layout") of $C$, the learner's strategy is to identify the location of a pivot gate $G$, prepend (or append, for $G$ in back) to $C$ some ``trial inverse" $G'^\dagger$, and test if $q'$ is no longer causally influenced by $q$. More precisely, one seeks to induce an approximate factorization $G'^\dagger G \approx U_1 \otimes U_2$ for single qubit $U_1, U_2$. This event is detected through single-qubit tomography between an input qubit $q$ and output $q'$. More precisely, for distinct choices $\rho, \rho'$ of state $q$ at input, one checks whether the corresponding output states $\pi, \pi'$ are either (a) identical and $\dist(\pi, \pi') = 0$, or (b) distinct by some threshold $g>0$ so that $\dist(\pi, \pi') > g$. Fig.~\ref{fig:single_inversion} describes this procedure \texttt{SingleInvert}, which is a generalization of Algorithm 2 from~\cite{fefferman2024anti}. 
\begin{figure}
    \centering
    \begin{algorithm}[H]
        \nlnonumber\SingleInvert \;
        \Input{Black-box query access to $U_C$. Labels for input/output qubits $\partial G = \{q, q_2\}$ of pivot gate $G \in \bb{U}(4)$, and corresponding pivot qubits $(q,q')$.}
        \Output{Description of gate $G'$ such that $G'^\dagger G \approx U_1\otimes U_2$.}
        \nlnonumber\Parameters{Factorization precision $\epsilon_f$, tomography precision $\epsilon_t$.}
        \BlankLine
        \For{$G'$ in trial gate set on $\{q,q_2\}$}{
            \eIf{$G$ is front gate}{
                $U \gets  U_C G'^\dagger$\;
            }{
                $U \gets G'^\dagger U_C$\;
            }
            \For{$(\sigma, \sigma_2)$ in $\{X,Y,Z\}\times \{I,X,Y,Z\}$}{
    
                \BlankLine
                initialize $Q \gets \ket{0}^{\otimes n}$\;
                set $q \gets (I+\sigma)/2$ and $q_2 \gets(I+\sigma_2)/2$\;
                $\pi \gets$ SingleQubitTomography$(q', U, Q, \epsilon_t)$
                \BlankLine
                initialize $Q \gets \ket{0}^{\otimes n}$\;
                set $q \gets (I-\sigma)/2$ and $q_2 \gets(I+\sigma_2)/2$  \Comment*[f]{Same but with changed state of $q$}\;
                $\pi' \gets$ SingleQubitTomography$(q', U, Q, \epsilon_t)$\;
                \BlankLine
                \If{$\dist(\pi, \pi') < \epsilon_f$}{
                    return $G'$\;
                }
            }
        }
        \textbf{Protocol fails if this point reached}\;
    \end{algorithm}
    \caption{Single gate inversion heuristic. The meta-algorithm iterates over a set of trial local inversions until one of them breaks the connection between $q$ and $q'$ induced by $G$. Detecting this breaking is done through the SingleQubitTomography protocol, which tests whether $q'$ is affected by changes in the state of $q$ at input.}
    \label{fig:single_inversion}
\end{figure}

A full circuit learning protocol can be built from recursive calls to \texttt{SingleInvert}. For each successful inverse $G'^\dagger$ of $G$, we add $G'$ to our circuit $C'$ under construction and prepend $G^\dagger$ to our calls to $U_C$. Some technical obstacles must be dealt with, such as the possibility of consecutive gates with identical support. This thwarts a naive approach to learning because removing only one of these gates does not change the causal structure. The natural solution is to treat all such sequences as a single composite gate. After this compressed circuit has been learned, arbitrary gate insertions can be used to match the original architecture, if desired. Fig.~\ref{fig:forward_learn} provides pseudocode for the full heuristic algorithm.
\begin{figure} 
    \centering
    \begin{algorithm}[H]
        \nlnonumber\ForwardLearn \;
        \Input{Black-box query access to $U_C$. Circuit graph for $C$.}
        \Output{Description of circuit $C'$ (circuit graph with labelled gates) with same circuit graph as $C$.}
        \nlnonumber\Parameters{Factorization precision $\epsilon_f$, tomography precision $\epsilon_t$.}
        \BlankLine
        $\texttt{Compress}(C)$\Comment*[f]{Group together consecutive gates, if exist}\;
        Compute all forward causal sets for $C$.\;
        partial\textunderscore only $\gets$ false\;
        \For{$G$ in $C$(front:back)} {
            \If{$G$ is not pivot in $C$} {
                partial\textunderscore only $\gets$ true\;
                break\Comment*[f]{Continue to invert what is possible}\; 
            }
            Find pivot pair $(q,q')$ for $G$, and $\{q,q_2\} = \partial G$\;
            $G' \gets \SingleInvert(\partial G, C, q')$\;
            $C$.prepend($G'^\dagger$) and $C'$.append($G'$)\;
        }
        Invert trivial parts of remaining circuit (if applicable, e.g. single-qubit rotations).\;
        \BlankLine
        \If{partial\textunderscore only} {
            print: Warning, only partial inversion possible.\;
        }
        \BlankLine
        \texttt{Decompress}($C'$) \Comment*[f]{Arbitrarily decompose any grouped gates from \texttt{Compress}}\; 
        return $C'$\;
    \end{algorithm}
    \caption{Forward-only learning protocol for a unitary circuit $C$, given oracle access and the circuit graph but not the gates themselves. After computing the causal sets of each qubit, and pre-conditioning the circuit structure, one iterates the \texttt{SingleInvert} protocol with parameters $\epsilon_t, \epsilon_f$ depending on specific settings. With each successful inversion, $G$ is removed from $C$'s circuit graph, and the unitary $G'$ is added to the result $C'$. Final decompression consist of choosing arbitrary sequence for $G'$ to match the layout of uncompressed $C$. We assume each \texttt{SingleInvert} is successful.}
\label{fig:forward_learn}
\end{figure}

\subsubsection{Requirements}
Our description of the learning approach omits important details for implementation, such as the set of trial gates $G'^\dagger$ or factorization precision $\epsilon_f$. These will depend on the class of circuits, and learning will not always be tractable. However, two characteristics are necessarily required for learning:

\begin{enumerate}%
    \item \textbf{Good light cones:} At each stage of the algorithm, there must exist a pivot gate. 
    \item \textbf{Strong signal propagation:} For pivot qubits $(q, q')$, it must be possible to determine if $q'$ belongs to $q$'s light cone using local measurements.
\end{enumerate}
The first property more explicitly means that, after each successful local inversion, there must be some outer gate $G$ and qubits $q, q' \in Q$ such that $q$ connects to $q'$, but only through $G$. This holds for many interaction-local unitary circuits, especially below depth $d = O(\log n)$, since below this depth the causal sets are proper subsets of $Q$ and additional casual connections are possible. If at any point during the simplification process, no outer gates are pivot gates (at either end of the circuit), then the learning algorithm fails. Given any circuit graph, one can check this property by efficiently checking the light cone properties classically, prior to attempting a learning protocol. 

Strong signal propagation can be formulated in the following way: let $\Phi_C$ be the single-qubit channel between pivot qubits $q, q'$ induced by $U_C$. For any input states $\rho, \rho'$ of $q$, we would like $\dist(\Phi_C(\rho), \Phi_C(\rho')) > G \dist(\rho, \rho')$ for some $G$ depending on properties such as graph distance between $q$ and $q'$. This may be hard to show formally in many cases, but for the case of Haar-random circuits such a bound was shown by~\citeauthor{fefferman2024anti}.
\begin{theorem}[Theorem 1.1 of~\cite{fefferman2024anti}] \label{thm:mixing_lower_bounds}
    Let $C$ be a random quantum circuit with a fixed architecture, where each gate is a $k$-qubit independent Haar random unitary. Let $q, q'$ be qubits that are $d$ distance apart in the circuit. Arbitrarily fix the inputs to $C$ except $q$, and let $\Phi_C$ be the channel from $q$ to $q'$ induced by $C$. Then for every $\gamma > 0$, with probability at least $1-\gamma$ over $C$ the following holds: For every two single-qubit states $\rho$ and $\rho'$,
    \begin{equation*}
        \norm{\Phi_\cal{C}(\rho) - \Phi_\cal{C}(\rho')}_\rm{F} \geq \norm{\rho - \rho'}_\rm{F} \cdot (2^{-d} \gamma)^{c_k}
    \end{equation*}
    where $c_k > 0$ is a constant that depends only on $k$ and $\norm{\cdot}_\rm{F}$ is the Frobenius norm.
\end{theorem}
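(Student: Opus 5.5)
We prove a quantitative anti-concentration statement for the linear map $\Phi_C$ restricted to traceless inputs, by induction along the causal path. Since $\Phi_C$ is a linear channel on a single qubit, $\Phi_C(\rho)-\Phi_C(\rho') = \Phi_C(\rho-\rho')$, and $\rho-\rho'$ is a traceless Hermitian operator, i.e.\ a real vector in the span of $\{X,Y,Z\}$. Writing $\Phi_C$ on traceless inputs as a $3\times 3$ real matrix $M_C$ (the unital part of the Bloch map, composed with projection of the output onto its Bloch vector; the trace-part of the output is constant), the claim reduces to a lower bound on the smallest singular value, $\sigma_{\min}(M_C)\ge (2^{-d}\gamma)^{c_k}$, with probability $1-\gamma$. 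Uniformity over all pairs $\rho,\rho'$ then comes for free, since a singular value bound is uniform over inputs.

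The key steps are as follows.

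\textbf{Step 1: polynomial structure.} Each entry of $M_C$ is a polynomial in the entries of the gates (and their conjugates) of degree $O(1)$ per gate along the backward light cone of $q'$. Hence $\det(M_C^{T}M_C)$ is a polynomial $P$ of degree $O(d)$ in the matrix entries of the $O(\text{poly})$ relevant gates, though only $d$ layers contribute along any path. More carefully, I would use a path decomposition: pick a shortest path $q=v_0\to v_1\to\dots\to v_d=q'$ through gates $G_1,\dots,G_d$.

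\textbf{Step 2: nonvanishing.} Show $P\not\equiv 0$ by exhibiting one choice of gates, e.g.\ SWAP-like gates along the path and identities elsewhere, that transports $q$ to $q'$ perfectly. Then $M_C$ is orthogonal and $P=1$.

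\textbf{Step 3: anticoncentration.} Apply a Carbery--Wright-type anticoncentration inequality for polynomials under log-concave measures, or its Haar-unitary analogue obtained through a Gaussian/QR parametrization. This gives $\Pr[|P|\le \epsilon\,\|P\|_2]\le C\,\deg(P)\,\epsilon^{1/\deg P}$. Since the degree grows with $d$, this naive route yields a bound exponential in $d$ in the wrong place, with $\epsilon \sim \gamma^{\deg}$.

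\textbf{Step 3 (refined): layer-by-layer induction.} To get the stated $(2^{-d}\gamma)^{c_k}$, I would instead proceed layer by layer. At each gate on the path, use a constant-degree polynomial anticoncentration bound: conditioned on the previous layers, a single Haar $k$-qubit gate keeps a constant fraction of the signal with failure probability $\gamma_i$, losing a factor $\gamma_i^{c_k}$. Choosing $\gamma_i=\gamma 2^{-i}$ and union bounding gives total failure probability at most $\gamma$. The product of the losses is $\prod_i(\gamma2^{-i})^{c_k}$, which is too small because it is quadratic in $d$ in the exponent. So I would rather bound the second moment $\mathbb{E}\|\Phi_C(\sigma)\|_F^2 \ge c^{d}$ by a Weingarten calculation, where each Haar gate contracts the signal by a constant factor (e.g.\ $1/(2^k+1)$-type). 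Combining this second-moment bound with a fourth-moment bound and Paley--Zygmund-type arguments, enhanced to small-ball estimates via the polynomial structure, controls the probability that $\|M_C\|$ is tiny.

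\textbf{Main obstacle.} The main obstacle is converting moment bounds for each fixed direction into a uniform smallest-singular-value bound with only polynomial dependence on $\gamma$. I would try a net over the Bloch sphere combined with Lipschitz control of $M_C$. Alternatively, I would apply Carbery--Wright to the degree-$O(1)$-per-gate determinant polynomial, using the fact that Haar measure on $\mathbb{U}(2^k)$ is the pushforward of a Gaussian. This yields $\Pr[\det\le t\,\mathbb{E}\det]\lesssim t^{1/\deg}$, and the $2^{-d}$ factor is absorbed from $\mathbb{E}\det\ge 2^{-O(d)}$.
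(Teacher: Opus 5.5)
The paper does not prove this statement; it quotes it from the cited work, whose main tool is a Carbery--Wright-type anticoncentration inequality for polynomials in Haar-random unitaries. So I judge your proposal on its own.

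\textbf{What is fine, and where the gap is.} Your reduction to the smallest singular value of the $3\times 3$ Bloch matrix $M_C$ is correct, as are the choice of a shortest path and the routing-permutation witness. But the quantitative step that produces $(2^{-d}\gamma)^{c_k}$ is never closed, and each route you offer for it fails.
\begin{itemize}
\item Carbery--Wright applied to $\det M_C$ gives exponent $1/\deg$, where $\deg$ is the \emph{total} degree. This is $\Theta(d)$ even after conditioning on the off-path gates; that each gate enters with degree $O(1)$ does not help. You get $|\det M_C|\gtrsim\gamma^{\Theta(d)}2^{-O(d)}$, the very failure you diagnosed in Step 3. For $d=\log n$ and $\gamma=1/\mathrm{poly}(n)$ this is quasi-polynomially rather than polynomially small.
\item Carbery--Wright also does not transfer directly to Haar measure through the Gaussian pushforward, since Gram--Schmidt or polar decomposition is not polynomial. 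A Haar-specific anticoncentration bound must be supplied; that is exactly the content of the cited work.
\item The second/fourth-moment plus Paley--Zygmund route only lower-bounds the probability that the signal is large, by at most a constant. For a product of $d$ roughly independent per-gate factors it is typically $e^{-\Omega(d)}$. It cannot give success probability $1-\gamma$ for arbitrary $\gamma$.
\item The net argument would need a per-direction small-ball exponent larger than $2$, which nothing in your plan provides.
\end{itemize}

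\textbf{The missing idea.} Combine per-gate anticoncentration through negative moments, not through a union bound over worst-case per-gate losses.

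Since $M_Cr=(f(r)-f(-r))/2$ for the affine Bloch map $f$ of $\Phi_C$, you have $\|M_C\|_{\mathrm{op}}\le 1$. Hence $\sigma_{\min}(M_C)\ge|\det M_C|$, and no net is needed.

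Condition on all off-path gates and inputs. Then $\det M_C$ is a polynomial of degree at most $3$ in the entries of each path gate $G_i$, and at most $3$ in their conjugates. A fixed-dimension Haar anticoncentration bound on $\mathbb{U}(2^k)$ gives $\Pr_U[|p(U)|\le\epsilon\sup_V|p(V)|]\le C\epsilon^{a}$. Hence, for $\lambda<a$, $\mathbb{E}_U|p(U)|^{-\lambda}\le K\,|p(S)|^{-\lambda}$ for every fixed unitary $S$, with $K,\lambda$ depending only on $k$.

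Integrate out $G_1,\dots,G_d$ one at a time, each time taking $S=S_i$, the routing permutation from your witness. This gives $\mathbb{E}|\det M_C|^{-\lambda}\le K^d|\det M_C(S_1,\dots,S_d)|^{-\lambda}=K^d$. The last equality holds for every choice of off-path gates: consecutive path gates share a wire that no other gate touches, and the path ends at the output of $q'$.

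Markov's inequality then yields $\Pr[|\det M_C|<(\gamma K^{-d})^{1/\lambda}]\le\gamma$. This is the claimed bound with $c_k=\max(1,\log_2K)/\lambda$.

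This also explains why your $\gamma_i=\gamma 2^{-i}$ scheme lost a factor of $d$ in the exponent. The moment bound pays only a typical constant factor $K$ per gate, plus a single $\gamma$ overall.
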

\noindent We observe that this non-mixing result is agnostic to the choice of architecture as long as it is fixed in advance. Because $G$ exponentially decreases in $d$ for such Haar random circuits, polynomial-time learning is limited to logarithmic depths. It is plausible that Haar random circuits have very strong mixing properties, and that other gate families would allow for even stronger degrees of influence. 

Finally, besides the above two ``essential" ingredients for learning, one must also properly consider the effects of small errors in the factorization $U_1 \otimes U_2$. If $G'^\dagger G$ is not exactly product structure, lingering causal influences will interfere with subsequent inversions. For continuous Haar random circuits, a simple error analysis conducted in~\cite{fefferman2024anti} shows that compensating for these errors leads to quasi-polynomial runtime. This is still sub-exponential and may plausibly be done faster than verification. Moreover, working with circuits that only have finite gate families can ameliorate this computational barrier to learning. Ultimately, we find this issue to be generally of less practical concern for effective learning than the previous two points. 

\subsection{Security against extended learning algorithm}\label{subsec:security_against_new_learning_algo} 
Of the conditions mentioned previously for successful learning by local inversion, perhaps the one most naturally in control of the communicating parties is choice of circuit graph. By choosing a circuit with poor light cones, this type of learning attack can be successfully thwarted. A natural choice is to ensure that the forward and backward light cones for each qubit $q \in Q$ contain every single qubit, even if several layers are removed. 

There are limitations imposed by interaction locality on how quickly such light cones can be generated. In particular, $k$-local circuits should require at least $\log_k(n)$ depth. With proper circuit design this can be shown to be sufficient. 
\begin{theorem} \label{thm:safe_circuit_log_depth}
    Fix an integer $k\geq 2$. For every $d \in \bb{Z}_+$, there exists a circuit $C_d$ with $n = k^d$ qubits, depth $d$, and $k$-local gates, such that for every qubit $q \in Q$, $\vec{q} = \backvec{q} = Q$. 
\end{theorem}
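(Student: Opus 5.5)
We construct $C_d$ explicitly by labelling the $n=k^d$ qubits with strings $x=(x_1,\dots,x_d)\in\{0,1,\dots,k-1\}^d$. This is the $k$-ary analogue of the butterfly (FFT) network. In layer $\ell\in[d]$ we place one $k$-local gate on each class of qubits that agree on every digit except the $\ell$-th. Each layer thus consists of $k^{d-1}$ gates with disjoint supports that partition $Q$, and the circuit has depth exactly $d$. The gates themselves are arbitrary; only the circuit graph matters for the causal sets.

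The core step is an induction on the number of layers. We will show that after layers $1,\dots,\ell$, the forward causal set of a qubit $x$ is exactly
\[
\{y\in Q : y_j=x_j \text{ for all } j>\ell\},
\]
which has size $k^\ell$. For $\ell=0$ this is just $\{x\}$. For the inductive step, suppose the causal set after $\ell-1$ layers is the set of $y$ agreeing with $x$ on digits $\ell,\dots,d$. Layer $\ell$ attaches to each such $y$ all qubits differing from $y$ only in digit $\ell$. The union of these is precisely the set of strings agreeing with $x$ on digits $\ell+1,\dots,d$. The reverse inclusion holds because paths only propagate along gate supports. Taking $\ell=d$ gives $\vec{q}=Q$ for every $q$.

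For the backward causal sets we apply the same argument to the reversed circuit. The layers are then applied in the order $d,d-1,\dots,1$, which is again a butterfly with the digits permuted, so the symmetric induction gives $\backvec{q}=Q$. We will also remark that the construction is tight: each layer multiplies the size of a causal set by at most $k$, which explains why depth $\log_k n$ is necessary.

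The main obstacle is bookkeeping with the definition of causal sets, in particular whether $\vec{q}_\ell$ is indexed by the starting layer. If the statement is read as requiring full light cones from every layer, it fails for a depth-$d$ circuit of this kind, and one would need to append extra layers. We will interpret it, as the theorem states, for the whole circuit starting at the input layer, and verify that the induction matches the path-connectivity definition given earlier in the paper.
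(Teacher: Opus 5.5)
Your proposal is correct and follows the paper's proof essentially verbatim: the same $k$-ary butterfly layering (one $k$-qubit gate per class of strings differing only in digit $\ell$), the same digit-by-digit growth of the forward causal set, and the same reversed-digit-order argument for the backward sets. Your explicit induction is slightly cleaner than the paper's wording, and correctly counts $k^{d-1}$ gates of size $k$ per layer where the paper's text misstates the block size. Your caveat that light cones are measured from the input, with extra appended layers needed for full coverage from every cut, matches the paper's own remark after the proof.
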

\begin{proof}
    We construct $C_d$ as follows. Label each qubit with a unique integer in $[n]$ represented in $k$-ary: $q = q_{d-1}\ldots q_1q_0$ ($q_j \in [k]$). Let $(L_\ell)_{\ell\in[d]}$ be a sequence of $k$-regular partitions of $[n]$, where a block $B \in L_\ell$ includes all $k^{d-1}$ qubits which match in every digit except the $\ell$th. Define the circuit $\cal{C}_d$ by assigning a $k$-qubit gate according to each block in the layer, and stacking the layers together in order, $L_1$ applied first. 

    Consider now a qubit $q=q_{d-1}\ldots q_1q_0$ at input, and consider the succession of causal sets $\vec{q}_{\ell}$ defined by paths from $q$ at input of length at most $\ell$. For $\ell=1$ $\vec{q}_{,1}$ contains all bitstrings which match all but the least significant digit $q_0$. After the second layer, $\vec{q}_{,2}$ contains all qubits which match $q$ except for the last \emph{two} digits. Continuing inductively, after $d$ layers, $\vec{q}_{,d} \equiv \vec{q}_0 = [n]$. Because $q$ was arbitrary, we see that $\vec{q} = Q$ for every $q$. 
    
    The backwards causal sets $\backvec{q}_{,\ell}$ out to depth $\ell$ can be determined in the same manner, but traversing the strings in the opposite direction starting from the most significant bit. We conclude all forward light cones and backward light cones, from input and output respectively, contain every other qubit. 
\end{proof}
\noindent The circuit constructed in the above proof is used to demonstrate the existence of $\log_k(n)$ depth circuits that thwarts the light cone learning algorithm, and is not actually used in the experiment. It obeys the connectivity of a hypercube, and has intermediate connectivity between geometrically local (for fixed dimension) and all-to-all architectures. However, there is no faster way to expand the light cones compared with the above example: at each stage the causal set multiplies by a factor of $k$. Finally, by appending to $C_d$ at least one layer, and maybe more for enhanced security, no causal boundary across layers is present. It is plausible that such a circuit is relatively secure against other conceivable learning attacks based on light cone arguments.

\subsubsection{Advanced considerations} 

For circuits with simple geometric structures, such as brickwork, there is no need to differentiate between removing gates at the beginning or the end due to the symmetry. For more general circuits, however, this symmetry is not present, and a combination of front and back learning can be strictly more powerful than one direction alone. In particular, it is possible that the removal of a gate at the end may cause a new gate at the beginning to become a pivot gate. This is illustrated in Fig. \ref{fig:iterative_learning}, demonstrating that alternating between forward and backward inversions is strictly more effective. However, such a protocol is obstructed from learning the circuit of~\Cref{thm:safe_circuit_log_depth} with appended layers, or any circuit with many layers of full light cone coverage. This is because no outer gates are pivots from the very beginning.

\begin{figure}
    \centering
    \includegraphics[width=\linewidth]{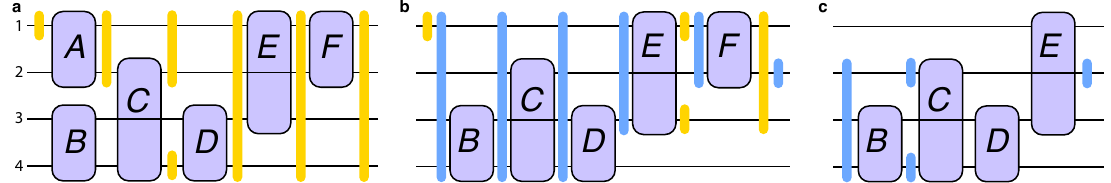}
    \caption{\textbf{Example of circuit learnable only through forward-backward alternation for learning. a,} the only outer gates are $A, B,$ and $F$. If removing gates is only allowed from the back, the algorithm will fail. This is because $F$ is not a pivot gate and the backward causal influence of any qubit is all qubits. Similarly, if the learning algorithm is only allowed to remove gates from the front, the algorithm will fail. Although one can remove pivot gate $A,$ (see \textbf{a,b} for changes in the forward causal influence of qubit $1$ before and upon removing $A$), $B$ becomes the remaining outer gate and is not a pivot. However, after removing $A$, the gate $F$ becomes a pivot gate, as is shown by the change in the backward causal influence of qubit $2$ before and after removing $F$ (compare \textbf{b,c}). Therefore, $F$ and can be removed from the back. After this, one can easily learn gate-by-gate.}
    \label{fig:iterative_learning}
\end{figure}

Even with full light cone coverage, more sophisticated and expensive learning attacks could be tried. For example, one could group together gates into a composite unitary on $k > 2$ qubits that does have good light cone properties. Then one could search for local inversions over the larger unitary space, in a brute force manner. Such a procedure is always guaranteed to succeed, since in the hardest case it reduces to full search over entire circuit. The possibility of gate grouping could require somewhat deeper circuits in order to thwart an adversary. However, we stress that, for our present digital signatures experiment, the adversary already has much weaker circuit access than considered in this section, and therefore these advanced learning techniques are not of primary concern.

\newpage 
\section{Gap between verification and learning}\label{sec:further_gap}
The evidence presented in Section \ref{sec:learning_alg} suggests that existing quantum algorithms for learning sufficiently deep random circuits require exponentially many queries of either the target quantum state or the unitary. As a result, access to polynomial-length classical shadows (without the ability to manipulate the input states) appears to render the task of learning circuit from shadows computationally expensive, if not intractable, for such circuits.

However, because signature verification itself requires exponential time, it is desirable that learning the circuit in question is not merely as difficult as verification, but rather exponentially harder. Formally, if $T_V$ denotes the time for verification, and $T_L$ the time for an adversary to learn the private circuit, one would like $T_L/T_V$ to be very large, e.g., $T_L/T_V = 2^{\Omega(n)}$. Below we provide two types of evidence for such a gap, first  by considering the limitations of so-called ``low-degree polynomial'' learning algorithms in Sec.~\ref{sec:supp_ldp_gap}, and second by considering oracle-based learning algorithms like those based on training variational circuits in Sec.~\ref{sec:supp_oracle_gap}.

\subsection{Gap due to learning hardness with low-degree polynomials}\label{sec:supp_ldp_gap}

One approach to establishing evidence for hardness of learning is to prove hardness for a class of algorithms that use \emph{low-degree polynomials}, where the output of the algorithm is a low-degree polynomial of the input. These algorithms are frequently analyzed in the context of high-dimensional statistical inference problems that seek conclusions about the distribution of high-dimensional data from some samples. In such problems, even when the data contains sufficient statistical information for inference, there may not exist an efficient algorithm to reach that decision. Such problems are said to exhibit an \emph{information-computation gap}. Low-degree polynomials provide a powerful framework of obtaining heuristic or rigorous results on the computational complexity of problems, and they capture a wide range of algorithms including spectral methods, approximate message passing, and local algorithms on graphs \cite{wein2025computational}.

In the quantum setting, we are frequently interested in learning properties of a quantum state by measuring copies. As before, even when polynomially many copies may provide enough information, there may not exist computationally efficient algorithms to learn such properties. A recent initial study of low-degree quantum learning algorithms is provided in Ref.~\cite{chen2025information}, where the authors show hardness of various learning tasks using such algorithms. In particular, they apply their framework to the digital signature of Ref.~\cite{fefferman2025hardness} and claim to prove that learning depth $\Tilde{\Theta}(k\log n)$ brickwork random circuits using degree $k$ polynomials only succeeds with an inverse superpolynomially small probability, which encompasses the learning algorithms discussed above from Refs.~\cite{huang2024learning,landau2025learning}. This holds in the usual low-degree model where $k=\omega(\log n)$ and the algorithm has polynomial time, hence polynomial time low-degree learning of $\mathrm{depth}\geq\log^2(n)$ brickwork random circuits is impossible. In fact, their argument is valid up to $k=\Theta(n)$.

\subsubsection{Relations between learning and hypothesis testing}\label{sec:learning_and_hypothesis_testing}

We first briefly recap the result in Ref.~\cite{chen2025information} showing that depth-($\log^2 n$) circuits cannot be learned using a polynomial-time low-degree algorithm. Consider algorithm $\mathcal{A}$ for learning quantum circuits $C\sim\mathcal{E}$ (with high probability) sampled according to some distribution $\mathcal{E}$, given $m$ copies of the state $C\vert0\rangle$. In particular, $\mathcal{A}$ performs single-copy projection-valued measurements (PVMs) on $C\vert0\rangle$, and the set of all PVMs captures all measurements that can be performed on the state without using ancillae. These measurements on the $m$ copies of $C\vert0\rangle$ could be non-adaptive, meaning they are independent from each other, or adaptive, meaning earlier measurement results may affect the choice of PVMs for later measurements. For the algorithm to be low-degree, the output of the algorithm must be a low-degree polynomial of the binary variables $b_{ij}$ corresponding to the $i$th bit of the $j$th measurement outcome. Fig.~1 and 2 in Section 6 and 7 of Ref.~\cite{chen2025information} provides an intuitive illustration of what we mean by a low-degree algorithm using single-copy PVMs.

To prove the hardness of learning, Section 8.2 of Ref.~\cite{chen2025information} compiles $\mathcal{A}$ into a hypothesis testing algorithm and shows that the related hypothesis testing problem cannot be solved using a low-degree polynomial algorithm. Specifically, they consider the following problem: given access to copies of an unknown $n$-qubit quantum state $\rho$ and given an ensemble $\mathcal{S}$ of structured states, distinguish between the following two cases:
\begin{itemize}
    \item Null case: $\rho=I/2^n$.
    \item Alternative case: $\rho\sim\mathcal{S}$.
\end{itemize}
To connect to quantum circuit learning, choose $\mathcal{S}=\{C\vert0\rangle\}_{C\sim\mathcal{E}}$.

If $\mathcal{A}$ learns a circuit $C_\mathrm{adv}$ such that $\Pr[\vert\langle0\vert C_\mathrm{adv}^\dagger C\vert0\rangle\vert^2\geq\varepsilon]\geq\delta$ over choices of $\rho \sim S$ and random outcomes $C_{\mathrm{adv}}$ of the algorithm, then the following algorithm succeeds at the above hypothesis testing problem. First, use $\mathcal{A}$ on some of the measurement samples to output $C_\mathrm{adv}$. Then, apply $C_\mathrm{adv}^\dagger$ to a copy of the state and measure if the resulting state $C_{\mathrm{adv}}^\dagger C \ket{0}$ is $\vert 0\rangle$. If $\ket{0^n}$ is measured, we output the ``Alternative Case''. From our construction of $\mathcal{A}$, if $\rho \sim \mathcal{S}$, with probability at least $\delta$, we have $\vert\langle0\vert C_\mathrm{adv}^\dagger C\vert0\rangle\vert^2\geq\varepsilon$ and the probability of measuring $\vert 0\rangle$ is at least $\varepsilon$. On the other hand, if $\rho = I/2^n$ (null case), the probability of measuring all zero is $1/2^n$. Overall, the probability that the alternative case is guessed correctly is at least $\delta\varepsilon$. The probability that the null case is accidentally guessed as the alternative case is $1/2^n$.

Therefore, the algorithm is said to achieve $\eta$-detection, where $\eta=\delta\varepsilon-1/2^n$. Here, $\eta$ is defined as the gap in the probability that the detection algorithm outputs $1$ in the null and alternative cases, and the algorithm is said to achieve weak detection if $\eta=\Omega(1)$. For constant $\varepsilon$, for low degree algorithms to achieve weak detection of the hypothesis testing problem, the low-degree learning algorithm $\mathcal{A}$ must satisfy $\Pr[\vert\langle0\vert C_\mathrm{adv}^\dagger C\vert0\rangle\vert^2\geq\varepsilon]\geq \Omega(1)$. 

In our setting, the signer non-adaptively measures $m$ copies of the laboratory state $\sigma_\mathrm{lab}$ honestly prepared on a finite-fidelity quantum computer to generate the classical shadows, and this information is used by a spoofer to learn the circuit with a learning algorithm $\mathcal{A}$. Our learning-from-shadows setting restricts $\mathcal{A}$'s access to $m$ copies of the quantum state using non-adaptive PVMs. Further, for the case of single-qubit random Pauli shadows, the set of PVMs is restricted to single-qubit local PVMs. In the case of level-$\ell$ random Clifford shadows where a random $\ell$-qubit Clifford unitary is applied before measurement of each copy, the learner $\mathcal{A}$'s access is restricted to $\ell$-qubit local PVMs.

\subsubsection{Prior results on the hardness of hypothesis testing}

We first introduce the machinery used for analyzing hypothesis testing. In this section, although certain discussions follow Ref.~\cite{hopkins2018statistical}, we appropriately translate the notation of Ref.~\cite{hopkins2018statistical} into that of Ref.~\cite{chen2025information}.

In Ref.~\cite{hopkins2018statistical}, one aims to distinguish between the null and alternative hypotheses by sampling a random variable $X\in\mathcal{X}$ from distribution $D$ in the null case and $D'$ in the alternative case. A test function $t:\mathcal{X}\rightarrow\{-1,1\}$ aims to output the incorrect binary label with at most $o(1)$ probability. With unbounded computational power, the statistical limit of distinguishing the hypotheses is given by the optimal test function using the likelihood ratio test:
\begin{align}
t(x)=
\begin{cases}
1\quad&\text{if}~\frac{D'(x)}{D(x)}>1\\
-1\quad& \text{otherwise.} 
\end{cases}
\end{align}
The test function is said to achieve $\eta$-detection if
\begin{align}
\left\vert\underset{x
\sim D}{\Pr}[t(x)=1]-\underset{x
\sim D'}{\Pr}[t(x)=1]\right\vert>\eta.
\end{align}

Moving to low-degree polynomials, binary output test functions can be constructed using the so-called \emph{simple statistic}. We say $p$ is a $k$-simple statistic if $p(x)=\sum_ip_i(x)$, each $p_i$ depends on at most $k$ input variables, and $\mathbb{E}_{D'}[p(x)^2]=1$. Ref.~\cite{chen2025information} does not use the language ``simple statistic'' and does not center $p$, whereas Ref.~\cite{hopkins2018statistical} shifts $p$ so that $\mathbb{E}_{D}[p(x)]=0$.

The simple statistic captures in spirit a scoring function that likely outputs a high value for the alternative case, meaning $\mathbb{E}_{D'}[p(x)]$ is large, and it is common to assume a bound on $\mathbb{E}_{D'}[p(x)^2]$. It can then be used within a binary output test function $t(x)=\mathbbm{1}[p(x)\geq c(n)]$ (where $\mathbbm{1}$ outputs $1$ if the condition is true and $-1$ otherwise) for some threshold $c(n)$ that grows with the problem size. This only works when the variances of $p(x)$ under both hypotheses are bounded and the gap in the expectation value of $p(x)$ does not approach 0.

Therefore, the asymptotic behavior of $\mathbb{E}_{D'}[p(x)]$ (which is the same as the gap for Ref.~\cite{hopkins2018statistical}) is crucial in determining whether detection of the alternative hypothesis is possible. In particular, if $\mathbb{E}_{D'}[p(x)]=o(1)$, we say that weak detection is not possible. On the other hand, having $\Omega(1)$ gap in the simple statistic does not guarantee weak detection, since the resulting test function still might under-perform. Nevertheless, upper bounding the gap in $p(x)$ is sufficient to rule out distinguishers.

Since Ref.~\cite{chen2025information} does not center $p$ for the null hypothesis, the gap is more explicitly expressed as
\begin{align}
\mathrm{adv}=\underset{\|p\|_{D}=1}{\sup}\bigg\vert\mathbb{E}_{D}[p(x)]-\mathbb{E}_{D'}[p(x)]\bigg\vert
\end{align}
and is called the distinguishing advantage.

If $p=p^*$ is the optimal polynomial, the advantage is: 
\begin{align}
\mathrm{adv}&=\bigg\vert\sum_{x}D(x)p^*(x)-\sum_{x}D'(x)p^*(x)\bigg\vert=\Bigg\vert\sum_{x}D(x)p^*(x)\left[\frac{D'(x)}{D(x)}-1\right]\Bigg\vert=\vert\langle p^*, \bar{D}-1\rangle_D\vert
\end{align}
for $\bar{D}=D'/D$, where $\bar{D}$ is the likelihood ratio. By the Cauchy-Schwarz inequality, $p^*=(\bar{D}-1)/\|\bar{D}-1\|_D$ maximizes this inner product, and $\mathrm{adv}=\|\bar{D}-1\|_D$. The rest of the analysis in \cite{chen2025information} follows assuming $D$ is the uniform distribution, which simplifies the analysis. 

For low-degree polynomials, they constrain the optimization to $p\in\mathcal{V}^{\leq k}$ (polynomials of degree $k$). Redefining the advantage to maximize only over $k$-degree polynomial results in a smaller advantage $\mathrm{adv}=\|\bar{D}^{\leq k}-1\|_D$, where $f^{\leq k}$ is the projection of $f$ onto $V^{\leq k}$. Finally, to rule out low-degree polynomials, we need $\|\bar{D}^{\leq k}-1\|_D=o(1)$, which is hypothesized to rule out weak detection in Conjecture 1 of Ref.~\cite{chen2025information} or Conjecture 2.2.4 of Ref.~\cite{hopkins2018statistical}.

To achieve this for degree-$k$ polynomials, it turns out that the so-called the $k$-local indistinguishability needs to be small
\begin{align}
kLI=d_{\mathrm{tr}}  \left(\mathbb{E}_{\rho \sim \mathcal{E}} \mathrm{tr}_{[mn]/T}\rho^{\otimes m} , \frac{I}{2^{|T|}}\right)\leq2^{-\Tilde{\Omega}(k\log{n})},
\end{align}
where $T$ is a set of at most $k$ qubits. The condition above checks if a $mn$-sized subsystem ($m$ samples, each of size $n$) of $\rho^{\otimes m}$ is measurably different from the maximally mixed state. However, this cannot hold for $k>n/\log{n}$. Even if $\rho$ is drawn from a random circuit distribution (or some ensemble that approaches Haar-random), this trace distance is expected to decay at most by $2^{-n}$ \cite{ghosh2025unconditional}, which means that one can only satisfy the requirement for $k\leq n/\log{n}$.

\subsubsection{Our extension}

As evidence for the security of our protocol, we'd like to show that the time-complexity of circuit learning is $2^{\omega(n)}$. Since verification takes time $\Theta(2^n)$, this would ensure an asymptotic gap between verifying and spoofing. The time-complexity of a learning algorithm is intimately related to the degree of the learning algorithm. But, how large does $k$ need to be to ensure a gap? Naively, for polynomials with $N$ variables, a $k$-degree algorithm can have at most $\binom{N+k}{k}$ many terms and therefore the time complexity is $O\left(\binom{N+k}{k}\right)$. However, certain polynomials of high degree might be much more efficient to evaluate (for example when the polynomial consists of a single term). Nevertheless, a heuristic degree-runtime correspondence has been hypothesized in \cite{hopkins2018statistical} that argues that, as discussed above Hypothesis 3.2 of Ref.~\cite{wein2025computational}, for natural problems, a degree of $n^\delta$ for a constant $\delta>0$ corresponds to runtime $\exp(n^{\delta\pm o(1)})$. Following this heuristic argument, a degree of  $k=\omega(n)$ would provide an evidence of a gap.

However, due to the fact that the $k$-local indistinguishability is expected to decay at most as $2^{-n}$ as discussed earlier, the argument of \cite{chen2025information} only applies to low-degree algorithms of degree $k =O(n/\log(n))$ and this alone is not sufficient to rule out a time complexity $\omega(2^n)$. To circumvent the aforementioned barrier on the $k$-local distinguishability, we choose the null case to correspond to measuring copies of a Haar random state (instead of the maximally mixed state). In this case, one can make the $k$-local indistinguishability arbitrarily small by making $\mathcal{E}$ close to the Haar measure. For this new null hypothesis, one can similarly compile a hypothesis testing algorithm. If the circuit is from $\mathcal{E}$, then the circuit learning algorithm can learn the circuit and invert the next copy of the state to obtain the zero bitstring. On the other hand, if the circuit is Haar random, it should not be able to learn the circuit and invert the state to obtain the zero bitstring.

However, computing $\|\bar{D}-1\|_D$ is no longer easy due to the presence of a non-uniform measure. To simplify the analysis, we instead consider an alternative definition of distinguishing advantage, which we justify in Section~\ref{sec:change_or_measure}:
\begin{align}
\mathrm{adv}'=\underset{\|p\|_{U}=1}{\sup}\bigg\vert\mathbb{E}_{D}[p(x)]-\mathbb{E}_{D'}[p(x)]\bigg\vert,
\end{align}
where $U$ is the uniform distribution. Then,
\begin{align}
\mathrm{adv}'&=\left|\sum_xp^*(x)[D'(x)-D(x)]\right| =\left|\langle p^*, 2^{mn}(D-D')\rangle_U\right|.
\end{align}
The $2^{mn}$ factor arises from the fact that the measure $U$ has value $1/2^{mn}$ for all $x$, since there are $mn$ bits for $m$ copies of states measured and $n$ bits per measured state. Then, $p^*=(D-D')/\|D-D'\|_U$ and $\mathrm{adv}'=2^{2mn}\|D-D'\|_U$.

Similar to \cite{chen2025information}, we limit $p\in\mathcal{V}^{\leq k}$ (polynomials of degree $k$). Redefining the advantage to maximize only over $k$-degree polynomial gives
\begin{align}
p^*=\frac{D^{\leq k}-D'^{\leq k}}{\|D^{\leq k}-D'^{\leq k}\|_U} \qquad 
\mathrm{adv}'=2^{2mn}\|D^{\leq k}-D'^{\leq k}\|_U.
\end{align}

Now, mirroring section 6.1 of \cite{chen2025information}, consider the Fourier expansion
\begin{align}
D'(s)=\sum_{T\subseteq[mn]}\alpha_T\chi_T(s) \qquad D(s)=\sum_{T\subseteq[mn]}\beta_T\chi_T(s).
\end{align}
Then, using the fact that $\chi_T$ forms an orthonormal basis,
\begin{align}
2^{2mn}\|D^{\leq k}-D'^{\leq k}\|_U&=2^{2mn}\sum_x\left(\sum_{\|T\|\leq k}\alpha_T\chi_T(s)-\sum_{\|T\|\leq k}\beta_T\chi_T(s)\right)^2 =2^{2mn}\sum_{\|T\|\leq k}(\alpha_T-\beta_T)^2.
\end{align}
Further, just as Eq. 5 of Ref.~\cite{chen2025information},
\begin{align}
\alpha_T&=\mathbb{E}_sD'(s)\chi_T(s)=2^{-mn}\mathbb{E}_{\rho,s[T]}2^{\vert T\vert}\mathrm{tr}\left(\underset{[mn]\setminus T}{\mathrm{tr}}\rho^{\otimes m}\bigotimes_{(i,r)\in T}\vert\phi_{s_{i,r}}\rangle\langle\phi_{s_{i,r}}\vert\right).
\end{align}
Similarly,
\begin{align}
\beta_T=2^{-mn}\mathbb{E}_{\rho\in\mathrm{Haar},s[T]}2^{\vert T\vert}\mathrm{tr}\left(\underset{[mn]\setminus T}{\mathrm{tr}}\rho^{\otimes m}\bigotimes_{(i,r)\in T}\vert\phi_{s_{i,r}}\rangle\langle\phi_{s_{i,r}}\vert\right).
\end{align}
Now, define the new $k$-local indistinguishability (NOT $kLLR$)
\begin{align}
kLI'=d_{\mathrm{tr}}\left(\underset{\rho\sim\mathcal{E}}{\mathbb{E}}\underset{[mn]\setminus T}{\mathrm{tr}}(\rho^{\otimes m}),\underset{\rho\sim\mathrm{Haar}}{\mathbb{E}}\underset{[mn]\setminus T}{\mathrm{tr}}(\rho^{\otimes m})\right)\leq\varepsilon,
\end{align}
we have (similar to the proof of Corollary 6.2 in B.1 of \cite{chen2025information})
\begin{align}
\vert \alpha_T-\beta_T\vert=2^{-mn}\big\vert\mathbb{E}_s2^{\vert T\vert}\zeta_s\big\vert\leq2^{\vert T\vert-mn}\varepsilon
\end{align}
for some $\zeta_s\in[-\varepsilon,\varepsilon]$. Therefore,
\begin{align}
\mathrm{adv}'=2^{2mn}\|D^{\leq k}-D'^{\leq k}\|_U\leq 2^{2\vert T\vert}\varepsilon^2k(mn)^k
\end{align}
similar to Proposition 6.3.

While this looks very similar to the original \cite{chen2025information}, we have a new $k$-local distinguishability that can be reduced arbitrarily by making the circuit close to Haar random. Let $\mathcal{E}$ form an $\varepsilon_D$ approximate $m$-design. The reduced density operators are degree-$m$ polynomials of $\rho$, i.e. $f(\rho)=\underset{[mn]\setminus T}{\mathrm{tr}}(\rho^{\otimes m})$ is a degree-$m$ polynomial, so
\begin{align}
&\left\|\underset{\rho\sim\mathcal{E}}{\mathbb{E}}\underset{[mn]\setminus T}{\mathrm{tr}}(\rho^{\otimes m})-\underset{\rho\sim\mathrm{Haar}}{\mathbb{E}}\underset{[mn]\setminus T}{\mathrm{tr}}(\rho^{\otimes m})\right\|_\infty = \left\|\underset{\rho\sim\mathcal{E}}{\mathbb{E}}f(\rho)-\underset{\rho\sim\mathrm{Haar}}{\mathbb{E}}(\rho)\right\|_\infty\leq\varepsilon_D.
\end{align}
Then, by the standard norm inequality between the trace norm and the operator norm that $d_{\mathrm{tr}}(\cdot)\leq d\|\cdot\|_\infty$ where $d$ is the dimensionality of the Hilbert space, we have $kLI\leq \varepsilon_D 2^k$ and $\mathrm{adv}\leq\varepsilon_D^2 2^{4k} k(mn)^k$. Setting $\varepsilon_D=2^{-\Tilde{\Omega}(k\log(n))}$ gives $\mathrm{adv}=o(1)$ and that weak detection is degree-$k$ hard.

As we discussed before, having $\mathrm{adv}=o(1)$ is sufficient for the purpose of digital signatures, and we require that $k=\omega(n)$ for the gap in the time complexity between verification and spoofing. Therefore, the circuit family has to form an $2^{-\Tilde{\omega}(n\log(n))}$-approximate $\mathrm{poly}(n)$-design. Since \cite{chen2025information} shows how to construct $\varepsilon_D$-approximate $t$-designs in depth $O(nt\log^7(t)+\log(1/\varepsilon_D))$ (Lemma 8.7), we can construct circuits of depth polynomial in $n$ to satisfy the requirement.

\subsubsection{Justification of reformulated advantage}\label{sec:change_or_measure}

We are concerned with the possibility that $\mathrm{adv}'\ll\mathrm{adv}$. Fortunately,
\begin{align}
\mathrm{adv}&=\|\bar{D}-1\|_D=\sum_x\frac{(D'(x)-D(x))^2}{D(x)}\\
\mathrm{adv}'&=2^{2mn}\|D-D'\|_U=2^{mn}\sum_x(D'(x)-D(x))^2.
\end{align}
Each term's contribution to the sum is changed by a factor of $2^{mn}D(x)$. To bound the factor by which the distinguishing advantage is decreased, we simply need to lower bound $D(x)$.

In our case, the distribution $D$ is generated by non-adaptive shadow measurements over copies of a Haar random state. Consider a set of shadow measurement bases fixed ahead of time. The first output has the uniform distribution since the state is Haar random. The first measurement outcome tells us something about the actual state. Suppose all measurements are in the computational basis, this would double the probability of measuring this bitstring in the second round and uniformly decrease the probability of the rest bitstrings. For all subsequent rounds, the probability of measuring any bitstring is proportional to the number of occurrence of that bitstring in prior measurements. In the $(i+1)$st round, the probability of sampling a particular bitstring that never occurred before is $2^{-n}\times\left(1-\frac{i}{2^{n}}\right)$. Overall, the lowest probability event is to sample a different bitstring each time with probability
\begin{align}
\prod_{i=0}^{m-1}2^{-n}\times\left(1-\frac{i}{2^{n}}\right),
\end{align}
which is negligibly close to $2^{-mn}$ for $m=\poly(n)$.

For the actual shadow protocol, the above bound on the probability of any particular event is not applicable, but it is plausible that it should be similarly close to $2^{-mn}$. If we were only interested in the distribution of classical variables which is the probabilities in the computational basis, measurement in the computational basis would be the optimal learning strategy. Further, the actual distribution we should examine is $D^{\leq k}(x)$, which should be even closer to the uniform distribution.

\subsection{Gap due to oracle-based learning hardness}\label{sec:supp_oracle_gap}

Another approach of making such an argument is by considering the family of learning algorithms that uses the same computational primitive as the verifier. The verifies uses the public shadows to perform a sort of fidelity estimation, or to perform an overlap against some projector. This primitive takes time $T_P$. We can now wonder how many calls to a similar primitive an adversarial learner has to make in order to learn the circuit. 

On one extreme, we can consider a brute force strategy where the adversary iterates through all possible circuits in the ensemble and performs fidelity estimation with respect to the public shadows. Here, the primitive is the same as verification, which means $T_P=T_V$. This strategy would correspond to $T_L \sim |\mathcal{C}| T_V$, where $|\mathcal{C}|$ is the size of the ensemble and is typically exponential in the number of gates $(nd)$.

However, an adversary can indeed use far more sophisticated algorithms than a brute-force search. An example of an algorithm that use oracle access to ``fidelity-estimation"-like primitive for learning is variational learning where parameters are iteratively optimized against a loss-function, where each query of the loss-function involves fidelity estimation with respect to the target state. Two arguments -- first from Statistical Query Learning and second from the difficulty of variational training -- suggest that one needs exponentially many queries to such an oracle in order to learn a circuit ``properly."

\subsubsection{Statistical query learning argument}
In the statistical query (SQ) learning framework, one aims to learn some observable $M$ on a distribution of states $\mathcal{D}$. One is provided with a correlational stastical query $qCSQ(O, \tau)$ which takes in a bounded observable $O$ with $\| O \| \leq 1$ and a tolerance $\tau$ and returns a value in the range
\begin{equation}
    \mathbb{E}_{\rho \sim \mathcal{D}} [\mathrm{Tr}(O\rho)\mathrm{Tr}(M\rho) - \tau] \leq qCSQ(O, \tau) \leq   \mathbb{E}_{\rho \sim \mathcal{D}} [\mathrm{Tr}(O\rho)\mathrm{Tr}(M\rho) + \tau]
\end{equation}
The hardness of SQ learning is characterized by the query dimension of the hypothesis class $\mathcal{H}$ which the target observable $M$ belongs to. In particular, if the query dimension is $d$, SQ learning requires at least $d(\tau^2-1)/2$ queries to the oracle. 

In the case of learning random circuits $M = \ket{C}\bra{C}$, the hypothesis class $\mathcal{H}$ spans the space of all quantum states, suggesting that $d \sim 4^n$. This can be seen by noting that a $n$-qubit quantum circuit of depth $\log(n)$ can express all $4^n$ Pauli observables. Therefore, the number of oracle calls one needs is exponential in $n$.

However, statistical query learning framework assumes adversarial noise on the query a consequence of which is that even simple circuits (i.e., consisting of single qubit gates) have exponential lower bounds. In the next section, we next explore a more realistic learning algorithm, that is not subject to the restriction of adversarial noise.%

\subsubsection{Hardness of learning variational circuits}
\label{sec:variational}
In variational training, a parameterized circuit is optimized iteratively to optimize a loss function. Given target state $\ket{C}$, one can train a parameterized circuit $\ket{D(\theta)}$ and tune the parameters, using some optimization method, to minimize the loss function, which in the case of state-learning is the infidelity with respect to the target state:
\begin{equation}\label{eq:variational_loss}
    \ell(\theta, C) = 1- |\braket{D(\theta)}{C}|^2
\end{equation}
We consider an adversarial learner that is granted access to an oracle which computes such loss functions and their gradients (or higher derivatives). Such an oracle is at least as powerful as the ``verification" oracle needed to verify signatures. Below, we argue that an adversary defined as such necessarily needs to make an exponential number of queries to such an oracle, suggesting that the ratio of ``time for spoofing with a variational algorithm" to ``time to verify signature" is an exponential function in $n$.

Such an argument is made using a series of works that establish the trainability of variational circuits~\cite{anschuetz2026critical, anschuetz2022quantum, anschuetz2025unified}. Omitting technical details, \cite{anschuetz2025unified} defines a variational circuit $D(\theta)$ to be \textit{trainable} when both the following properties are satisfied:
\begin{enumerate}
    \item There is an absence of barren plateaus, and
    \item There is an absence of poor local minima
\end{enumerate}
\noindent Expressive ansatzes, where the light-cone extends to $\Omega(n)$ qubits, are known to exhibit barren plateaus, meaning that the gradients are exponentially small, requiring exponentially many measurements of the state $\ket{D(\theta)}$ to even learn the direction of the gradient with respect to an ansatz parameter. However, in our conjecture, we allow for a powerful adversary that not only has oracle access to the loss function but also to its gradients. Therefore, we assume the adversary does not encounter barren plateau issues and the first part of the trainability definition is always satisfied. 

Instead, we focus only on the hardness that results from the non-convexity of the optimization landscape. Over the last decade, it has become apparent that variational quantum loss functions can be optimized efficiently only if they are \emph{heavily overparameterized}. Since we can construct our verification protocol to reject circuits that have too many gates, we can ensure that the adversary has to perform an exponentially difficult computation to optimize an acceptable ansatz even if given oracle access to the gradients of the loss function. In other words, any circuits accepted by our verification protocol are not deep enough to avoid poor local minima in their training landscape and thus are not efficiently trainable.

To be more precise, we restrict the adversary to only use ansatzes from our circuit ensemble, which is defined as follows.  We divide $n$ qubits to blocks, each of $b$ qubits. The circuit is constructed by interleaving ``intra-block" and ``inter-block" layers of two-qubit gates. For each ``intra-block", we apply $b/2$ gates between arbitrary pairs within the same block. For each ``inter-block" gates, we pair qubits between any two blocks and apply entangling gates in them. To make the learner circuit more powerful, we allow each of these gates to be SU(4) gates. In particular, we have $d_{\rm inner}$ intra-block layers for each of the $d_{\rm outer}$ inner-block layers. We denote this learner ensembles as $\mathcal{D}(b, d_{\rm inner}, d_{\rm outer})$. For convenience, we consider $n$ to be a multiple of $b$. Let $\ket{C}$ denote the target state.  Let $D_d(\theta)$ denote the learner circuit. The following lemma uses the definition of overparameterization given in \cite{anschuetz2025unified} applied to this circuit ansatz. 

\begin{lemma}\label{lem:first_overparameterization_lemma}
    The variational circuit $D_d(\theta) \sim \mathcal{D}(b, d_{\rm inner}, d_{\rm outer})$ optimizing loss $\ell(\theta,C)$ achieves overparameterization at depth $d = d_{\rm outer} \cdot d_{\rm inner} = O(2^n /n)$. Consequently, any circuit with $d \ll 2^n/n$ exhibits exponentially-many poor local minima and therefore variational training starting from random angles succeeds with an exponentially low probability.
\end{lemma}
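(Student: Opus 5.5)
Following \cite{anschuetz2025unified}, overparameterization is reached when the number of independent parameters equals the dimension of the dynamical Lie algebra (or the set of states reachable from the initial state). Above this threshold the loss landscape has no spurious local minima; below it, generic landscapes have exponentially many poor ones. The plan is to (i) compute the parameter count of $\mathcal{D}(b,d_{\rm inner},d_{\rm outer})$ as a function of $d$, (ii) identify the relevant Lie-algebra dimension or manifold dimension, and (iii) equate the two.

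\textbf{Step 1: parameter count.} Each layer of $\mathcal{D}(b,d_{\rm inner},d_{\rm outer})$ contains $n/2$ two-qubit gates, both for the intra-block layers and for the pairings between blocks. Each $\mathrm{SU}(4)$ gate carries $15$ real parameters, so a circuit with $d=d_{\rm outer}d_{\rm inner}$ intra-block layers, plus $d_{\rm outer}$ inter-block layers, has $P(d)=\Theta(nd)$ parameters.

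\textbf{Step 2: target dimension.} Since intra-block pairings are arbitrary and inter-block pairings connect all blocks, the generated Lie algebra is all of $\mathfrak{su}(2^n)$. For the state-learning loss $\ell(\theta,C)$, however, only the orbit of $\ket{0}$ matters, namely the manifold $\mathbb{CP}^{2^n-1}$. This manifold has real dimension $2(2^n-1)=\Theta(2^n)$. The criterion of \cite{anschuetz2025unified} for a state-preparation loss should require the Jacobian of $\theta\mapsto D(\theta)\ket{0}$ to reach this rank generically. Setting $\Theta(nd)=\Theta(2^n)$ then gives the threshold $d=O(2^n/n)$. A Jacobian of full rank is at least as restrictive as a sufficient parameter count, so the threshold is a lower bound on the depth that overparameterization requires, and the upper bound $O(2^n/n)$ follows from generic full rank for universal gate sets. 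This matches the lemma.

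\textbf{Step 3: consequence below threshold.} For $d\ll 2^n/n$, invoke the underparameterized results of \cite{anschuetz2022quantum,anschuetz2026critical}. These show that for $P\ll\dim$ the local minima concentrate at loss values bounded away from the global minimum, and that there are exponentially many of them. The attraction basin of good minima therefore has exponentially small measure, so gradient descent from random initial angles succeeds with exponentially small probability. Each oracle query is at least as costly as verification, which yields the claimed gap.

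\textbf{Main obstacle.} Step 3 is the hard part. The cited results are proved for random or Gaussian-like ansatz ensembles, or in a Wishart-type limit. Transferring them to our block-structured ensemble with a fixed target state requires checking that the local statistics of the loss landscape match the assumptions of \cite{anschuetz2026critical}. I would first try to show that the ensemble forms an approximate 2-design over enough layers, which would put it in that universality class, and accept a heuristic statement if this step cannot be made rigorous.
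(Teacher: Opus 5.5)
Your parameter count ($\Theta(nd)$, consistent with the paper's $p=15(n/2)\,d_{\mathrm{inner}}d_{\mathrm{outer}}$) is fine. So is your claim that the ansatz generates the full algebra; the paper states this as the Jordan algebra being all of $\mathcal{H}^{2^n}(\mathbb{C})$ with no decomposition. Your threshold $2(2^n-1)$ even coincides numerically with the paper's. The gap is in what that number is supposed to measure.

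You obtain it as the real dimension of the orbit $\mathbb{CP}^{2^n-1}$, which is a Jacobian/QFIM-rank criterion. That is not the notion of overparameterization used in \cite{anschuetz2025unified}. In Step 3 you then invoke the trap results under the hypothesis ``$P\ll\dim$''. Those results are not controlled by the orbit dimension. In \cite{anschuetz2025unified} (Theorem 13) the loss is approximated by a Wishart field with $2r$ degrees of freedom, where $r=\|H_1\|_*^2/\|H_1\|_F^2$ and $H_1$ is the loss observable shifted by its minimal eigenvalue. ``Underparameterized'' there means $p\ll 2r$.

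This quantity depends on the loss, whereas your criterion does not. For example, the loss $|\langle D(\theta)|C\rangle|^2$ (minimizing the overlap) has the same orbit, the same Lie algebra and the same Jacobian. Yet $H_1=|C\rangle\langle C|$ gives $r=1$, and the landscape is benign, since a random circuit is already almost orthogonal to $|C\rangle$. So ``$P\ll\dim_{\mathbb{R}}\mathbb{CP}^{2^n-1}$ implies exponentially many poor minima'' is false in general, and your Step 3 does not follow from your Step 2.

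The missing idea is precisely the paper's computation. For $\ell(\theta,C)=1-|\langle D(\theta)|C\rangle|^2$ one has $H=-|C\rangle\langle C|$ and $\lambda_1=-1$, so $H_1=I-|C\rangle\langle C|$ is a rank-$(2^n-1)$ projector. Hence $\|H_1\|_*=2^n-1$, $\|H_1\|_F=\sqrt{2^n-1}$, and $r=2^n-1$. Overparameterization is then just $p\ge 2r$, i.e.\ $15(n/2)d\ge 2(2^n-1)$, giving $d\gtrsim 2^{n+2}/(15n)$. The ``consequently'' clause is the Wishart-field statement for $p\ll 2r$.

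Working with $r$ also removes the need for your unproved claim that the Jacobian is generically full rank at $P=\Theta(2^n)$ for this specific layout. That claim is not obvious, given the redundant single-qubit parts of adjacent $\mathrm{SU}(4)$ gates and the fixed pairings. In the paper's framework, ``achieving overparameterization at depth $O(2^n/n)$'' is a pure parameter count. Your caveat about transferring the asymptotic Wishart results to this particular ensemble is fair; the paper likewise applies Theorem 13 without verifying its hypotheses.
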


This lemma is proven in three steps. (1) Show that the number of parameters $p$ in $D_d(\theta)$ is equal to $15 (n/2) \cdot d_{\rm inner} \cdot d_{\rm outer}$. (2) In order for a variational circuit to have good fidelity with $\ket{C}$, it should have at least $p^* \geq 2^{n+1}$ parameters. (3) Therefore, a circuit from our ansatz needs depth $d \geq \frac{2^{n+1}}{15n}$ in order for variational training with random starting angles to succeed with high probability.

The crux of this proof comes from calculating the overparameterization ratio as defined in \cite{anschuetz2025unified}. In their work, the algebraic structure induced by the circuits is a Jordan algebra $\mathcal{A}$ that admits decomposition $\bigoplus_i \mathcal{A}_i$. From here, $\ell(\theta,C)$ is then understood on each subalgebra and hardness arises from the worst-case subalgebra. For our purposes, the Jordan algebra generated by the circuit ansatz $\mathcal{C}_n$ is simply $\mathcal{H}^{2^n}(\C)$, the space of $2^n \times 2^n$ Hermitian matrices over $\C$ with no additional decomposition structure. By Theorem 13 in \cite{anschuetz2025unified}, $\ell(\theta,C)$ is well-approximated by a calculation over a random Wishart-distributed random matrix with $2r$ degrees of freedom, where 

\begin{equation}
    r \coloneqq \frac{\|H_1\|_*^2}{\|H_1\|_F^2}
\end{equation}

Here $H_1 = -\ket{C}\bra{C} - \lambda_1 I = I -\ket{C}\bra{C}$ since $\lambda_1 = -1$ is the minimum eigenvalue of $H$. Note that the loss function used in \cite{anschuetz2025unified} is simply $\mathrm{Tr}[\rho U_\theta^\dagger HU_\theta]$ which matches \eqref{eq:variational_loss} using $\rho = \ket{0}\bra{0}$, $U_\theta = D_d(\theta)$, and $H_1$. 

\begin{proof}
    First we show that $r = 2^n-1$. Since $H = \ket{C}\bra{C}$ is (minus-1) a rank-1 projector onto a normalized quantum state, we have $spec(H) = (-1,0,\dots,0)$ and $spec(H_1) = (0, 1, \cdots, 1)$. Simple calculations show that 

    \begin{align}
        \|H_1\|_* &= \mathrm{Tr}[H_1] = 2^n -1 \qquad \text{ and} \qquad \|H_1\|_F = \sqrt{\mathrm{Tr}[H_1]} = \sqrt{2^n - 1}.
    \end{align}
    
    \noindent Therefore, $r = 2^n-1$. Let $p$ be the number of parameters in $D_d(\theta)$. Then this circuit is overparameterized with respect to loss $\ell(\theta,C)$ when $p \geq 2r \gtrsim  2^{n+1}$.%

    Circuit $D_d(\theta)$ has $n/2$ gates ($b/2$ gates per each of the $n/b$ blocks) per intra-block layers. Since there are $d_{\rm inner}$ intrablock layers, there are $(n/2) \cdot d_{\rm inner}$ random $\text{SU}(4)$ gates per outer layer. There are $d_{\rm outer}$ outer layer and so we have $(n/2) \cdot d_{\rm inner} \cdot d_{\rm outer}$ total parameterized gates. Each gate is a generic $\text{SU}(4)$ which is parameterized by $4^2 - 1 = 15$ angles. Thus, the total number of parameters $ p = 15(n/2) \cdot d_{\rm inner} \cdot d_{\rm outer} $. When $d =  d_{\rm inner} \cdot d_{\rm outer} \geq \frac{2^{n+2}}{15n}$, we have $p \geq 2r$.
\end{proof}

\paragraph{Numerics} We provide numerical evidence of the result established by the lemma above by showing that circuits only become trainable at an depth exponential in $n$. We consider two circuit ensembles: a) all-to-all ensemble with arbitrary $\text{SU}(2)$ gates and b) circuits that make use of Iceberg code discussed in Section~\ref{sec:qed} (or described as ``Circuit Ensemble" above). In the experiment we perform, we use a circuit ensemble that lends itself to the Iceberg code. The results can be seen in Fig.~\ref{fig:variational-numerics}.

For both ensembles, we sample circuits that define the target state an adversary tries to learn. For each target circuit, we train ansatzes of varying depths (and therefore varying number of parameters) with the circuit structure being the same as the target. For the purposes of demonstration, we consider a circuit learned if the infidelity $\epsilon$ falls below $0.1$. If the infidelity doesn't fall below $0.1$ in 500 training cycles, we consider training to have failed for that ansatz. For each depth, we train multiple ansatzes with different initializations. In Fig.~\ref{fig:variational-numerics}\textbf{a}, we plot the minimal depth required for average training infidelity to drop below $\epsilon=0.1$ within 500 epochs. We see that this ``threshold" depth for training is exponential in $n$. Likewise, we also plot the number of oracle queries (training steps) necessary to train circuits from our ensemble to a target fidelity. As expected, the training is difficult at shallow depth, the regime the experiment was carried on. This suggests that, even in this stronger access model, variational training is only feasible in the vastly ``improper" setting where the ansatz is much deeper than the circuit it is trying to mimic. Proper learning in variational training would thus require a large number of queries to the loss function.

\begin{figure*}
    \centering
    \includegraphics[width=\linewidth]{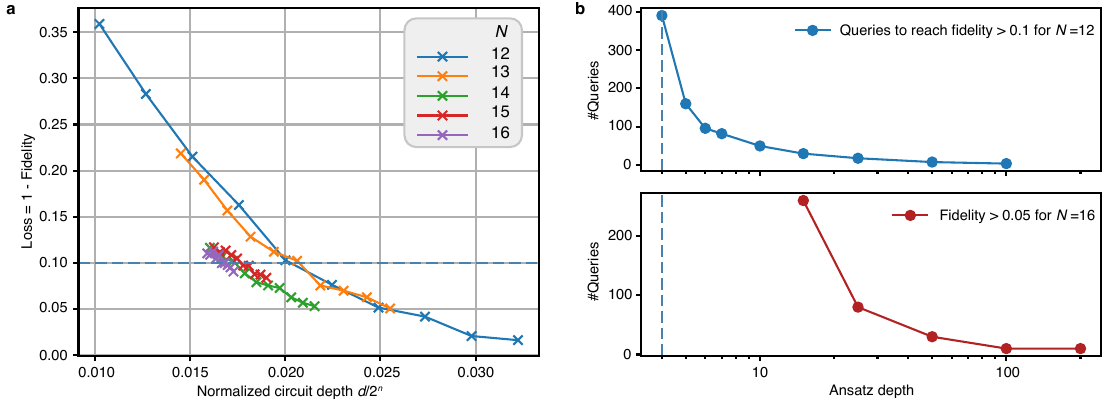}
    \caption{\textbf{a}, Depth required to train parameterized all-to-all circuits to fidelity $ > 0.1$ within 500 training steps and ten random initializations. At constant target fidelity, the required depth scales exponentially in $n$. \textbf{b}, Number of oracle queries required to train circuits from the ensemble used in the experiment to fidelity $>0.1$ and $>0.05$ for $N=12$ and $N=16$ respectively. For both sizes, ten random initializations were trained to 1000 training steps. The dashed vertical line represents the depth used in the experiment with $N=32$.}
    \label{fig:variational-numerics}
\end{figure*}

\newpage
\section{Shadow overlap protocols}\label{sec:shadow}
In Theorem~\ref{thm:noiseless_single_bit_security}, we show that any shadow-overlap-type state certification protocol can be adapted into a digital signature scheme. While known schemes for shadows-based state certification all have sample complexity (size of shadows) that are nominally polynomial, for a practical application, it is desirable to make the exact, numeric size of shadows small, given the attributes of available quantum hardware. The goal of this section is to provide a shadow-based certification protocol which improves on known methods. We begin by briefly summarizing the relevant aspects of the protocol of Ref.~\cite{huang2024certifying}, before introducing the improved protocol. Interested reader is encouraged to see Ref.~\cite{huang2024certifying} for additional details.

At a high level, the procedure for certifying that a ``lab'' state $\rho$ is close to some target state $\ket{\psi}$  of Protocol 2 of \cite{huang2024certifying} is as follows. To generate the shadows, we randomly select a subset of $r$ qubits to measure the lab state in a random single-qubit Pauli basis and measure the rest in the computational basis. Using the shadows, compute the overlap with some operator dependent on the target quantum state to certify. Specifically, the procedure yields the so-called shadow overlap estimator $\omega$, which also satisfies $\mathbb{E}[\omega]=\mathrm{tr}(L\rho)$ where $L$ is some operator and $\rho$ is the measured state. Given $L$, one can identify the certification procedure.

The procedure for certifying quantum state fidelity in \cite{huang2024certifying} uses an $L$ such that $\vert\psi\rangle$ is an eigenvector with eigenvalue $1$ and that the shadow overlap protocol over shadows generated from state $\rho$ yields $\mathrm{tr}(L\rho)$ in expectation (i.e., $\mathrm{tr}(L\rho)=\mathbb{E}[\omega]$, where $\omega$ is the computed shadow overlap for a given trial). A step in establishing $\text{tr}(L\rho)=\mathbb{E}[\omega]$ is that $L$ can be written as the sum
\begin{align}
L=\frac{1}{N}\sum_{r\in[m]}\sum_{k=\{k_1,\dots,k_r\}}\sum_{z_k\in\{0,1\}^{n-r}}\vert z_k\rangle\langle z_k\vert\otimes L_{z_k}\label{eqn:L_form},
\end{align}
where $m$ is a positive integer that is at most $n$, $k$ is a subset of $r$ out of $n$ qubits, $z_k$ corresponds to a particular measurement outcome on the qubits measured in the fixed $Z$ basis, $L_{z_k}$ is efficiently computable, and $N$ is the number of terms due to the summation over $r$ and $k$ which is given by $\sum_{k=1}^m\binom{n}{k}$. 
Specifically, \cite{huang2020predicting} uses the following definition of $L_{z_k}$:
\begin{align}
L_{z_k}=\sum_{\substack{\ell_1,\ell_2\in\{0,1\}^r\\\mathrm{dist}(\ell_1,\ell_2)=r}}\vert\Psi_{z_k}^{\ell_1,\ell_2}\rangle\langle\Psi_{z_k}^{\ell_1,\ell_2}\vert\quad\text{with}\quad\vert\Psi_{z_k}^{\ell_1,\ell_2}\rangle=\frac{\Psi(z_k^{(\ell_1)})\vert\ell_1\rangle+\Psi(z_k^{(\ell_2)})\vert\ell_2\rangle}{\sqrt{\vert\Psi(z_k^{(\ell_1)})\vert^2+\vert\Psi(z_k^{(\ell_2)})\vert^2}},
\end{align}
where $\Psi$ is a model providing query access to amplitudes of $\ket{\psi}$, $z_k^{\ell}$ equals $\ell$ on bits $k_1,\dots,k_r$ and equals $z_k$ on the remaining $n-r$ bits.

To certify a state, $T$ copies of the state $\rho$ are measured according to the shadow overlap protocol and produce an estimator $\hat{\omega}$, which is given by the empirical average $\frac{1}{T}\sum_{t=1}^T \omega_t$. Ref.~\cite{huang2024certifying} states that $T=O(2^{2m})$ samples are required such that $\vert\hat{\omega}-\mathbb{E}[\omega]\vert\leq\varepsilon$ with high probability. The number of samples required (adjusting for an error in the use of Hoeffding's inequality in \cite{huang2020predicting}) is given by the following theorem: 
\begin{theorem}[Theorem 5 of \cite{huang2024certifying}, corrected]\label{thm:fixed_huang_verification}
Using $T\geq\frac{2^{4m-1}}{\varepsilon^2}\ln\frac{2}{\delta}$ samples of the state $\rho$, the empirical average $\frac{1}{T}\sum_{t=1}^T\omega_t$ computed by Protocol 2 of \cite{huang2024certifying} is within $\varepsilon$ additive distance from the shadow overlap $\mathbb{E}[\omega]$ with probability at least $1-\delta$.
\end{theorem}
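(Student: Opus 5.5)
The plan is a direct application of Hoeffding's inequality, once we know the range of the per-shot score. The $T$ shots of Protocol 2 of \cite{huang2024certifying} are independent: each shot draws its own subset $k$, its own Pauli bases, and its own measurement outcome. So the scores $\omega_1,\dots,\omega_T$ are i.i.d. copies of $\omega$, and the empirical mean is an unbiased estimator of $\mathbb{E}[\omega]$.

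\textbf{Step 1: bound the score.} The main step is to show that each score satisfies $|\omega_t|\le 2^{2m-1}$, i.e. it lies in an interval $[a,b]$ of length $b-a\le 2^{2m}$. Writing out the estimator of Protocol 2 of \cite{huang2024certifying}, $\omega$ is a product of single-qubit snapshot factors of the form $3|s\rangle\langle s|-I$ evaluated against the normalized two-amplitude states $\vert\Psi_{z_k}^{\ell_1,\ell_2}\rangle$ over at most $m$ randomized qubits. Each such factor contributes at most a factor of $2$ per qubit in magnitude, and there is one such factor per measured qubit and per side of the bra--ket pairing. That gives a total factor of order $2^{2m}$. An extra prefactor of $1/2$ comes from the normalization of $\vert\Psi\rangle$ together with the $\ell_1\neq\ell_2$ cross terms. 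I would verify this constant directly from the explicit formula rather than rely on the $O(2^{2m})$ statement in \cite{huang2024certifying}. I expect this step to be the main obstacle: the exact constant in the range is what distinguishes the corrected statement from the original one. The original error was using the one-sided bound $|\omega|\le c$ as if it were the interval length $b-a$, which Hoeffding's inequality actually needs.

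\textbf{Step 2: apply Hoeffding.} With the range $b-a\le 2^{2m}$, Hoeffding's inequality gives
\begin{align}
\Pr\left[\left|\frac1T\sum_{t=1}^T\omega_t-\mathbb{E}[\omega]\right|\ge\varepsilon\right]\le 2\exp\left(-\frac{2T\varepsilon^2}{(b-a)^2}\right)\le 2\exp\left(-\frac{2T\varepsilon^2}{2^{4m}}\right).
\end{align}

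\textbf{Step 3: solve for $T$.} Requiring the right-hand side to be at most $\delta$ gives
\begin{align}
T\ge \frac{2^{4m}}{2\varepsilon^2}\ln\frac{2}{\delta}=\frac{2^{4m-1}}{\varepsilon^2}\ln\frac{2}{\delta},
\end{align}
which is exactly the claimed sample complexity.
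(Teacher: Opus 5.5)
Your Steps 2 and 3 are exactly the paper's proof. The paper takes the range $\omega\in[-2^{2m-1},2^{2m-1}]$ (interval length $2^{2m}$) from the original analysis of \cite{huang2024certifying}. It then applies the two-sided Hoeffding bound $2\exp(-2T\varepsilon^2/2^{4m})=2\exp(-T\varepsilon^2/2^{4m-1})\le\delta$ and solves for $T$.

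Your Step 1 reaches the right number $2^{2m-1}$, but the accounting you give for it does not hold up.

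\begin{itemize}
\item The snapshot $\sigma=\bigotimes_{i}(3\vert s_i\rangle\langle s_i\vert-\mathbb{I})$ enters the score once, through $\omega=\mathrm{tr}(L_{z_k}\sigma)$. It does not enter once per side of a bra--ket.
\item The remaining factor $2^{m-1}$ is not a normalization or cross-term effect. It is the number of terms in $L_{z_k}$, which is a sum of rank-one projectors onto the normalized states $\vert\Psi^{\ell_1,\ell_2}_{z_k}\rangle$, one per complementary pair $\{\ell_1,\ell_2\}$. Hence $\|L_{z_k}\|_1\le 2^{m-1}$.
\item Each factor $3\vert s\rangle\langle s\vert-\mathbb{I}$ has eigenvalues $2$ and $-1$, so $\|\sigma\|_\infty=2^m$.
\end{itemize}

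H\"older's inequality then gives $|\omega|\le\|L_{z_k}\|_1\|\sigma\|_\infty\le 2^{m-1}\cdot 2^m=2^{2m-1}$. For a sampled level $r\le m$ the same argument gives $2^{2r-1}\le 2^{2m-1}$. This is the bookkeeping the paper itself relies on: for its own protocol it notes $\|L_{z_k}\|_1=1$ ``instead of $\|L_{z_k}\|_1\le 2^{m-1}$,'' which yields $|\omega|\le 2^m$.

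Your diagnosis of the original error is also off. According to the paper, \cite{huang2024certifying} stated $T=O(2^{2m})$. The correction therefore restores the squared range, $2^{4m}$, in the Hoeffding exponent. Confusing the half-width $2^{2m-1}$ with the length $2^{2m}$ would only have changed the constant, not the exponent.
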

\begin{proof}
The proof is identical to Theorem 5 of \cite{huang2020predicting} except for the application of Hoeffding's inequality. Instead of Eq. 30 of \cite{huang2024certifying}, the corrected quantity is
\begin{align}
\Pr\left[\left\vert\frac{1}{T}\sum_{t=1}^T\omega_t-\mathbb{E}[\omega]\right\vert>\varepsilon\right]\leq2e^{-\frac{2T^2\varepsilon^2}{T(2^{2m-1}-(-2^{2m-1}))^2}}=2e^{-\frac{T\varepsilon^2}{2^{4m-1}}}\leq\delta.
\end{align}
Solving for $T$ completes the proof.
\end{proof}
While this determines the sample complexity required to estimate the shadow overlap $\omega$ precisely, it still remains to show the connection between $\mathbb{E}[\omega]$ and the fidelity $\langle\psi\vert\rho\vert\psi\rangle$, which is necessary for certification. For any $L$, one can compute the so-called relaxation time $\tau$, which corresponds to the relaxation time of the Markov chain defined by $L$. Then, Theorem 4 of \cite{huang2024certifying} guarantees a equivalence between shadow overlap and fidelity.  Namely, if $\mathbb{E}[\omega]\geq 1-\varepsilon$, then $\langle\psi\vert\rho\vert\psi\rangle\geq1-\tau\varepsilon$ (high shadow overlap implies high fidelity). Conversely, if $\langle\psi\vert\rho\vert\psi\rangle\geq 1-\varepsilon$, then $\mathbb{E}[\omega]\geq 1-\varepsilon$ (high fidelity implies high shadow overlap).

For a concrete experiment, it becomes important to determine $\tau$. The relaxation time $\tau$ for $m=1$ is shown to be upper-bounded by $n^2$ for Haar random states. For noisy quantum devices, however, even if $\rho$ is prepared by executing the quantum circuit for $\vert\psi\rangle$, $\langle\psi\vert\rho\vert\psi\rangle$ cannot be arbitrarily close to 1 due to the fidelity limitations of the device. Having $\langle\psi\vert\rho\vert\psi\rangle\geq1-\varepsilon$, where $\varepsilon$ is the infidelity, implies $\mathbb{E}[\omega]\geq 1-\varepsilon$, which certifies that $\langle\psi\vert\rho\vert\psi\rangle\geq1-\tau\varepsilon$. This means the infidelity of certification grows by a factor of $\tau$. For $\tau=O(n^2)$, certification becomes impossible for low fidelity devices as the requirement $\tau\epsilon\ll 1$ implies $\epsilon\ll\frac{1}{O(n^2)}$.  Although empirical evidence in Fig.~\ref{fig:relaxation_time}\textbf{a} suggests $\tau=O(n)$ for Haar-random states, this is still highly impractical for fidelities below $0.99$.

Specifically, to see this issue, we can state the corrected shadow overlap protocol sample complexity for obtaining completeness and soundness of certification.
\begin{theorem}[Theorem 6 of \cite{huang2024certifying}, corrected]\label{thm:fixed_huang_verification_completeness_and_soundness}
Given any $n$-qubit hypothesis state $\vert\psi\rangle$ with a relaxation time $\tau\geq 1$, error $\varepsilon>0$, failure probability $\delta>0$, and
\begin{align}
T=2^{4m+3}\cdot\frac{\tau^2}{\varepsilon^2}\cdot\ln\frac{2}{\delta}
\label{eq:original-sample-complexity}
\end{align}
samples of an unknown $n$-qubit state $\rho$, with probability at least $1-\delta$, Protocol \ref{prot:random_clifford_improved} with shadow overlap threshold $\omega_{\mathrm{threshold}}\equiv1-\varepsilon'=\frac{3\varepsilon}{4\tau}$ will correctly output \textsc{Failed} if the fidelity is low, i.e. $\langle\psi\vert\rho\vert\psi\rangle<1-\varepsilon$, and will correctly output \textsc{Certified} if the fidelity is high, i.e. $\langle\psi\vert\rho\vert\psi\rangle\geq1-\frac{\varepsilon}{2\tau}$.
\end{theorem}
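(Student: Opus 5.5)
The plan is to combine the corrected concentration bound of Theorem~\ref{thm:fixed_huang_verification} with the two overlap--fidelity implications of Theorem~4 of \cite{huang2024certifying}, recalled above. Choose the additive accuracy to be one quarter of the gap between the two fidelity regimes, measured on the overlap scale. I read the threshold as $\omega_{\mathrm{threshold}}=1-\frac{3\varepsilon}{4\tau}$, i.e.\ $\varepsilon'=\frac{3\varepsilon}{4\tau}$. The protocol outputs \textsc{Certified} if and only if the empirical average $\hat\omega=\frac1T\sum_t\omega_t$ satisfies $\hat\omega\ge 1-\frac{3\varepsilon}{4\tau}$.

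\textbf{Step 1 (concentration).} Apply Theorem~\ref{thm:fixed_huang_verification} with additive accuracy $\varepsilon_{\mathrm{add}}=\frac{\varepsilon}{4\tau}$. It requires
\begin{align}
T\ge \frac{2^{4m-1}}{\varepsilon_{\mathrm{add}}^2}\ln\frac{2}{\delta}=2^{4m-1}\cdot\frac{16\tau^2}{\varepsilon^2}\ln\frac2\delta=2^{4m+3}\frac{\tau^2}{\varepsilon^2}\ln\frac2\delta ,
\end{align}
which is exactly the stated $T$. So with probability at least $1-\delta$ we have $|\hat\omega-\mathbb{E}[\omega]|\le\frac{\varepsilon}{4\tau}$. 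All remaining steps are deterministic on this event. Theorem~\ref{thm:fixed_huang_verification} is stated for the Pauli protocol, so I would check that the per-shot score of Protocol~\ref{prot:random_clifford_improved} obeys the same range bound $|\omega_t|\le 2^{2m-1}$, which is what Hoeffding's inequality uses. I expect this to hold, or to improve, for the refined estimator.

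\textbf{Step 2 (completeness).} Suppose $\langle\psi|\rho|\psi\rangle\ge 1-\frac{\varepsilon}{2\tau}$. The implication ``high fidelity $\Rightarrow$ high overlap'' gives $\mathbb{E}[\omega]\ge 1-\frac{\varepsilon}{2\tau}$. Hence $\hat\omega\ge 1-\frac{\varepsilon}{2\tau}-\frac{\varepsilon}{4\tau}=1-\frac{3\varepsilon}{4\tau}$, and the protocol outputs \textsc{Certified}.

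\textbf{Step 3 (soundness).} Suppose $\langle\psi|\rho|\psi\rangle<1-\varepsilon$. Take the contrapositive of ``$\mathbb{E}[\omega]\ge1-\varepsilon_0\Rightarrow\langle\psi|\rho|\psi\rangle\ge1-\tau\varepsilon_0$'' with $\varepsilon_0=\varepsilon/\tau$. This yields $\mathbb{E}[\omega]<1-\frac{\varepsilon}{\tau}$. Hence $\hat\omega<1-\frac{\varepsilon}{\tau}+\frac{\varepsilon}{4\tau}=1-\frac{3\varepsilon}{4\tau}$, and the protocol outputs \textsc{Failed}.

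The main obstacle is not the arithmetic but making sure both overlap--fidelity implications really hold for the operator $L$ used by Protocol~\ref{prot:random_clifford_improved}, with the relaxation time $\tau$ defined for that $L$. Concretely, one needs $L\preceq I$ with $L|\psi\rangle=|\psi\rangle$, and a spectral gap $1/\tau$ below the top eigenvalue. I would argue this by invoking Theorem~4 of \cite{huang2024certifying} for any $L$ of the form \eqref{eqn:L_form} that has these properties.

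A secondary care point is the boundary. Completeness holds with a non-strict inequality, so certification at exactly the threshold must count as \textsc{Certified}. Soundness yields a strict inequality $\hat\omega<1-\frac{3\varepsilon}{4\tau}$, so the two cases never collide at the threshold.
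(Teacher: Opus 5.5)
Your proof is correct and is the paper's argument: apply the corrected Hoeffding bound (Theorem~\ref{thm:fixed_huang_verification}) with accuracy $\varepsilon/\tau-\varepsilon'=\frac{\varepsilon}{4\tau}$, then combine it with the two overlap--fidelity implications of Theorem~4 of \cite{huang2024certifying}. The range check you deferred also goes through if the protocol is read literally as Protocol~\ref{prot:random_clifford_improved}: there $\omega_t=(2^m+1)\vert\langle b\vert U\vert\Psi_{z_k}\rangle\vert^2-1\in[-1,2^m]\subseteq[-2^{2m-1},2^{2m-1}]$ for $m\ge 1$, and the implications you need are exactly Theorem~\ref{thm:shadow_implies_fidelity_clifford}.
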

\begin{proof}
The proof is identical to Theorem 6 of \cite{huang2024certifying} except we use Theorem \ref{thm:fixed_huang_verification}. Simply substituting $\varepsilon\rightarrow\varepsilon/\tau-\varepsilon'=\frac{\varepsilon}{4\tau}$.
\end{proof}

\subsection{Improved random Pauli shadow overlap protocols}\label{sec:supp_improved_shadow}
One way of reducing the sample complexity (Eq.~\ref{eq:original-sample-complexity}) is by reducing the relaxation time $\tau$. In order to do so, we make two observations. First, increasing the level $m$ of the protocol increases the connectivity of the Markov chain. This is apparent since the supported edges are between bitstrings that differ in at most $m$ bits. Second, as long as $L_{z_k}$ is the outer product of normalized quantum states, any $L$ of the form of Eq.~\ref{eqn:L_form} has a correspondence to a shadow overlap procedure, with $L_{z_k}$ as the observable of interest.

As a result, we define the following operator
\begin{align}
L=\binom{n}{m}^{-1}\times\sum_{k=\{k_1,\dots,k_m\}}\sum_{z_k\in\{0,1\}^{n-m}}\vert z_k\rangle\langle z_k\vert\otimes L_{z_k}, \qquad \text{with } \quad L_{z_k}=\vert\Psi_{z_k}\rangle\langle\Psi_{z_k}\vert\quad\text{and}\quad\vert\Psi_{z_k}\rangle=\frac{\sum_{\ell\in\{0,1\}^m}\Psi(z_k^\ell)\vert\ell\rangle}{\sqrt{\sum_{\ell\in\{0,1\}^m}\vert\Psi(z_k^\ell)\vert^2}}.
\end{align}
Now, we redefine the shadow overlap procedure with the new $L_{z_k}$. The same steps of Proposition 3 of \cite{huang2024certifying} yield $\mathrm{tr}[L\rho]=\mathbb{E}[\omega]$.

We will prove later that $L\vert\psi\rangle=\vert\psi\rangle$. It is clear from definition that $L$ is Hermitian. One also can define the Markov chain with transition matrix $P=S^{-1/2}F^\dagger LFS^{1/2}$, which is exactly the same as Proposition 3 of \cite{huang2024certifying}, which allows us to compute the relaxation time and bound the fidelity from the shadow overlap.

Fig.~\ref{fig:relaxation_time} shows the relaxation of Haar random states and states prepared by the ansatz of our experimental quantum circuits using different definitions of the operator $L$. In particular, our experimental quantum circuit ansatz is as described in the main text, and we show the shadow overlap of those with two blocks and $\lceil\log_2n\rceil$ layers. We observe in Fig.~\ref{fig:relaxation_time}\textbf{a} that for Protocol 2 of \cite{huang2020predicting}, increasing $m$ does not reduce the relaxation time meaningfully beyond $m=3$. Additionally, the slopes seem to be about the same for different values of $m$, indicating similar asymptotic scaling of $\tau$. On the other hand, Fig.~\ref{fig:relaxation_time}\textbf{b} shows that our protocol achieves a notably lower relaxation time. For $m=4$, extrapolating to $n=32$ gives $\tau\approx 8$. Note that the horizontal axis scaling is different from Fig.~\ref{fig:relaxation_time}\textbf{a}. The asymptotic scaling also improves with $m$. An overall trend is that increasing $m$ or $n$ decreases the variance in the relaxation time. We also show that the choice of the quantum circuits we make does not impact the average relaxation time very much, as is evidenced by Fig.~\ref{fig:relaxation_time}\textbf{c}, but it does increase the variance in the relaxation time.

\begin{figure}[!t]
    \centering
    \includegraphics[width=\linewidth]{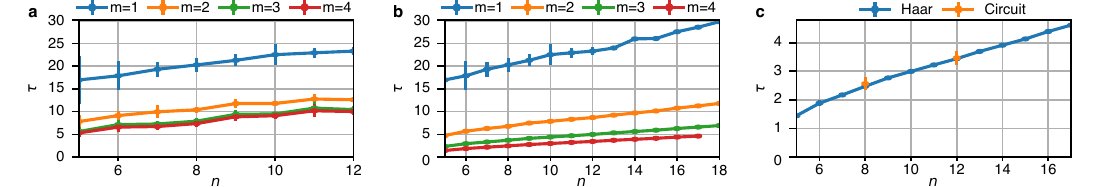}
    \caption{\textbf{Relaxation time of the certification procedures. a}, Relaxation time of Protocol 2 of \cite{huang2024certifying} for Haar random states. \textbf{b}, Relaxation time of our Protocol \ref{prot:random_pauli_improved} for Haar random states. \textbf{c}, Relaxation time of our $m=4$ protocol for Haar random states and the state generated using the ansatz of our experimental quantum circuits with two blocks and $\lceil\log_2n\rceil$ layers. Error bars show the standard deviation. Data points with $n\leq12$ are averaged over 10 random instances, whereas each data point beyond that only consists of a single instance.}
    \label{fig:relaxation_time}
\end{figure}

To understand why this construction achieves a lower relaxation time, consider the extreme case of $L=\vert\psi\rangle\langle\psi\vert$. All eigenvectors other than $\vert\psi\rangle$ have eigenvalue zero and the relaxation time is the smallest it can be, i.e. $\tau=1$. The operator $L$ defined in \cite{huang2024certifying} is not $\vert\psi\rangle\langle\psi\vert$ even for $m=n$. This is due to the normalization factors of $\vert\Psi_{z_k}^{\ell_1,\ell_2}\rangle$ being the root of the sum of two probabilities always. On the other hand, for $m=n$, our $L$ operator is exactly $\vert\psi\rangle\langle\psi\vert$.

Additionally, this construction of $L$ is more natural than the scheme in \cite{huang2024certifying}. Specifically, $L_{z_k}$ is the collapsed state of the hypothesis state $\vert\psi\rangle$ conditioned on the qubits not in $k$ being measured to $z_k$. Although $L$ is a sum of these collapsed states, it is not $\vert\psi\rangle\langle\psi\vert$ even up to normalization. It may be tempting to interpret $L$ as $\vert\psi\rangle$ going through a random measurement channel up to some normalization, but even this is not correct since each term corresponding to $z_k$ have equal weight instead of proportional to the measurement probability. In fact, if we use the density operator of $\vert\psi\rangle$ going through a random measurement channel, it would not have $\vert\psi\rangle$ as an eigenvector in general.

We now formally describe the protocol in Protocol \ref{prot:random_pauli_improved} and prove its correctness.

\parbox{0.96\textwidth}{\begin{protocol_framed}[Improved level-$m$ shadow overlap]\label{prot:random_pauli_improved}\quad 
\normalfont
\newline
\begin{flushleft}\underline{Input:}\end{flushleft}
 \vspace{-1.5em}
 \begin{itemize}
    \item $T$ samples of an unknown state $\rho$.
    \item A model $\Psi$ that gives query access to the amplitudes of $\vert\psi\rangle$.
    \item A level $m\in[n]$.
    \item Error $0\leq\varepsilon'<1$, or equivalently shadow overlap threshold $\omega_{\mathrm{threshold}}\equiv1-\varepsilon'
    $.
\end{itemize}
\begin{flushleft}\underline{Procedure:}\end{flushleft}
 \vspace{-1.5em}
 \begin{enumerate}
    \item Uniformly choose $m$ random qubits denoted by $k=\{k_1,\dots,k_m\}$.
    \item Perform single-qubit $Z$-basis measurements on all but qubits $k_1,\dots,k_m$ of $\rho$. Denote the measurement outcomes collectively by $z_k\in\{0,1\}^{n-m}$.
    \item For each qubit $k_1,\dots,k_m$, choose an $X,Y,Z$-basis measurement uniformly at random and measure it. Denote the post-measurement state of these qubits by $\vert s_1\rangle,\dots,\vert s_m\rangle$. Compute the classical shadow
    \begin{align}
        \sigma=(3\vert s_1\rangle\langle s_1\vert-\mathbb{I})\otimes\cdots\otimes(3\vert s_m\rangle\langle s_m\vert-\mathbb{I}).\label{eqn:pauli_shadow}
    \end{align}
    \item Query the model $\Psi$ for all choices of $m$-bit strings $\ell$ to obtain the normalized state
    \begin{align}
        \vert\Psi_{z_k}\rangle=\frac{\sum_{\ell\in\{0,1\}^m}\Psi(z_k^\ell)\vert\ell\rangle}{\sqrt{\sum_{\ell\in\{0,1\}^m}\vert\Psi(z_k^\ell)\vert^2}},
    \end{align}
    where $z_k^{\ell}$ equals $\ell$ on bits $k_1,\dots,k_m$ and equals $z_k$ on the remaining $n-m$ bits.
    \item Compute the overlap
    \begin{align}
        \omega=\mathrm{tr}(L_{z_k}\sigma)\quad\text{with}\quad L_{z_k}=\vert\Psi_{z_k}\rangle\langle\Psi_{z_k}\vert.
    \end{align}
    \item Repeat steps 1 to 5 for $T$ times to obtain overlaps $\omega_1,\dots,\omega_T$. Report the estimated shadow overlap
    \begin{align}
        \hat{\omega}=\frac{1}{T}\sum_{t=1}^T\omega_t.
    \end{align}
    \item If the estimated shadow overlap $\hat{\omega}\geq 1-\varepsilon'$, output \textsc{Certified}. Otherwise, output \textsc{Failed}.
\end{enumerate}
\end{protocol_framed}}

\begin{lemma}\label{lem:eigenvector}
Consider a quantum state $\psi$
\begin{align}
\vert\psi\rangle=\Psi(z)\vert z\rangle.
\end{align}
Define an operator $L$
\begin{align}
L=\binom{n}{m}^{-1}\times\sum_{k=\{k_1,\dots,k_m\}}\sum_{z_k\in\{0,1\}^{n-m}}\vert z_k\rangle\langle z_k\vert\otimes L_{z_k},
\end{align}
where
\begin{align}
L_{z_k}=\vert\Psi_{z_k}\rangle\langle\Psi_{z_k}\vert\quad\text{with}\quad\vert\Psi_{z_k}\rangle=\frac{\sum_{\ell\in\{0,1\}^m}\Psi(z_k^\ell)\vert\ell\rangle}{\sqrt{\sum_{\ell\in\{0,1\}^m}\vert\Psi(z_k^\ell)\vert^2}},
\end{align}
where $z_k^{\ell}$ equals $\ell$ on bits $k_1,\dots,k_m$ and equals $z_k$ on the remaining $n-m$ bits. We have \begin{align}
L\vert\psi\rangle=\vert\psi\rangle.
\end{align}
\end{lemma}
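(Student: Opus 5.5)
The plan is to show that each summand of $L$, indexed by a subset $k$, fixes $\vert\psi\rangle$ on its own. Averaging over the $\binom{n}{m}$ subsets then gives $L\vert\psi\rangle=\vert\psi\rangle$ directly.

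Fix a subset $k=\{k_1,\dots,k_m\}$. Reorder the tensor factors so that the qubits outside $k$ come first. Every basis string $z\in\{0,1\}^n$ is then uniquely written as $z=z_k^{\ell}$ with $z_k\in\{0,1\}^{n-m}$ and $\ell\in\{0,1\}^m$. This gives the decomposition
\begin{align}
\vert\psi\rangle=\sum_{z_k\in\{0,1\}^{n-m}}\vert z_k\rangle\otimes\sum_{\ell\in\{0,1\}^m}\Psi(z_k^{\ell})\vert\ell\rangle=\sum_{z_k}\sqrt{p(z_k)}\,\vert z_k\rangle\otimes\vert\Psi_{z_k}\rangle,
\end{align}
where $p(z_k)=\sum_{\ell}\vert\Psi(z_k^\ell)\vert^2$.

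Now apply $\sum_{z_k'}\vert z_k'\rangle\langle z_k'\vert\otimes\vert\Psi_{z_k'}\rangle\langle\Psi_{z_k'}\vert$ to this expression. Orthogonality of the computational basis states $\vert z_k\rangle$ reduces the double sum to the diagonal terms. Each diagonal term returns $\sqrt{p(z_k)}\vert z_k\rangle\otimes\vert\Psi_{z_k}\rangle$, because $\vert\Psi_{z_k}\rangle$ is normalized. The $k$-th summand therefore maps $\vert\psi\rangle$ to itself. Summing over the $\binom{n}{m}$ choices of $k$ and dividing by $\binom{n}{m}$ gives $L\vert\psi\rangle=\vert\psi\rangle$.

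The only genuine subtlety is the case $p(z_k)=0$, where $\vert\Psi_{z_k}\rangle$ is undefined. I would adopt the convention that $L_{z_k}$ is then any normalized projector, for example $0$ or an arbitrary state. In that case the corresponding component of $\vert\psi\rangle$ vanishes, so the conclusion does not depend on the convention. The remaining care is bookkeeping: the qubit reordering must be consistent with how $z_k^\ell$ is defined.
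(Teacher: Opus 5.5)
Your proposal is correct and follows essentially the same route as the paper. You decompose $\vert\psi\rangle=\sum_{z_k}\sqrt{p(z_k)}\,\vert z_k\rangle\otimes\vert\Psi_{z_k}\rangle$ for each fixed subset $k$, show by orthogonality and normalization that each $k$-summand of $L$ fixes $\vert\psi\rangle$, and average over the $\binom{n}{m}$ subsets. Your treatment of the $p(z_k)=0$ case, which the paper leaves implicit, is also right: there $\langle z_k\vert\psi\rangle=0$, so that term vanishes however $L_{z_k}$ is defined.
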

\begin{proof}
We first observe that given $m$ qubits specified by vector $k$,
\begin{align}
\langle z_k\vert\psi\rangle=\langle z_k\vert\sum_z\Psi(z)\vert z\rangle=\langle z_k\vert\sum_{z_k'}\sum_{\ell\in\{0,1\}^m}\Psi(z_k'^\ell)\vert z_k'\rangle\otimes\vert\ell\rangle_k=\sum_{\ell\in\{0,1\}^m}\Psi(z_k^\ell)\vert\ell\rangle_k,
\end{align}
which means that the post selected state is
\begin{align}
\frac{\langle z_k\vert\psi\rangle}{\sqrt{\langle\psi\vert z_k\rangle\langle z_k\vert\psi\rangle}}=\vert\Psi_{z_k}\rangle\quad
\end{align}
for $\vert\Psi_{z_k}\rangle$ and $L_{z_k}$ defined in the lemma. Additionally,
\begin{align}
\vert\psi\rangle=\mathbb{I}\vert\psi\rangle=\sum_{z_k}\vert z_k\rangle\langle z_k\vert\psi\rangle=\sum_{z_k}\vert z_k\rangle\otimes\sum_{\ell\in\{0,1\}^m}\Psi(z_k^\ell)\vert\ell\rangle_k=\sum_{z_k}\sqrt{p(z_k)}\vert z_k\rangle\otimes\vert\Psi_{z_k}\rangle,
\end{align}
where $p(z_k)=\sum_{\ell\in\{0,1\}^m}\vert\Psi(z_k^\ell)\vert^2$.
\begin{align}
\binom{n}{m}L\vert\psi\rangle
=&\sum_{k}\sum_{z_k}\vert z_k\rangle\langle z_k\vert\otimes L_{z_k}\vert\psi\rangle\\
=&\sum_{k}\sum_{z_k}\left(\vert z_k\rangle\otimes\vert\Psi_{z_k}\rangle_k\right)\left(\langle z_k\vert\otimes\langle\Psi_{z_k}\vert_k\right)\sum_{z_k'}\sqrt{p(z_k')}\vert z_k'\rangle\otimes\vert\Psi_{z_k'}\rangle\\
=&\sum_{k}\sum_{z_k}\sqrt{p(z_k)}\vert z_k\rangle\otimes\vert\Psi_{z_k}\rangle_k =\sum_{k}\vert\psi\rangle=\binom{n}{m}\vert\psi\rangle.
\end{align}
\end{proof}

\begin{lemma}\label{lem:trace_equal_shadow_overlap_pauli}
For $\omega$ of Protocol \ref{prot:random_pauli_improved}, we have
\begin{align}
\mathrm{tr}(L\rho)=\mathbb{E}[\omega].
\end{align}
\end{lemma}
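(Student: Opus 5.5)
The plan is to compute $\mathbb{E}[\omega]$ by conditioning on the three sources of randomness in Protocol~\ref{prot:random_pauli_improved}, averaging them out in the order opposite to how they occur. These are the choice of subset $k$, the $Z$-basis outcome $z_k$ on the complement, and the random Pauli bases and outcomes on the qubits in $k$.

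First, fix $k$ and the outcome $z_k$. By the Born rule, $z_k$ occurs with probability $p_\rho(z_k)=\mathrm{tr}\big[(\vert z_k\rangle\langle z_k\vert\otimes\mathbb{I}_k)\rho\big]$. Conditioned on $z_k$, the qubits in $k$ are left in the post-measurement state
\begin{align}
\rho_{z_k}=\frac{\langle z_k\vert\rho\vert z_k\rangle}{p_\rho(z_k)} .
\end{align}
Here $\langle z_k\vert\cdot\vert z_k\rangle$ is the partial matrix element on the complement of $k$. The $Z$ measurements commute with the random Pauli measurements on the disjoint qubits $k$, so this conditioning is legitimate.

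Next, use the standard single-qubit Pauli shadow identity. For a product of independent uniformly random $X,Y,Z$ measurements, the expectation of $\bigotimes_j(3\vert s_j\rangle\langle s_j\vert-\mathbb{I})$ is the inverse of the depolarizing measurement channel. Hence $\mathbb{E}[\sigma\mid k,z_k]=\rho_{z_k}$. I would verify this by noting that the channel factorizes over qubits and checking it on one qubit. By linearity of the trace, $\mathbb{E}[\omega\mid k,z_k]=\mathrm{tr}(L_{z_k}\rho_{z_k})$. This uses only that $L_{z_k}$ is a fixed operator once $z_k$ is known, which holds because it is computed from the model $\Psi$ and $z_k$ alone.

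Finally, average over $z_k$ and $k$. The factor $p_\rho(z_k)$ cancels the normalization of $\rho_{z_k}$, giving
\begin{align}
\sum_{z_k}p_\rho(z_k)\,\mathrm{tr}(L_{z_k}\rho_{z_k})=\sum_{z_k}\mathrm{tr}\big[(\vert z_k\rangle\langle z_k\vert\otimes L_{z_k})\rho\big] .
\end{align}
Averaging uniformly over the $\binom{n}{m}$ subsets $k$ then yields exactly $\mathrm{tr}(L\rho)$ with $L$ as defined above Protocol~\ref{prot:random_pauli_improved}.

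The main obstacle is the case $p_\rho(z_k)=0$, where $\rho_{z_k}$ is undefined. Such terms contribute zero to both sides, so it suffices to note this. A second point to watch is $p(z_k)=0$ for the model $\Psi$, which leaves $L_{z_k}$ ill-defined. I would adopt the convention $L_{z_k}=0$ (or any fixed state) in that case and observe that the identity holds term by term regardless. Everything else mirrors Proposition~3 of \cite{huang2024certifying}.
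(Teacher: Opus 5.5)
Your proposal is correct and takes essentially the same route as the paper: the paper writes $\mathrm{tr}(L\rho)$ as the average, over uniformly chosen $k$ and Born-distributed $z_k$, of $\mathrm{tr}(L_{z_k}\rho_{z_k})$ with $\rho_{z_k}=\langle z_k\vert\rho\vert z_k\rangle/\mathrm{tr}(\langle z_k\vert\rho\vert z_k\rangle)$. It then defers to Proposition~3 of \cite{huang2024certifying} for the fact that the Pauli shadow $\sigma$ averages to $\rho_{z_k}$, which is exactly the conditioning and inverse-measurement-channel step you make explicit. Running the computation from $\mathbb{E}[\omega]$ toward $\mathrm{tr}(L\rho)$ is only a cosmetic difference, and your handling of the zero-probability cases is extra care the paper does not take.
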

\begin{proof}
We follow the same idea as Proposition 3 of \cite{huang2024certifying}.
\begin{align}
\mathrm{tr}[L\rho]&=\binom{n}{m}^{-1}\times\sum_{k=\{k_1,\dots,k_m\}}\sum_{z_k\in\{0,1\}^{n-m}}\mathrm{tr}\left(\langle z_k\vert\rho\vert z_k\rangle\right)\cdot \mathrm{tr}\left(L_{z_k}\cdot\frac{\langle z_k\vert\rho\vert z_k\rangle}{\mathrm{tr}(\langle z_k\vert\rho\vert z_k\rangle)}\right)=\underset{k,z_k}{\mathbb{E}}\mathrm{tr}\left(L_{z_k}\cdot\frac{\langle z_k\vert\rho\vert z_k\rangle}{\mathrm{tr}(\langle z_k\vert\rho\vert z_k\rangle)}\right).
\end{align}
We note that converting the sum into expectation is valid just like in \cite{huang2020predicting}, although the sampling process is different. In \cite{huang2020predicting}, the procedure samples $r$, followed by sampling $k=\{k_1,\dots,k_r\}$, and finally measuring $z_k$. Here, we directly sample $k=\{k_1,\dots,k_m\}$ and then measure $z_k$. Following the rest of the steps identically as \cite{huang2024certifying} completes the proof.
\end{proof}

\begin{theorem}\label{thm:shadow_implies_fidelity_pauli}
Let $\lambda_1=1-\frac{1}{\tau}$ be the second largest eigenvalue of the transition matrix $P$ defined as $P=S^{-1/2}F^\dagger LFS^{1/2}$ for state $\vert\psi\rangle$, where $L$ is as defined in Lemma \ref{lem:eigenvector}. The shadow overlap satisfies the following. If $\mathbb{E}[\omega]\geq 1-\varepsilon$, then $\langle\psi\vert\rho\vert\psi\rangle\geq 1 - \tau\varepsilon$. If $\langle\psi\vert\rho\vert\psi\rangle\geq 1-\varepsilon$, then $\mathbb{E}[\omega]\geq 1-\varepsilon$ for $\omega$ as defined in Protocol \ref{prot:random_pauli_improved}.
\end{theorem}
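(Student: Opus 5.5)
The plan is to follow the spectral argument behind Theorem 4 of \cite{huang2024certifying}, using Lemma~\ref{lem:eigenvector} and Lemma~\ref{lem:trace_equal_shadow_overlap_pauli}. By Lemma~\ref{lem:trace_equal_shadow_overlap_pauli}, $\mathbb{E}[\omega]=\mathrm{tr}(L\rho)$, so both implications reduce to comparing $\mathrm{tr}(L\rho)$ with $\langle\psi\vert\rho\vert\psi\rangle$ using only the spectrum of $L$. I would establish three spectral facts about $L$ and then combine them.

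\textbf{Step 1: $0\preceq L\preceq \mathbb{I}$.} Fix a subset $k$. Each term $\vert z_k\rangle\langle z_k\vert\otimes L_{z_k}$ is a rank-one projector, since $L_{z_k}$ is a pure-state projector. Terms with different $z_k$ have orthogonal supports, because the $\vert z_k\rangle$ are orthogonal. Hence $\sum_{z_k}\vert z_k\rangle\langle z_k\vert\otimes L_{z_k}$ is a projector, and in particular lies between $0$ and $\mathbb{I}$. The operator $L$ is the uniform average of these projectors over the $\binom{n}{m}$ choices of $k$, so $0\preceq L\preceq\mathbb{I}$.

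\textbf{Step 2: $\vert\psi\rangle$ is the top eigenvector.} Lemma~\ref{lem:eigenvector} gives $L\vert\psi\rangle=\vert\psi\rangle$. Combined with Step 1, this makes $\vert\psi\rangle$ an eigenvector for the top eigenvalue $1$.

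\textbf{Step 3: the second eigenvalue of $L$ is $\lambda_1$.} Here $S$ is the diagonal matrix of the probabilities $\vert\Psi(z)\vert^2$, and $F$ is the diagonal unitary of the phases of $\Psi(z)$. Then $P=S^{-1/2}F^\dagger L F S^{1/2}$ is similar to $L$, so the two share a spectrum. It follows that the second largest eigenvalue of the Hermitian $L$ is $\lambda_1=1-1/\tau$. When $\tau<\infty$ this also shows that the eigenvalue $1$ is nondegenerate.

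\textbf{Step 4: the two implications.} By Step 3, on the orthogonal complement of $\vert\psi\rangle$ we have $L\preceq\lambda_1\mathbb{I}$. Write $f=\langle\psi\vert\rho\vert\psi\rangle$. For the first implication, decompose $\rho$ and drop the cross terms; they vanish in $\mathrm{tr}(L\rho)$ because $\vert\psi\rangle$ is an eigenvector of the Hermitian $L$. This gives $\mathrm{tr}(L\rho)\le f+\lambda_1(1-f)=1-(1-f)/\tau$. So $1-\varepsilon\le\mathbb{E}[\omega]$ forces $1-f\le\tau\varepsilon$. For the converse, Steps 1 and 2 give $L\succeq\vert\psi\rangle\langle\psi\vert$, since $L$ is PSD with $\vert\psi\rangle$ in its eigenbasis. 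Therefore $\mathbb{E}[\omega]=\mathrm{tr}(L\rho)\ge f\ge1-\varepsilon$.

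\textbf{Main obstacle.} I expect Step 3 to be the delicate point. The similarity transform requires all amplitudes $\Psi(z)\neq0$; otherwise $S^{-1/2}$ is undefined. I would first restrict $L$ to the support of $\psi$ in the computational basis, or take a limit over perturbed amplitudes, mirroring \cite{huang2024certifying}. I would also check that the Markov-chain definition of $\tau$ really uses the second largest eigenvalue rather than the second largest in absolute value. This distinction does not matter here, because $L\succeq0$ makes all eigenvalues nonnegative.
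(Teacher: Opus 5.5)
Your proposal is correct and takes essentially the same route as the paper, which defers to the spectral argument of Theorem 4 of \cite{huang2024certifying}. In both, $\mathbb{E}[\omega]=\mathrm{tr}(L\rho)$, the similarity $P=S^{-1/2}F^\dagger LFS^{1/2}$ transfers the eigenvalue $1-1/\tau$ to $L$, and the sandwich $\vert\psi\rangle\langle\psi\vert\preceq L\preceq\vert\psi\rangle\langle\psi\vert+(1-1/\tau)(\mathbb{I}-\vert\psi\rangle\langle\psi\vert)$ yields both implications; your Step 1 is a valid, self-contained way to get $0\preceq L\preceq\mathbb{I}$ from the rank-one form of $L_{z_k}$, and it does not need $S$ to be invertible.
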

\begin{proof}
The proof is identical to Theorem 4 of \cite{huang2024certifying}.
\end{proof}

\begin{theorem}\label{thm:random_pauli_verification}
Using $T=2^{2m+1}\cdot\frac{1}{\varepsilon^2}\cdot\ln(\frac{2}{\delta})$ samples of the state $\rho$, the empirical average $\frac{1}{T}\sum_{t=1}^T\omega_t$ computed by Protocol \ref{prot:random_pauli_improved} is within $\varepsilon$ additive distance from the shadow overlap $\mathbb{E}[\omega]$ with probability at least $1-\delta$.
\end{theorem}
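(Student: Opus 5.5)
The plan is to use Hoeffding's inequality on the i.i.d. scores $\omega_1,\dots,\omega_T$, exactly as in the corrected proof of Theorem~\ref{thm:fixed_huang_verification}. Protocol~\ref{prot:random_pauli_improved} draws the subset $k$, the Pauli bases, and the outcomes independently on each shot. The $\omega_t$ are therefore i.i.d. with mean $\mathbb{E}[\omega]=\mathrm{tr}(L\rho)$ by Lemma~\ref{lem:trace_equal_shadow_overlap_pauli}. All that remains is a deterministic range bound $a\le\omega_t\le b$, which we then feed into Hoeffding.

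\textbf{Bounding the range.} Write $\omega=\langle\Psi_{z_k}\vert\sigma\vert\Psi_{z_k}\rangle$ with $\sigma=\bigotimes_{j=1}^m(3\vert s_j\rangle\langle s_j\vert-\mathbb{I})$. Each factor $3\vert s_j\rangle\langle s_j\vert-\mathbb{I}$ is Hermitian with eigenvalues $2$ and $-1$. Hence $\sigma$ is Hermitian with eigenvalues $\pm2^{a}$ for $0\le a\le m$, and its operator norm is $2^m$. Since $\vert\Psi_{z_k}\rangle$ is normalized, this gives $\vert\omega\vert\le 2^m$, so $\omega\in[-2^m,2^m]$. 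The interval has width $b-a=2^{m+1}$ and $(b-a)^2=2^{2m+2}$.

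A sharper lower endpoint would improve constants, but it is not needed here. The improvement comes only from replacing the dyadic $L_{z_k}$ of \cite{huang2024certifying} with a rank-one projector onto a normalized state, which makes $\vert\omega\vert\le\|\sigma\|_\infty$ immediate.

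\textbf{Applying Hoeffding.} Hoeffding's inequality gives
\begin{align}
\Pr\left[\left\vert\frac{1}{T}\sum_{t=1}^T\omega_t-\mathbb{E}[\omega]\right\vert>\varepsilon\right]\le 2\exp\left(-\frac{2T\varepsilon^2}{2^{2m+2}}\right)=2\exp\left(-\frac{T\varepsilon^2}{2^{2m+1}}\right).
\end{align}
Setting the right-hand side to at most $\delta$ and solving for $T$ gives $T\ge 2^{2m+1}\varepsilon^{-2}\ln(2/\delta)$, which is the stated sample count.

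The only step needing care is the range bound, specifically checking that the eigenvalue-$2^m$ extreme is what matters rather than something like $2^{2m-1}$ from the old estimator. The operator-norm argument settles this directly, so I expect no real obstacle.
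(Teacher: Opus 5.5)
Your proposal is correct and follows the paper's own route: apply Hoeffding exactly as in Theorem~\ref{thm:fixed_huang_verification}, with the single change that $L_{z_k}$ is a rank-one projector. Then $\|L_{z_k}\|_1=1$ and $\|\sigma\|_\infty=2^m$ give $\omega\in[-2^m,2^m]$ and hence the exponent $T\varepsilon^2/2^{2m+1}$. Your explicit eigenvalue computation for $\sigma$ simply spells out the operator-norm bound that the paper leaves implicit.
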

\begin{proof}
Since $L_{z_k}$ is the outer product of one normalized vector, $\|L_{z_k}\|_1=1$. Following the rest of proof identically as Theorem \ref{thm:fixed_huang_verification} while noting the difference in $\|L_{z_k}\|_1=1$ instead of $\|L_{z_k}\|_1\leq2^{m-1}$ completes the proof.
\end{proof}

To reduce the relaxation time $\tau$, we want $m$ to be as high as possible. However, the sample complexity required so that the estimator of the shadow overlap is close to the expectation value is $O(2^{2m})$ as stated in Theorem \ref{thm:random_pauli_verification}. Nevertheless, compared the protocol in \cite{huang2024certifying} where the sample complexity is proportional to $2^{4m-1}$, we need quadratically fewer samples in $m$.

\subsection{Random Clifford shadow overlap protocols}

Nevertheless, for large $m$, even $2^{2m}$ samples can be an obstacle in practice. Fortunately, the shadow overlap variance bound can be notably improved if the shadows are obtained using random Clifford measurements. For any observable $O$ and set $\hat{o}=\mathrm{tr}(O\hat{\rho})$, where $\hat{\rho}$ is a classical shadow, we have
\begin{align}
\mathrm{Var}[\hat{o}]\leq\left\|O-\frac{\mathrm{tr}(O)}{2^n}\mathbb{I}\right\|^2_{\mathrm{shadow}}=\|O_0\|^2_{\mathrm{shadow}}
\end{align}
from Lemma 1 of \cite{huang2020predicting}, where the second equality serves as a definition for $O_0$ which is a traceless operator in $\mathbb{H}_{2^n}$. For classical shadows obtained by random Clifford measurements, we have
\begin{align}
\|O_0\|_{\mathrm{shadow}}^2\leq3\mathrm{tr}(O^2)
\end{align}
by Proposition 1 of \cite{huang2020predicting}. Since our operator $L_{z_k}$ is an outer product of a normalized vector,
\begin{align}
\mathrm{Var}[\omega_{z_k}]\leq\left\|L_{z_k}-\frac{\mathrm{tr}(L_{z_k})}{2^n}\mathbb{I}\right\|_{\mathrm{shadow}}\leq3,
\end{align}
where $\omega_{z_k}$ is the shadow overlap for the measurement outcome $z_k$ on the $n-m$ qubits other than those specified by set $k$.

It should be noted that the above is the variance of $\omega_{z_k}$, not $\omega$. Each $\omega_{z_k}$ could have variation in the expectation value between $0$ and $1$. This is because
\begin{align}
\mathbb{E}[\omega_{z_k}]=\underset{\mathrm{shadows}~\sigma}{\mathbb{E}}[\mathrm{tr}(L_{z_k}\sigma)]=\mathrm{tr}\left(L_{z_k}\cdot\frac{\langle z_k\vert\rho\vert z_k\rangle}{\mathrm{tr}(\langle z_k\vert\rho\vert z_k\rangle)}\right),
\end{align}
where $\mathbb{E}_{\mathrm{shadow}s~\sigma}[\sigma]=\frac{\langle z_k\vert\rho\vert z_k\rangle}{\mathrm{tr}(\langle z_k\vert\rho\vert z_k\rangle)}$ is a normalized state as discussed in \cite{huang2020predicting} and Appendix B of \cite{huang2024certifying}. This is because shadows of a state average to the unknown state by construction.

The overall variance of $\omega$ therefore is upper bounded by $\mathrm{Var}[\omega]\leq 3.25$. To see this, consider the contribution to variance due to $\omega_{z_k}$ where $\mathbb{E}[\omega]-\mathbb{E}[\omega_{z_k}]=\delta_{z_k}$. Then,
\begin{align}
\mathbb{E}\left[(\omega_{z_k}-\mathbb{E}\left[\omega\right])^2\right]&=\mathbb{E}\left[(\omega_{z_k}-\mathbb{E}\left[\omega_{k_z}\right]+\delta_{z_k})^2\right]\\
&=\mathbb{E}\left[(\omega_{z_k}-\mathbb{E}\left[\omega_{z_k}\right])^2+2(\omega_{z_k}-\mathbb{E}\left[\omega_{z_k}\right])+\delta_{z_k}^2\right]\\
&=\mathrm{Var}\left[\omega_{z_k}\right]+\delta_{z_k}^2.
\end{align}
Note that we have the constraint that $\sum_{z_k}p_{z_k}\cdot\delta_{z_k}=0$ since $\sum_{z_k}p_{z_k}\cdot\mathbb{E}[\omega_{z_k}]=\mathbb{E}[\omega]$, where $p_{z_k}$ is the probability of selecting $m$ random qubits denoted by $k$ and measuring the rest of the $n-m$ qubits output bitstring $z_k$. Further, we have $\delta_{z_k}\in[-\mathbb{E}[\omega], 1-\mathbb{E}[\omega]]$. It is easy to see that the maximum possible value of $\sum_{z_k}p_{z_k}\cdot\delta_{z_k}^2\leq-\mathbb{E}[\omega]\cdot(1-\mathbb{E}[\omega])\leq0.25$, where the equality is satisfied by a distribution where all the probability mass concentrates on the boundary. Therefore,
\begin{align}
\mathrm{Var}[\omega]&=\sum_{z_k}p_{z_k}\cdot\mathbb{E}\left[(\omega_{z_k}-\mathbb{E}\left[\omega\right])^2\right]\leq\sum_{z_k}p_{z_k}\cdot(\mathrm{Var}\left[\omega_{z_k}\right]+\delta_{z_k}^2)\leq3 + \sum_{z_k}p_{z_k}\cdot\delta_{z_k}^2 \leq 3.25.
\end{align}

\parbox{0.96\textwidth}{\begin{protocol_framed}[Improved level-$m$ shadow overlap with random Cliffords]\label{prot:random_clifford_improved}\quad 
\normalfont
\newline
\begin{flushleft}\underline{Input:}\end{flushleft}
 Same as Protocol \ref{prot:random_pauli_improved}.
\begin{flushleft}\underline{Procedure:}\end{flushleft}
All steps are the same as Protocol \ref{prot:random_pauli_improved} except step 3, which we describe here.
For qubits $k_1,\dots,k_m$, generate a circuit $U$ from the $m$ qubit Clifford group $\mathrm{Cl}(2^m)$ uniformly at random and apply it to the $m$ qubits. Then, perform computational basis measurement and record the resulting $m$-bit string $b$. Compute the classical shadow
\begin{align}
    \sigma=(2^m + 1)U^\dagger \vert b\rangle\langle b\vert U-\mathbb{I}.\label{eqn:clifford_shadow}
\end{align}
\end{protocol_framed}}

\begin{lemma}\label{lem:trace_equal_shadow_overlap_clifford}
For $\omega$ of Protocol \ref{prot:random_clifford_improved}, we have
\begin{align}
\mathrm{tr}(L\rho)=\mathbb{E}[\omega].
\end{align}
\end{lemma}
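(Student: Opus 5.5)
The proof follows Lemma~\ref{lem:trace_equal_shadow_overlap_pauli}. The only change is the measurement on the $m$ selected qubits, so the task reduces to showing that the random Clifford shadow of Eq.~\eqref{eqn:clifford_shadow} is an unbiased estimator of the conditional post-measurement state on those qubits.

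\emph{Step 1: condition on the first two steps.} The choice of $k$ and the outcome $z_k$ are generated exactly as in Protocol~\ref{prot:random_pauli_improved}, so the first step of the proof of Lemma~\ref{lem:trace_equal_shadow_overlap_pauli} carries over verbatim. The subset $k$ is uniform over the $\binom{n}{m}$ choices, and $z_k$ occurs with probability $\mathrm{tr}(\langle z_k\vert\rho\vert z_k\rangle)$. This gives
\begin{align}
\mathrm{tr}[L\rho]=\underset{k,z_k}{\mathbb{E}}\,\mathrm{tr}\left(L_{z_k}\rho_{z_k}\right),\qquad \rho_{z_k}\equiv\frac{\langle z_k\vert\rho\vert z_k\rangle}{\mathrm{tr}(\langle z_k\vert\rho\vert z_k\rangle)}.
\end{align}
Here $\rho_{z_k}$ is the normalized state left on qubits $k$ after the $Z$-basis outcomes $z_k$ are observed on the complement. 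The $Z$-measurements on the complement commute with any unitary acting only on $k$, so it makes no difference that the Clifford $U$ is applied before all qubits are measured together. The joint distribution of $(z_k,b)$ is the same as measuring the complement first and then measuring $U\rho_{z_k}U^\dagger$.

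\emph{Step 2: unbiasedness.} Conditioned on $(k,z_k)$, the bitstring $b$ has probability $\langle b\vert U\rho_{z_k}U^\dagger\vert b\rangle$. The averaged map $\mathcal{M}(X)=\mathbb{E}_{U,b}[U^\dagger\vert b\rangle\langle b\vert U]$ is the $m$-qubit Clifford measurement channel. Because the Clifford group is a unitary $3$-design (a $2$-design suffices here), this channel is the depolarizing channel
\begin{align}
\mathcal{M}(X)=\frac{X+\mathrm{tr}(X)\mathbb{I}}{2^m+1}.
\end{align}
This is the standard computation from Ref.~\cite{huang2020predicting}. Inverting it gives $\mathcal{M}^{-1}(Y)=(2^m+1)Y-\mathrm{tr}(Y)\mathbb{I}$, so $\mathbb{E}[\sigma\mid k,z_k]=\rho_{z_k}$, with $\sigma$ as in Eq.~\eqref{eqn:clifford_shadow}.

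\emph{Step 3: linearity.} By linearity of the trace,
\begin{align}
\mathbb{E}[\omega\mid k,z_k]=\mathrm{tr}(L_{z_k}\rho_{z_k}).
\end{align}
Averaging over $(k,z_k)$ and comparing with the display in Step~1 gives $\mathbb{E}[\omega]=\mathrm{tr}(L\rho)$.

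The only step that needs real care is Step~2, specifically the $2$-design twirl that yields the depolarizing form of $\mathcal{M}$. I would cite Ref.~\cite{huang2020predicting} for it rather than rederive it, after checking that the normalization $(2^m+1)$ and the identity shift in Eq.~\eqref{eqn:clifford_shadow} match the $m$-qubit dimension $2^m$. I would also state explicitly the commutation argument that justifies conditioning on $z_k$.
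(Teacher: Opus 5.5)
Your proposal is correct and follows the same route as the paper. The paper likewise reduces to the proof of Lemma~\ref{lem:trace_equal_shadow_overlap_pauli}, notes that the only changed step is the unbiasedness $\mathbb{E}[\sigma]=\langle z_k\vert\rho\vert z_k\rangle/\mathrm{tr}(\langle z_k\vert\rho\vert z_k\rangle)$, and asserts this for the Clifford shadow, whereas you verify it explicitly by inverting the $m$-qubit Clifford measurement channel and add the commutation argument. One small notational fix: write that channel as $\mathcal{M}(X)=\mathbb{E}_U\sum_b\langle b\vert UXU^\dagger\vert b\rangle\,U^\dagger\vert b\rangle\langle b\vert U$, since the distribution of $b$ depends on the input state.
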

\begin{proof}
The proof is identical to Lemma \ref{lem:trace_equal_shadow_overlap_pauli}. Specifically, the difference between the two proofs only occurs during the step of
\begin{align}
\frac{\langle z_k\vert\rho\vert z_k\rangle}{\mathrm{tr}(\langle z_k\vert\rho\vert z_k\rangle)}=\underset{\mathrm{shadow}s~\sigma}{\mathbb{E}}[\sigma].
\end{align}
In Protocol \ref{prot:random_pauli_improved}, $\sigma$ is a shadow computed from random Pauli measurements according to Eq.~\ref{eqn:pauli_shadow}, and the expectation is over random Pauli basis and measurement outcomes. In Protocol \ref{prot:random_clifford_improved}, $\sigma$ is a shadow computed from random Clifford measurements according to Eq.~\ref{eqn:clifford_shadow}, and the expectation is over random Clifford circuits and measurement outcomes.
\end{proof}

\begin{theorem}\label{thm:shadow_implies_fidelity_clifford}
Theorem \ref{thm:shadow_implies_fidelity_pauli} also holds for $\omega$ as defined in Protocol \ref{prot:random_clifford_improved}.
\end{theorem}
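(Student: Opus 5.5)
The plan is to observe that Theorem~\ref{thm:shadow_implies_fidelity_pauli} depends on the measurement protocol only through the identity $\mathbb{E}[\omega]=\mathrm{tr}(L\rho)$ and through properties of the operator $L$ itself. Both inputs are already available for Protocol~\ref{prot:random_clifford_improved}. The operator $L$ is literally the same as in Lemma~\ref{lem:eigenvector}, since Protocol~\ref{prot:random_clifford_improved} changes only step~3. Lemma~\ref{lem:trace_equal_shadow_overlap_clifford} then supplies $\mathbb{E}[\omega]=\mathrm{tr}(L\rho)$ for the Clifford estimator. Consequently the transition matrix $P=S^{-1/2}F^\dagger LFS^{1/2}$, its spectral gap, and hence $\tau$ are unchanged. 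The whole proof therefore reduces to a statement about $\mathrm{tr}(L\rho)$, with no reference to how the shadows were taken.

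The steps, in order, are as follows. First, recall from Lemma~\ref{lem:eigenvector} that $L\vert\psi\rangle=\vert\psi\rangle$. Also note that $L$ is Hermitian and satisfies $0\preceq L\preceq \mathbb{I}$: each $L_{z_k}$ is a rank-one projector, so each summand $\vert z_k\rangle\langle z_k\vert\otimes L_{z_k}$ is a projector, and averaging over $\binom{n}{m}$ choices of $k$ keeps $L$ between $0$ and $\mathbb{I}$.

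For the direction ``high fidelity implies high overlap'', write $\mathrm{tr}(L\rho)\ge \langle\psi\vert\rho\vert\psi\rangle\cdot 1+\mathrm{tr}\big((\mathbb{I}-\vert\psi\rangle\langle\psi\vert)\rho(\mathbb{I}-\vert\psi\rangle\langle\psi\vert)L\big)$. The cross terms vanish because $\vert\psi\rangle$ is an eigenvector of $L$, and the last trace is nonnegative since $L\succeq 0$. Combining this with Lemma~\ref{lem:trace_equal_shadow_overlap_clifford} gives $\mathbb{E}[\omega]\ge 1-\varepsilon$.

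For the converse direction, I would use the fact, as in Theorem~4 of \cite{huang2024certifying}, that $P$ is similar to $L$ restricted to the relevant subspace. This gives $L\preceq \vert\psi\rangle\langle\psi\vert+\lambda_1(\mathbb{I}-\vert\psi\rangle\langle\psi\vert)$ with $\lambda_1=1-1/\tau$. It follows that $1-\varepsilon\le\mathrm{tr}(L\rho)\le F+(1-1/\tau)(1-F)$, where $F=\langle\psi\vert\rho\vert\psi\rangle$. Rearranging gives $F\ge 1-\tau\varepsilon$.

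The only point needing care, and thus the main (mild) obstacle, is confirming that the spectral relation between $P$ and $L$ from \cite{huang2024certifying} carries over verbatim to our redefined $L_{z_k}$. In particular, one must check that every eigenvector of $L$ orthogonal to $\vert\psi\rangle$ has eigenvalue at most $\lambda_1$. Two facts suffice for this. First, $L$ is block-structured with supported transitions only between bitstrings differing on the chosen $m$ bits. Second, the similarity transform $F,S$ depends only on the amplitudes of $\vert\psi\rangle$. Both facts are independent of the measurement scheme, so this verification is identical to the Pauli case in Theorem~\ref{thm:shadow_implies_fidelity_pauli}, and the theorem follows.
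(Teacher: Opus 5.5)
Your proposal is correct and follows the same route as the paper, whose proof simply defers to Theorem~4 of \cite{huang2024certifying}. Since Protocol~\ref{prot:random_clifford_improved} alters only step~3, the operator $L$ (and hence $P$ and $\tau$) is unchanged, and Lemma~\ref{lem:trace_equal_shadow_overlap_clifford} gives $\mathbb{E}[\omega]=\mathrm{tr}(L\rho)$; the claim then reduces to the operator sandwich $\vert\psi\rangle\langle\psi\vert\preceq L\preceq\vert\psi\rangle\langle\psi\vert+(1-1/\tau)(\mathbb{I}-\vert\psi\rangle\langle\psi\vert)$, which you spell out, and the only step you gloss is that the upper bound comes from $P$ being similar to $L$ restricted to the support of $\vert\psi\rangle$ (with $L\succeq 0$ annihilating zero-amplitude basis states), a property of $L$ alone that Theorem~4 of \cite{huang2024certifying} supplies.
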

\begin{proof}
The proof is identical to Theorem 4 of \cite{huang2024certifying}.
\end{proof}

\begin{theorem}\label{thm:improved_clifford_sample_complexity_estimator}
Using $T=(\frac{6.5}{\varepsilon^2}+\frac{2}{3\varepsilon}\cdot2^m)\cdot\ln\frac{2}{\delta}$ samples of the state $\rho$, the empirical average $\frac{1}{T}\sum_{t=1}^T\omega_t$ computed by Protocol \ref{prot:random_clifford_improved} is within $\varepsilon$ additive distance from the shadow overlap $\mathbb{E}[\omega]$ with probability at least $1-\delta$.
\end{theorem}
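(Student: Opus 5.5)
The form $T = (\frac{6.5}{\varepsilon^2} + \frac{2}{3\varepsilon} M)\ln\frac{2}{\delta}$ is exactly what Bernstein's inequality gives. For i.i.d. variables with variance at most $\sigma^2$ and $|\omega_t - \mathbb{E}[\omega]| \le M$, it yields
\begin{align}
\Pr\left[\left\vert\frac{1}{T}\sum_{t=1}^T\omega_t-\mathbb{E}[\omega]\right\vert>\varepsilon\right]\leq 2\exp\left(-\frac{T\varepsilon^2}{2\sigma^2+\frac{2}{3}M\varepsilon}\right).
\end{align}
So I would apply the two-sided Bernstein inequality to the i.i.d. scores $\omega_1,\dots,\omega_T$ of Protocol \ref{prot:random_clifford_improved}. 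Setting the right-hand side equal to $\delta$ and solving gives $T=(\frac{2\sigma^2}{\varepsilon^2}+\frac{2M}{3\varepsilon})\ln\frac{2}{\delta}$. With $\sigma^2=3.25$ this matches the claimed count provided $M\le 2^m$. Independence across rounds is immediate: each round draws a fresh qubit subset $k$, fresh outcomes $z_k$, a fresh Clifford $U$, and a fresh outcome $b$. By Lemma \ref{lem:trace_equal_shadow_overlap_clifford}, the common mean is $\mathrm{tr}(L\rho)$.

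\textbf{Step 1 (variance).} I take the bound $\mathrm{Var}[\omega]\le 3.25$ derived just above the protocol. It combines the within-$z_k$ shadow variance, at most $3$ by Proposition 1 of \cite{huang2020predicting} since $\mathrm{tr}(L_{z_k}^2)=1$, with the between-$z_k$ spread of the conditional means, at most $1/4$ because those means lie in $[0,1]$. Since $2\sigma^2=6.5$, this produces the first term of $T$.

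\textbf{Step 2 (range), the main obstacle.} I need an almost-sure bound on $|\omega_t-\mathbb{E}[\omega]|$. Writing $\ket{\phi}=U^\dagger\ket{b}$, the score is
\begin{align}
\omega=\mathrm{tr}\big(L_{z_k}\,[(2^m+1)\ket{\phi}\bra{\phi}-\mathbb{I}]\big)=(2^m+1)\,|\langle\Psi_{z_k}|\phi\rangle|^2-1,
\end{align}
using $\mathrm{tr}(L_{z_k})=1$. Hence $\omega\in[-1,2^m]$. Because $\mathbb{E}[\omega]=\mathrm{tr}(L\rho)\in[0,1]$ ($L$ is an average of projectors, so $0\le L\le\mathbb{I}$), the deviation satisfies $\omega-\mathbb{E}[\omega]\in[-2,2^m]$. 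Thus $|\omega_t-\mathbb{E}[\omega]|\le 2^m$ whenever $m\ge1$, which gives $M=2^m$ and the second term.

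The delicate point is that the Bernstein form I use needs a one-sided or absolute bound on the centered variable, so I will check that the constant $\frac{2}{3}M$ with $M=2^m$ is legitimate in both tails. The lower tail is even better, with deviation at most $2\le 2^m$. If the two-sided statement instead demanded $M$ equal to the full width $2^m+1$, I would apply the one-sided Bernstein bound separately to the upper and lower tails and take a union bound, which yields the factor $2$ inside $\ln\frac{2}{\delta}$. Plugging in then completes the proof.
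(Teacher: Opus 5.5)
Your proposal is correct and matches the paper's proof: the two-sided Bernstein inequality with $\mathrm{Var}[\omega]\le 3.25$ and range parameter $2^m$. The range comes from $\omega=(2^m+1)\vert\langle b\vert U\vert\Psi_{z_k}\rangle\vert^2-1\in[-1,2^m]$. Your extra check that the centered deviation $\omega-\mathbb{E}[\omega]$ lies in $[-2,2^m]$ (using $0\le L\le\mathbb{I}$, so $\mathbb{E}[\omega]\in[0,1]$) is slightly more careful than the paper, which plugs the raw bound $-1\le\omega\le 2^m$ directly into Bernstein.
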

\begin{proof}
We first bound the individual values in the sum.
\begin{align}
\mathrm{tr}(L_{z_k}\sigma)&=\mathrm{tr}\left\{\vert\Psi_{z_k}\rangle\langle\Psi_{z_k}\vert\left[(2^m+1)U^\dagger\vert b\rangle\langle b\vert U-\mathbb{I}\right]\right\}\\
&=(2^m+1)\cdot\vert\langle b\vert U\vert\Psi_{z_k}\rangle\vert^2-1.
\end{align}
Therefore, $-1\leq\mathrm{tr}(L_{z_k}\sigma)\leq2^m$. Further, we know $\mathrm{Var}[\omega]\leq3.25$. By Bernstein's inequality,
\begin{align}
\Pr\left[\left\vert\frac{1}{T}\sum_{t=1}^T\omega_t-\mathbb{E}[\omega]\right\vert>\varepsilon\right]\leq2e^{-\frac{T\varepsilon^2}{2\times 3.25+\frac{2}{3}\times2^m\varepsilon}}\leq\delta.
\end{align}
Solving for $T$ completes the proof.
\end{proof}

\begin{theorem}\label{thm:improved_clifford_sample_complexity_certification}
Given any $n$-qubit hypothesis state $\vert\psi\rangle$ with a relaxation time $\tau\geq 1$, error $\varepsilon>0$, failure probability $\delta>0$, and
\begin{align}
T=\left(\frac{104\tau^2}{\varepsilon^2}+\frac{8\tau}{3\varepsilon}\cdot2^m\right)\cdot\ln\frac{2}{\delta}
\end{align}
samples of an unknown $n$-qubit state $\rho$, with probability at least $1-\delta$, Protocol \ref{prot:random_clifford_improved} with shadow overlap threshold $\omega_{\mathrm{threshold}}\equiv1-\varepsilon'=\frac{3\varepsilon}{4\tau}$ will correctly output \textsc{Failed} if the fidelity is low, i.e. $\langle\psi\vert\rho\vert\psi\rangle<1-\varepsilon$, and will correctly output \textsc{Certified} if the fidelity is high, i.e. $\langle\psi\vert\rho\vert\psi\rangle\geq1-\frac{\varepsilon}{2\tau}$.
\end{theorem}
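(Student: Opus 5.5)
The plan is to combine the concentration bound of Theorem~\ref{thm:improved_clifford_sample_complexity_estimator} with the shadow-overlap/fidelity equivalence of Theorem~\ref{thm:shadow_implies_fidelity_clifford}. This mirrors the structure of Theorem~\ref{thm:fixed_huang_verification_completeness_and_soundness}, but uses the Bernstein-based estimator bound in place of the Hoeffding one.

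\textbf{Threshold and estimation accuracy.} I read the threshold as $\omega_{\mathrm{threshold}}=1-\varepsilon'$ with $\varepsilon'=\frac{3\varepsilon}{4\tau}$. The two regimes to be separated are as follows.
\begin{itemize}
\item If $\langle\psi\vert\rho\vert\psi\rangle\geq 1-\frac{\varepsilon}{2\tau}$, the second half of Theorem~\ref{thm:shadow_implies_fidelity_clifford} gives $\mathbb{E}[\omega]\geq 1-\frac{\varepsilon}{2\tau}$.
\item If $\langle\psi\vert\rho\vert\psi\rangle<1-\varepsilon$, the contrapositive of the first half gives $\mathbb{E}[\omega]<1-\frac{\varepsilon}{\tau}$. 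Indeed, were $\mathbb{E}[\omega]\geq 1-\varepsilon/\tau$, the fidelity would be at least $1-\tau\cdot\varepsilon/\tau=1-\varepsilon$.
\end{itemize}
The threshold $1-\frac{3\varepsilon}{4\tau}$ sits a distance $\frac{\varepsilon}{4\tau}$ from both $1-\frac{\varepsilon}{2\tau}$ and $1-\frac{\varepsilon}{\tau}$. It therefore suffices that the empirical mean $\hat\omega$ lies strictly within $\frac{\varepsilon}{4\tau}$ of $\mathbb{E}[\omega]$.

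\textbf{Applying the estimator bound.} I invoke Theorem~\ref{thm:improved_clifford_sample_complexity_estimator} with additive accuracy $\varepsilon\to\frac{\varepsilon}{4\tau}$. The required number of samples becomes
\[
\left(\frac{6.5\cdot 16\tau^2}{\varepsilon^2}+\frac{2}{3}\cdot\frac{4\tau}{\varepsilon}\cdot 2^m\right)\ln\frac{2}{\delta}
=\left(\frac{104\tau^2}{\varepsilon^2}+\frac{8\tau}{3\varepsilon}2^m\right)\ln\frac{2}{\delta},
\]
which is exactly the stated $T$. With probability at least $1-\delta$, we then have $|\hat\omega-\mathbb{E}[\omega]|\leq\frac{\varepsilon}{4\tau}$.

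\textbf{Deriving the two cases.} In the high-fidelity case, $\hat\omega\geq 1-\frac{\varepsilon}{2\tau}-\frac{\varepsilon}{4\tau}=1-\varepsilon'$, so the protocol outputs \textsc{Certified}. In the low-fidelity case, $\hat\omega<1-\frac{\varepsilon}{\tau}+\frac{\varepsilon}{4\tau}=1-\varepsilon'$, so it outputs \textsc{Failed}.

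\textbf{Main obstacle.} The only delicate point is boundary strictness. The estimator theorem controls the event $|\hat\omega-\mathbb{E}\omega|>\varepsilon$, so its complement gives a non-strict inequality. In the \textsc{Failed} case this is harmless, because the strict inequality on $\mathbb{E}[\omega]$ propagates through. In the \textsc{Certified} case the rule uses $\geq$, which matches the non-strict bound. I also need to make sure the stated threshold "$\frac{3\varepsilon}{4\tau}$" is read as $\varepsilon'$, consistent with Theorem~\ref{thm:fixed_huang_verification_completeness_and_soundness}.
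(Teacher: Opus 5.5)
Your proposal is correct and matches the paper's proof: the paper also applies Theorem~\ref{thm:improved_clifford_sample_complexity_estimator} with accuracy $\varepsilon/\tau-\varepsilon'=\varepsilon/(4\tau)$ and then uses the completeness/soundness argument of Theorem 6 of Ref.~\cite{huang2024certifying} via the overlap--fidelity equivalence. You simply write that argument out explicitly, and your arithmetic for $104\tau^2/\varepsilon^2$ and $\frac{8\tau}{3\varepsilon}2^m$, your reading of the threshold as $\varepsilon'=3\varepsilon/(4\tau)$, and your handling of the strict/non-strict boundaries are all right.
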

\begin{proof}
The proof is identical to Theorem 6 of \cite{huang2024certifying} except we use Theorem \ref{thm:improved_clifford_sample_complexity_estimator}. Simply substituting $\varepsilon\rightarrow\varepsilon/\tau-\varepsilon'=\frac{\varepsilon}{4\tau}$ of Theorem 5 of \cite{huang2020predicting} or Theorem \ref{thm:improved_clifford_sample_complexity_estimator} of this work gives the sample complexity of Theorem 6 of \cite{huang2024certifying} or this theorem, respectively.
\end{proof}

We note that there is a substantial improvement in the sample complexity compared to Theorem \ref{thm:random_pauli_verification} for the random Pauli shadow overlap protocol. First, there is an apparent quadratic improvement from $2^{2m}$ to $2^m$. Moreover, the exponential factor in $m$ is additionally improved to proportional to $\frac{\tau}{\varepsilon}$ instead of $(\frac{\tau}{\varepsilon})^2$. The term proportional to $(\frac{\tau}{\varepsilon})^2$ only has a constant factor independent of $m$.

There is also a substantial improvement in sample complexity compared to Theorem \ref{thm:fixed_huang_verification_completeness_and_soundness} for the original shadow overlap protocol of \cite{huang2024certifying}, with a quartic improvement from $2^{4m}$ to $2^m$.

The exponential dependence on $m$ can be further removed by using the median of means estimator, which requires $\frac{64\mathrm{Var}\ln(1/\delta)}{\varepsilon^2}$ samples to estimate the mean of a random variance with error at most $\varepsilon$ and failure probability at most $\delta$. This is exactly as discussed in \cite{huang2020predicting}, and the sample complexity only depends on the variance. The disadvantage of the approach is that is has a large constant factor. Specifically, substituting $\varepsilon\rightarrow\frac{\varepsilon}{4\tau}$ and $\mathrm{Var}=3.25$ yields a sample complexity of $3{,}328\cdot\frac{\tau^2}{\varepsilon^2}\cdot\ln\frac{1}{\delta}$. The advantage of the median of means estimator would only be apparent for large $m$. Under the median of means estimator scheme, the special case of $m=n$ becomes exactly the random Clifford classical shadow approach in \cite{huang2020predicting} for estimating the fidelity.

This makes apparent the fact that our protocol is a unification of the direct classical shadow approach of \cite{huang2020predicting} and the shadow overlap approach of \cite{huang2024certifying} for certifying the quantum state fidelity. On the one extreme, the direct classical shadow approach using random Clifford circuits is sample efficient in that the sample complexity does not increase with $n$, and states fidelities can be estimated in an unbiased manner since the relaxation time $\tau$ is $1$. One can replace random Cliffords with random single-qubit Pauli measurements and the estimator is unbiased since the relaxation time is also 1, but the sample complexity is exponential in $n$. On the other extreme, the shadow overlap approach that measures one random qubit in a random Pauli basis achieves low sample complexity that is independent of $n$, but the relaxation time $\tau$ is large and low fidelity states cannot be \textsc{Certified}.

Our work bridges the gap by introducing a family of protocols that connect the two extremes. We can go from the extreme of large relaxation time to the minimum relaxation time by increasing the level $m$ of the protocol. Additionally, the sample complexity is $O(2^{2m})$ for the random Pauli protocol that preserves the desirable property that the measurement is easy to implement. For the random Clifford protocol, the sample complexity is $O(2^m)$ or independent of $m$ for the mean estimator or the median of means estimator, respectively. The complexity of implementing the measurement, which is $O(m^2)$ since implementing a random Clifford circuit requires quadratically many gates, can be controlled by keeping $m$ moderate.

Although \cite{huang2020predicting} also introduces a family of protocols parameterized by the level $m$, the choice of operator $L$ does not lead to a correspondence with the direct random Clifford classical shadow approach of \cite{huang2020predicting} even when $m=n$, and the relaxation time remains large as $m$ increases. Further, the sample complexity is $O(2^{4m})$, which prohibits practical use for large $m$.

\subsection{Certifying the state in our experiment}\label{sec:supp_cert_experiment}

The sample complexity used in Theorem \ref{thm:improved_clifford_sample_complexity_certification} is for both completeness and soundness. However, once the experimental results on the shadow overlap protocol is collected, one can carry out a hypothesis test to see if the lab state has fidelity at least $1-\varepsilon$ and not worry about the completeness side. This motivates the following corollary.

\begin{corollary}\label{cor:sample_complexity_soundness}
If $\varepsilon\geq\tau\varepsilon'$ and the number of shadows collected by Protocol \ref{prot:random_clifford_improved} is $T\geq(\frac{6.5}{(\varepsilon/\tau-\varepsilon')^2}+\frac{2}{3\times(\varepsilon/\tau-\varepsilon')}\cdot 2^m)\cdot\ln\frac{1}{\delta}$, then either the protocol outputs \textsc{Failed} with probability at least $1-\delta$, or $\langle\psi\vert\rho\vert\psi\rangle\geq1-\varepsilon$.
\end{corollary}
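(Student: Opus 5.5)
We argue by contrapositive. Assume $\langle\psi\vert\rho\vert\psi\rangle<1-\varepsilon$; we show that Protocol~\ref{prot:random_clifford_improved} with threshold $1-\varepsilon'$ outputs \textsc{Failed} with probability at least $1-\delta$. This proves the dichotomy. Only soundness is needed here, so we use a one-sided tail bound. That is why $\ln(1/\delta)$ appears instead of the $\ln(2/\delta)$ of Theorem~\ref{thm:improved_clifford_sample_complexity_estimator}.

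\textbf{Step 1: bound the expected overlap.} By Theorem~\ref{thm:shadow_implies_fidelity_clifford}, if $\mathbb{E}[\omega]\geq 1-\varepsilon/\tau$, then $\langle\psi\vert\rho\vert\psi\rangle\geq 1-\varepsilon$. Taking the contrapositive, low fidelity gives $\mathbb{E}[\omega]<1-\varepsilon/\tau$. Lemma~\ref{lem:trace_equal_shadow_overlap_clifford} ensures that $\mathbb{E}[\omega]=\mathrm{tr}(L\rho)$ is exactly the mean of the i.i.d.\ scores $\omega_t$. The gap between this mean and the acceptance threshold is then
\begin{align}
(1-\varepsilon')-\mathbb{E}[\omega]>\varepsilon/\tau-\varepsilon'\equiv\Delta .
\end{align}
The hypothesis $\varepsilon\geq\tau\varepsilon'$ guarantees $\Delta\geq 0$. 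If $\Delta=0$ the sample bound is vacuous, so we may take $\Delta>0$.

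\textbf{Step 2: concentration.} The protocol certifies only if $\hat\omega\geq 1-\varepsilon'$. By Step 1, this requires $\hat\omega-\mathbb{E}[\omega]>\Delta$. We bound the probability of this upward deviation with one-sided Bernstein, reusing the ingredients from the proof of Theorem~\ref{thm:improved_clifford_sample_complexity_estimator}:
\begin{itemize}
\item each $\omega_t\in[-1,2^m]$, so $\omega_t-\mathbb{E}[\omega]\leq 2^m$ (taking $\mathbb{E}[\omega]\ge 0$, which holds since $L\succeq0$);
\item $\mathrm{Var}[\omega]\leq 3.25$.
\end{itemize}
This yields
\begin{align}
\Pr[\hat\omega-\mathbb{E}[\omega]>\Delta]\leq \exp\!\left(-\frac{T\Delta^2}{2\cdot 3.25+\frac{2}{3}2^m\Delta}\right).
\end{align}
Setting the right-hand side at most $\delta$ and solving for $T$ gives exactly the stated condition $T\geq\left(\frac{6.5}{\Delta^2}+\frac{2}{3\Delta}2^m\right)\ln\frac1\delta$. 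Under that condition the protocol outputs \textsc{Failed} with probability at least $1-\delta$.

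\textbf{Main obstacle.} The delicate point is Step 1: the strict inequality must be transferred correctly through the contrapositive of the shadow-overlap-to-fidelity implication. We also need the one-sided Bernstein form to use only the upper range bound $2^m$ and the variance bound. Both steps are routine once the direction of the deviation is fixed. The remaining care is the bookkeeping showing that the constants $6.5=2\cdot 3.25$ and $2/3$ match the statement.
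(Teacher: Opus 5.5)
Your proof is correct and follows the paper's argument essentially line for line: the contrapositive of Theorem~\ref{thm:shadow_implies_fidelity_clifford} gives $\mathbb{E}[\omega]<1-\varepsilon/\tau$, and the one-sided Bernstein bound from the proof of Theorem~\ref{thm:improved_clifford_sample_complexity_estimator}, applied at deviation $\varepsilon/\tau-\varepsilon'$, gives the stated $T$. Your observation that $\mathrm{tr}(L\rho)\geq 0$, so the upward deviation of each $\omega_t$ is at most $2^m$, is a small justification the paper leaves implicit.
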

\begin{proof}
If $\langle\psi\vert\rho\vert\psi\rangle<1-\varepsilon$, then $\mathbb{E}[\omega]<1-\varepsilon/\tau$. To pass the protocol, we need $\hat{\omega}-\mathbb{E}[\omega]>\varepsilon/\tau-\varepsilon'$. From the proof of Theorem~\ref{thm:improved_clifford_sample_complexity_estimator}, switching from two-sided to one-sided Bernstein's inequality yields
\begin{align}
\Pr\left[\frac{1}{T}\sum_{t=1}^T\omega_t-\mathbb{E}[\omega]>\varepsilon\right]\leq e^{-\frac{T\varepsilon^2}{2\times 3.25+\frac{2}{3}\times2^m\varepsilon}}.\label{eqn:one_sided_bernstein}
\end{align}
Then, substituting $\varepsilon$ in the Eq.~\ref{eqn:one_sided_bernstein} with $\varepsilon/\tau-\varepsilon'$ shows that the protocol must output \textsc{Failed} with probability at least $1-\delta$.
\end{proof}

For our experiment, we are not in the ideal perfect honest fidelity case, so we need Lemma \ref{lem:depolarizing_noise} to bound the infidelity of the adversary state relative to the laboratory state. Assume that the adversary can prepare a state $\lvert\psi_{\rm{adv}}\rangle$ with fidelity $\vert\langle\psi_{\rm{adv}}|\psi_{\rm{hon}}\rangle\vert^2\leq1-\varepsilon_{\rm{CNL}}=0.01$. Further, we have empirically estimated that the honest lab state infidelity is at least $\varepsilon_{\rm{hon}}\leq0.11$, Lemma \ref{lem:depolarizing_noise} implies that the adversary state fidelity to the lab state is
\begin{align}
\vert\langle0\vert C'^{\dagger}\sigma_{\mathrm{lab}}C'\vert0\rangle\vert^2\leq1-(\varepsilon_{\mathrm{CNL}}-\varepsilon_{\mathrm{hon}})+\varepsilon_{\mathrm{hon}}\frac{1}{2^n}=0.12
\end{align}
under depolarizing noise. Therefore, we set $\varepsilon=1-0.12=0.88$. The shadow overlap test threshold is $\omega_{\mathrm{threshold}}=1-\varepsilon'=0.91$, which means $\varepsilon'=0.09$. Finally, the probability (risk) that a given adversary circuit exceeds the shadow overlap threshold is $\delta$. For this set of parameters, one can achieve $\delta=0.36$ at $T=17{,}316$ samples.

The above corollary cannot guarantee that the state is certified with good security in our particular experiment since $\delta$ is very large.
As a result, we use the following result of Zhang \textit{et al.}~\cite{zhang2026efficient} to obtain a tighter bound, which allows us to go to as low as $\delta=0.15$. Ref.~\cite{zhang2026efficient} presents an efficient method to estimate bounds on the sum of conditional means of data random variables $X_1, X_2, \dotsc$ during unseen trials by spot-checking a few trials. This can be mathematically phrased as having a sequence of binary (spot-checking) random variables $Y_1, Y_2, \dotsc$ that determine whether (or not) we have access to the value of $X_i$ at a given trial $i$ with probability $p$ (or $1 - p$). Then, using the observed values of $X_i$, Zhang~\textit{et al.}~\cite{zhang2026efficient} give a protocol to estimate bounds on the sum $S$ of conditional means of the data random variables for the unseen trials, which satisfies the following theoretical guarantee.
\begin{theorem}\label{thm:unpublished}
The procedure described in Ref.~\cite{zhang2026efficient} takes as input the observed values of data random variables $\boldsymbol{X} = \{X_i\}_i$, spot-checking random variables $\boldsymbol{Y} = \{Y_i\}_i$, spot-checking probability $p$, optional calibration trials $\boldsymbol{X}_c$, lower and upper bounds $x_{\min} \leq X_i \leq x_{\max}$ for all $i$, a confidence parameter $\delta$, and outputs a lower bound $b(\delta, x_{\min}, x_{\max}, \boldsymbol{X}_c, \boldsymbol{X}, \boldsymbol{Y})$ on the sum $S$ of conditional means of unobserved trials such that
\begin{align}
\underset{\boldsymbol{X}_c, \boldsymbol{X}, \boldsymbol{Y}}{\Pr}\left[S < b(\delta, p, x_{\min}, x_{\max}, \boldsymbol{X}_c, \boldsymbol{X}, \boldsymbol{Y})\right]<\delta.
\end{align}
\end{theorem}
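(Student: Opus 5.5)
The plan is to view the guarantee as a confidence bound built from a test supermartingale, the standard route for spot-checking estimators of the kind in Ref.~\cite{zhang2026efficient}. Let $\mathcal{F}_{i-1}$ be the history up to trial $i$, and let $\mu_i=\mathbb{E}[X_i\mid\mathcal{F}_{i-1}]$. The quantity to bound is $S=\sum_i(1-Y_i)\mu_i$. I take $Y_i\sim\mathrm{Bernoulli}(p)$ independent of $X_i$ given $\mathcal{F}_{i-1}$; this is what spot-checking means. Then $\mathbb{E}[Y_iX_i/p\mid\mathcal{F}_{i-1}]=\mu_i$, so $\sum_i Y_iX_i/p$ is an unbiased surrogate for $\sum_i\mu_i$ even though only the checked trials are revealed. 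The unseen sum is obtained by subtracting the contribution of the checked trials, which is observed. The bounds $x_{\min}\le X_i\le x_{\max}$ make the increments bounded. For the shadow scores this holds with $x_{\min}=-1$ and $x_{\max}=2^m$, as in the proof of Theorem~\ref{thm:improved_clifford_sample_complexity_estimator}.

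\textbf{Step 1: test factors.} For each trial I choose a nonnegative test factor
\[
F_i(\lambda_i)=1+\lambda_i\Bigl(\tfrac{Y_iX_i}{p}-\mu_i\Bigr)-\lambda_i\bigl((1-Y_i)\mu_i-\ldots\bigr),
\]
or more simply an exponential-type factor in the centered variable $Z_i=(Y_i/p-1)X_i$, a combination whose conditional mean relates to $S$. Each $\lambda_i\ge 0$ is predictable: it is chosen from the calibration trials $\boldsymbol{X}_c$ and the past observations. Nonnegativity of $F_i$ is enforced by capping $\lambda_i$ using $x_{\min}$, $x_{\max}$ and $p$. With these choices, $\mathbb{E}[F_i\mid\mathcal{F}_{i-1}]\le 1$ whenever the hypothesised relation between the $\mu_i$ and $S$ holds.

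\textbf{Step 2: invert the test.} The product $M=\prod_iF_i$ is a nonnegative supermartingale with $\mathbb{E}M\le1$. Ville's (or Markov's) inequality then gives $\Pr[M\ge1/\delta]\le\delta$. Define $b$ as the smallest value $s$ such that the null hypothesis $S=s$ is not rejected. Any candidate below $b$ is rejected, so the event $S<b$ can occur only if the true $S$ is rejected, which happens with probability at most $\delta$. This yields the stated inequality. The calibration data serves only to tune the $\lambda_i$; because the tuning is predictable, it does not affect validity.

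\textbf{Main obstacle.} The hard part is that $S$ involves conditional means of the \emph{unseen} trials, which are not individually measurable. The inequality is therefore over a random target, not a fixed parameter, and the supermartingale property must hold simultaneously under adaptively varying $\mu_i$. I would handle this by writing the null as the linear statement that $\sum_i(1-Y_i)\mu_i$ equals the candidate value. I would then exploit the fact that each factor is affine in $\mu_i$, with $\mu_i$ ranging over $[x_{\min},x_{\max}]$. Taking the worst case over this interval, trial by trial, should make the supermartingale bound hold for every admissible sequence of conditional means.
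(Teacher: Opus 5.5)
The paper gives no proof of this statement. It is imported as a black box from the work of Zhang et al. The only argument the paper supplies is the subsequent reduction, which embeds the i.i.d.\ scores $\omega_i$ into a sequential spot-checking experiment with a data-dependent stopping time.

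Your sketch therefore has to stand on its own. The framework you choose is sound: a nonnegative test supermartingale, Ville's inequality at the stopping time, and predictable tuning from calibration data. However, the step you call the main obstacle is not resolved, and it is essentially the whole content of the theorem. There are three problems.

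\emph{The factors are not computable.} The affine factor contains the unknown $\mu_i$ on every trial, and it is left unfinished. The alternative uses $Z_i=(Y_i/p-1)X_i$, which equals $-X_i$ precisely on the trials whose value is never revealed.

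\emph{There is no null to test.} ``$S=s$'' is not a hypothesis about the law of the process. It is an event depending on the future $Y_j$ and $\mu_j$. So the claim that $\mathbb{E}[F_i\mid\mathcal{F}_{i-1}]\le 1$ ``whenever the hypothesised relation between the $\mu_i$ and $S$ holds'' has no trial-by-trial meaning. There is no null measure under which $\prod_iF_i$ is a supermartingale that you could then invert.

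\emph{The proposed repair fails.} Replacing each factor by its own worst case over $\mu_i\in[x_{\min},x_{\max}]$ discards the coupling constraint $\sum_i(1-Y_i)\mu_i=s$. What remains is a statistic that does not depend on $s$, hence gives no bound on $S$. And in an exponential factor with predictable $\lambda_i$ multiplying $\mu_i$, the unknown part accumulates to $\sum_i\lambda_i(1-Y_i)\mu_i$, not to a function of $S$.

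The missing idea is that no null is needed. You want a process that is a supermartingale under the \emph{true} law, whatever the sequence $\mu_i$ is, and that splits on every sample path into an observable part times a known function of $S$.

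One way is to fix $\beta>0$ in advance. From data available before trial $i$, choose a nonnegative function $f_i$ and a number $g_i\ge 0$. Set $F_i=f_i(X_i)$ on checked trials and $F_i=g_i$ on unchecked ones, subject to
\[
p\,\mathbb{E}[f_i(X_i)\mid\mathcal{F}_{i-1}]+(1-p)\,g_i\,e^{-\beta\mu_i}\le 1
\]
for every conditional law of $X_i$ on $[x_{\min},x_{\max}]$. This is where a worst case over $\mu_i$ legitimately enters: for affine $f_i$ the left side is convex in $\mu_i$, so only the two endpoints need checking. Then $M=\bigl(\prod_iF_i\bigr)e^{-\beta S}$ is a test supermartingale. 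Take $b=\beta^{-1}\bigl(\sum_i\ln F_i-\ln(1/\delta)\bigr)$. The event $S<b$ is exactly $M>1/\delta$, so Ville's inequality gives the claim, including at the random stopping time.

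Alternatively, you can keep your affine martingale factors $1+\lambda_i\bigl(\tfrac{1-p}{p}Y_iX_i-(1-Y_i)\mu_i\bigr)$. These are observable on checked trials and nonnegative after capping $\lambda_i$. But then you must minimise their product \emph{jointly} over all sequences in $[x_{\min},x_{\max}]$ with $\sum_i(1-Y_i)\mu_i=s$; call the minimum $\tilde M(s)$. Because the true sequence is feasible, the pathwise inequality $\tilde M(S)\le M$ holds, and it shows that $b=\inf\{s:\tilde M(s)<1/\delta\}$ is valid.

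Neither the fixed-$\beta$ separation nor the joint minimisation with pathwise domination appears in your sketch.
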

The notation $b(\delta, p, x_{\min}, x_{\max}, \boldsymbol{X}_c, \boldsymbol{X}, \boldsymbol{Y})$ in Thm.~\ref{thm:unpublished} implicitly assumes that only the observed data is used to calculate the lower bound. Note that Thm.~\ref{thm:unpublished} also holds when the trials are stopped according to some rule determined by data available until that trial.

In this study, the data corresponds to independent and identically distributed samples $\{\omega_i\}_{i \in [T]}$, all of which have the same mean $\mathbb{E}[\omega]$. Since all the random variables have the same mean and we have access to a given number $T$ of samples, we specialize the results of Ref.~\cite{zhang2026efficient} to this case and use the following protocol for estimating a lower bound on $\mathbb{E}[\omega]$.

\parbox{0.96\textwidth}{\begin{protocol_framed}[Lower bound estimate on shadow overlap using Ref.~\cite{zhang2026efficient}]\label{prot:lower_bound_protocol_zhang}\quad 
\normalfont
\newline
\begin{flushleft}\underline{Input:}\end{flushleft}
\vspace{-1.5em}
\begin{itemize}
    \item Number of samples $T$
    \item Number of calibration samples $T_c$
    \item Data $\{\omega_i\}_{i \in [T]}$
    \item Bounds $\omega_{\min} \leq \omega_i \leq \omega_{\max}$ for all $i \in [T]$
    \item Confidence/security parameter $\delta$
    \item Parameter $p \in (0, 1)$
\end{itemize}
\begin{flushleft}\underline{Procedure:}\end{flushleft}
\vspace{-1.5em}
\begin{enumerate}
    \item Sample Bernoulli random variables $\bm{Y} = \{Y_i\}_i$ such that $Y_i = 0$ with probability $p$ for all $i$, until the first trial $N \geq T$ where $\sum_{i = 1}^N (1 - Y_i) = T - T_c$.
    \item Associate $\{\omega_{k + T_c}\}_{k \in [T - T_c]}$ with the observed data random variables, corresponding to trials $i \in [N]$ with $Y_i = 0$, and use $\bm{X}_c = \{\omega_k\}_{k \in [T_c]}$ for calibration. For $i \in [N]$, denote $\bm{X}$ as the list of $N$ samples such that $X_i$ is a data point in $\{\omega_{k + T_c}\}_{k \in [T - T_c]}$ if $Y_i = 0$ and $X_i = \bot$ otherwise.
    \item Use the protocol of Ref.~\cite{zhang2026efficient} to output a lower bound $b(\delta, p, \omega_{\min}, \omega_{\max}, \bm{X}_c, \bm{X}, \bm{Y})$ on $\mathbb{E}[\omega]$.
\end{enumerate}
\end{protocol_framed}}

The notation $\bot$ is used to denote that no data is available when $Y_i = 1$. We use a moderately small number $T_c = 500$ of calibration trials. Then, using the results of Ref.~\cite{zhang2026efficient}, we obtain the following theoretical guarantee for Protocol~\ref{prot:lower_bound_protocol_zhang}.
\begin{theorem}
    \label{thm:lower_bound_protocol_zhang}
    Protocol~\ref{prot:lower_bound_protocol_zhang} takes as input the list of $T$ observed values $\{\omega_i\}_{i\in[T]}$, number of calibration trials $T_c$, lower and upper bounds for $\omega_{\min} \leq \omega_i \leq \omega_{\max}$ for all $i\in[T]$, a confidence/security parameter $\delta$, and a parameter $p \in (0, 1)$, and outputs a lower bound $b$ such that
    \begin{align}
    \Pr[\mathbb{E}[\omega] < b] < \delta, \label{eq:lower_bound_protocol_zhang_guarantee}
    \end{align} 
    where the probability is over $\{\omega_i\}_{i \in [T]}$ and the internal randomness of Protocol~\ref{prot:lower_bound_protocol_zhang}.
\end{theorem}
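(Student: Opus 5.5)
The natural route is to reduce Theorem~\ref{thm:lower_bound_protocol_zhang} to Theorem~\ref{thm:unpublished}. We show that the i.i.d.\ data $\{\omega_i\}_{i\in[T]}$, after the relabelling in Protocol~\ref{prot:lower_bound_protocol_zhang}, is distributed exactly as a valid input to the procedure of Ref.~\cite{zhang2026efficient}. We then identify the quantity $S$ bounded there with (a known multiple of) $\mathbb{E}[\omega]$.

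The first step is to construct the fictitious trial sequence. Draw $\bm{Y}$ independently of the data, so that each $Y_i$ is Bernoulli with $\Pr[Y_i=0]=p$. Then define a hypothetical infinite i.i.d.\ sequence $\tilde X_1,\tilde X_2,\dots$, each distributed as $\omega$. On trials with $Y_i=0$, the observed values $X_i$ are filled in order by $\omega_{T_c+1},\dots,\omega_T$. On trials with $Y_i=1$, the $\tilde X_i$ are unseen. Because the $\omega_k$ are i.i.d.\ and independent of $\bm{Y}$, the joint law of $(\bm{X}_c,\bm{X},\bm{Y})$ produced by the protocol equals the law of $(\bm{X}_c,\tilde X\cdot\mathbbm{1}[Y=0],\bm{Y})$ generated by genuine spot-checking of the $\tilde X$ sequence. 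This is an exchangeability/coupling argument. The calibration block $\bm{X}_c=\{\omega_k\}_{k\le T_c}$ is independent of everything else, as Theorem~\ref{thm:unpublished} permits.

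The second step handles the stopping rule. The number of trials $N$ is the first time the count of observed trials reaches $T-T_c$. This is a stopping time adapted to $\bm{Y}$, and hence to the data available up to each trial. The remark after Theorem~\ref{thm:unpublished}, that the guarantee holds for data-dependent stopping, covers exactly this case. The bounds $\omega_{\min}\le\tilde X_i\le\omega_{\max}$ hold almost surely because they hold for every $\omega$; for Protocol~\ref{prot:random_clifford_improved}, for instance, $-1\le\omega\le 2^m$.

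The third step identifies $S$. Each unseen trial has conditional mean $\mathbb{E}[\tilde X_i\mid\text{past}]=\mathbb{E}[\omega]$ by independence. So $S=\mathbb{E}[\omega]\cdot\#\{i\le N: Y_i=1\}$, or its normalized version, depending on the convention in Ref.~\cite{zhang2026efficient}. The protocol's output $b$ is the correspondingly normalized bound. We then apply Theorem~\ref{thm:unpublished} to get $\Pr[S<b']<\delta$, and rescale to obtain Eq.~\eqref{eq:lower_bound_protocol_zhang_guarantee}.

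The main obstacle is this third step. The normalization by the random count of unseen trials must be handled carefully so that dividing by it preserves the event; if needed, we would condition on $\bm{Y}$. Beyond that, the argument depends on confirming that the procedure's guarantee applies to this exact coupling, in particular that the randomness of $\bm{Y}$ may be generated after the data. I would first argue directly that the coupled and protocol laws coincide, so that any event defined through $b$ has the same probability under both.
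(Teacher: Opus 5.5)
Your proposal is correct and follows essentially the same route as the paper's proof. Both regard Protocol~\ref{prot:lower_bound_protocol_zhang} as spot-checking a fictitious i.i.d.\ sequence distributed as $\omega$, identify the observed entries with the data via the i.i.d.\ property, and invoke Theorem~\ref{thm:unpublished} together with its validity under the stopping rule defining $N$. The normalization you flag as the main obstacle is harmless: write $U$ for the number of unseen trials; it is a function of $\bm{Y}$ and hence known, so on $\{U>0\}$ the events $\{S<b'\}$ and $\{\mathbb{E}[\omega]<b'/U\}$ coincide pointwise, and no conditioning is needed.
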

We choose the parameter $p$ to be a small value, $p = 0.01$, based on the asymptotic performance analysis of Ref.~\cite{zhang2026efficient}.
\begin{proof}
    Consider independent and identically distributed (iid) random variables $\omega'_1, \omega'_2, \dotsc$ with the same distribution as $\omega_1$. In particular, the mean of these random variables is $\mathbb{E}[\omega]$. Protocol~\ref{prot:lower_bound_protocol_zhang} can be thought of as a sequential procedure where we run the experiment by drawing $Y_i$ at trial $i$ and observing $\omega'_i$ when $Y_i = 0$. These $\omega'_i$ that we observe can be associated with the data we have, since they are iid. The experiment is stopped at the first trial $N$ when $\sum_{i = 1}^N (1 - Y_i) = T - T_c$. Then, Eq.~\eqref{eq:lower_bound_protocol_zhang_guarantee} follows from Theorem~\ref{thm:unpublished}, noting in particular that the guarantee holds even when the experiment is stopped according to such a rule.
\end{proof}

Theorem~\ref{thm:lower_bound_protocol_zhang} is essentially the counter part of Theorem~\ref{thm:improved_clifford_sample_complexity_estimator} except no explicit analytical bound is given, and the bound is instead given by a numerical procedure. To see this connection more clearly, one can rephrase Theorem~\ref{thm:improved_clifford_sample_complexity_estimator} as a special case of Theorem~\ref{thm:lower_bound_protocol_zhang} with
\begin{align}
b=\frac{1}{T}\sum_{t=1}^T\omega_t+\varepsilon,
\end{align}
where $\varepsilon$ can be obtaining by solving the following equality using the quadratic formula:
\begin{align}
e^{-\frac{T\varepsilon^2}{2\times 3.25+\frac{2}{3}\times2^m\varepsilon}}=\delta,
\end{align}
which comes from switching from two-sided to one-sided Bernstein's inequality just as in Corollary~\ref{cor:sample_complexity_soundness}. The procedure in Ref.~\cite{zhang2026efficient}, on the other hand, computes $b$ in a much more sophisticated manner exploiting information present in individual data points instead of simply using the aggregate information.

However, phrased in this way, Protocol
~\ref{prot:random_clifford_improved} is no longer well defined, since we are only given access to the lower bound $b$ and not the estimator $\hat{\omega}$ anymore, which is needed in step 7 when determining whether to output \textsc{Certified} or \textsc{Failed}. This requires us to formulate an alternate definition of the protocol.

\parbox{0.96\textwidth}{\begin{protocol_framed}[Improved level-$m$ shadow overlap with random Cliffords]\label{prot:using_lower_bound_procedure}\quad 
\normalfont
\newline
\begin{flushleft}\underline{Input:}\end{flushleft}
 Same as Protocol \ref{prot:random_clifford_improved}.
\begin{flushleft}\underline{Procedure:}\end{flushleft}
All steps are the same as Protocol \ref{prot:random_clifford_improved} except step 6 and 7, which we describe here.

For step 6, compute the lower bound $b$.

For step 7, if $b\geq 1-\varepsilon'$, output \textsc{Certified}. Otherwise, output \textsc{Failed}.

\end{protocol_framed}}

It should be noted that in Protocol~\ref{prot:random_clifford_improved}, $\omega_{\mathrm{threshold}}=1-\varepsilon'$ is set such that it is slightly above the expected shadow overlap of the adversarial case $\mathbb{E}[\omega_{\mathrm{adv}}]$. This is to allow for some fluctuations in the empirical estimator $\hat{\omega}$. On the other hand, for Protocol~\ref{prot:using_lower_bound_procedure}, the lower bound $b$ is already estimated to allow for statistical fluctuations due to the samples, so $\omega_{\mathrm{threshold}}=1-\varepsilon'$ should be set to exactly $\mathbb{E}[\omega_{\mathrm{adv}}]$. Specifically, for our experiment, we require $b>0.89$. Feeding our experimental data into the function $b$ with $\delta=0.15$ exceeds this threshold.

\newpage
\section{Quantum error detection}
\label{sec:qed}

The practical execution of quantum algorithms on current hardware is limited by noise. The conventional approach is quantum error correction (QEC), which encodes logical information in more physical qubits and performs syndrome extraction to detect and correct errors. However, QEC typically incurs substantial overhead in qubit count and circuit depth, making it difficult to implement on near-term devices. As an alternative, quantum error detection (QED) codes have emerged as a promising near-term strategy \cite{he2025performance, jin2025Icebergtipcocompilationquantum,chung2025fault,self2024protecting}. Rather than correcting errors, QED codes identify and discard erroneous outcomes \cite{Steane_iceberg}.

\subsection{Iceberg code}

The Iceberg code is a stabilizer-based quantum error detection code that encodes $k$ logical qubits into $n = k + 2$ physical qubits. 
Its stabilizer group $\mathcal{S}$ is generated by 
the $X$ and $Z$ stabilizers $S_X = \bigotimes_{i=1}^{n} X_i$ and $S_Z = \bigotimes_{i=1}^{n} Z_i$.
The code space is defined as the joint $+1$ eigenspace of these stabilizers:
\begin{equation}
\mathcal{H}_{\mathcal{S}} = \left\{ \ket\psi \in (\mathbb{C}^2)^{\otimes n} : S_X \ket\psi = \ket\psi, \ S_Z \ket\psi = \ket\psi \right\}.
\end{equation}
Logical single and two qubit rotation for the Iceberg code are 
\begin{align}
\overline{X}_i(\theta) &= X_t X_i(\theta);\quad \overline{X}_i\overline{X}_j(\theta) = X_i X_j(\theta)\nonumber \\
\overline{Z}_i(\theta) &= Z_b Z_i(\theta);\quad \overline{Z}_i\overline{Z}_j(\theta) = Z_i Z_j(\theta),
\label{eq:logical_single_qubit_gate}
\end{align}
where for convenience we label $1$-st and $k+2$-nd qubits to be top $t$ and bottom $b$ respectively (see Fig.~\ref{fig:encoder_Iceberg}\textbf{a}).
The logical Hadamard operator on all $k$ logical qubits is
\begin{equation}
\prod_{i=1}^{k} \overline{H_i} = \text{SWAP}_{t,b} \prod_{i=1}^{n} H_i.
\end{equation}

A generic encoding circuit for the Iceberg code encoding an arbitrary $k$-qubit state $\ket{\psi}$ is given in Fig.~\ref{fig:encoder_Iceberg}\textbf{a}.
Fig.~\ref{fig:encoder_Iceberg}\textbf{b}
 is the fault-tolerant encoding circuit for the logical state $\ket{+}^{\otimes k}$ with a single additional ancilla qubit.

\begin{figure}
\includegraphics[width=\textwidth]{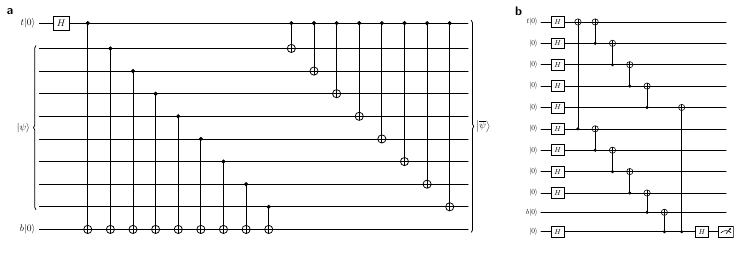}
\caption{\textbf{Encoder circuits for Iceberg code.}
 \textbf{a}, is the circuit that encodes an arbitrary $k$-qubit physical state $\ket{\psi}$ to the logical state $\ket{\overline{\psi}}$ which  requires in total of $2k+1$ physical CNOT gates.
 \textbf{b}, is the fault-tolerant encoding circuit for the state logical state $\ket{+}^{\otimes k}$ using a single ancilla qubit.
 This circuit requires a total of $k+3$ CNOT gates and two-qubit depth of $\lceil k/2\rceil+3$.}
  \label{fig:encoder_Iceberg}
\end{figure}

\subsection{Gauge fixing the Iceberg code}

For the $n$-qubit Pauli group $\mathcal{P}_n$, a stabilizer code is specified by an abelian subgroup $\mathcal{S}\subset\mathcal{P}_n$ generated by independent, commuting Pauli operators with $-I\notin\mathcal{S}$. 
The code space is the joint $+1$ eigenspace
\begin{equation}
\mathcal{H}_{\mathcal{S}} = \left\{\, \ket\psi \in (\mathbb{C}^2)^{\otimes n} : S\ket\psi = \ket\psi \ \forall S\in\mathcal{S}\,\right\}.
\end{equation}
Let $\mathcal{C}(\mathcal{S}) = \left\{ P \in \mathcal{P}_n : [P, S] = 0 \ \forall S \in \mathcal{S} \right\}$ denote the centralizer of $\mathcal{S}$, i.e. the elements of $\mathcal{P}_n$ that commute with all the stabilizers of the code.

\subsubsection{Gauge fixing}

Given a stabilizer code $(\mathcal{S}, \mathcal{H}_{\mathcal{S}})$ and a set of commuting, independent Pauli operators $\mathcal{G} = \langle g_1, \dots, g_r \rangle$ with $
g_i \in \mathcal{C}(\mathcal{S}) / \mathcal{S}  \text{ and } \comm{g_i}{g_j} = 0,
$
the \emph{gauge-fixed} stabilizer group is
\begin{equation}
\mathcal{S}' = \langle \mathcal{S}, \mathcal{G} \rangle,
\end{equation}
where $\langle \cdot \rangle$ denotes the group generated by all finite products of elements from the set. 
This defines a subcode with code space $\mathcal{H}_{\mathcal{S}'} \subseteq \mathcal{H}_{\mathcal{S}}$ and $k-r$ logical qubits. 
Effectively, gauge-fixing projects the code space onto the joint $+1$ eigenspace of $\mathcal{G}$ inside $\mathcal{H}_{\mathcal{S}}$:
\begin{equation}
\Pi_{\mathcal{G}} = \prod_{i=1}^{r} \frac{I + g_i}{2}, \qquad
\mathcal{H}_{\mathcal{S}'} = \Pi_{\mathcal{G}} \mathcal{H}_{\mathcal{S}}.
\end{equation}

\subsubsection{Logical gauge fixing}
\label{subsubsec:logical_gauge_fixing}
Let $\{\overline{X}_j, \overline{Z}_j\}_{j=1}^{k}$ be a generating set of logical Pauli operators for $(\mathcal{S}, \mathcal{H}_{\mathcal{S}})$.
\emph{Fixing a logical} means promoting a subset of these logical operators to stabilizers. For example, fixing $\overline{Z}_a$ yields
\begin{equation}
\mathcal{S}' = \langle \mathcal{S}, \overline{Z}_a \rangle, 
\qquad \mathcal{H}_{\mathcal{S}'} = \left\{\, \ket\psi \in \mathcal{H}_{\mathcal{S}} : \overline{Z}_a \ket\psi = \ket\psi \,\right\},
\end{equation}
which fixes the $a$-th logical qubit to $|0_L\rangle$ and reduces the number of encoded qubits to $k-1$.
Alternatively, fixing $\overline{X}_a$ prepares $|+_L\rangle$ on the $a$-th logical.
For the Iceberg code, we use logical gauge fixing and promote some of the logical operators to be stabilizers. Because the resulting gauge-fixed code $\mathcal{S}'$ results from keeping $r$ of the logical qubits in the parent code $\mathcal{S}$ in $\ket{0_\mathrm{L}}$ or $\ket{+_\mathrm{L}}$, the fault tolerant encoding circuits for useful states such as $\ket{+_\mathrm{L}}^{\otimes k-r}$ need to be modified relative to Fig.~\ref{fig:encoder_Iceberg} (b). An example of this, shown in
Fig.~\ref{fig:initial_state_gauge}, prepares $\ket{+_\mathrm{L}}^{\otimes k-1}$ in the gauge-fixed code which fixes $k$-th logical qubit to $\ket{0_\mathrm{L}}$.

\begin{figure}
    \centering
    \includegraphics[width=0.4\linewidth]{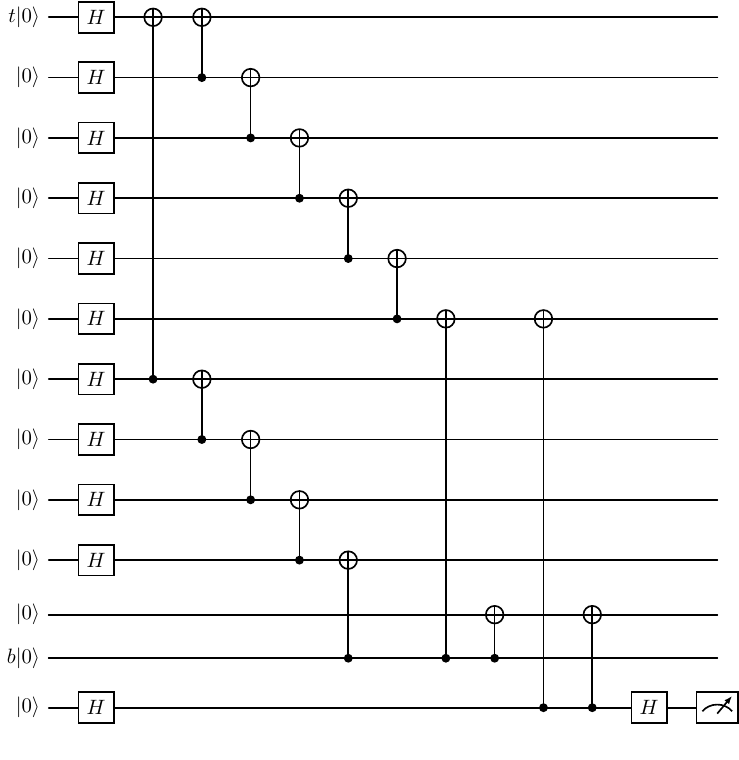}
    \caption{\textbf{Fault tolerant initialization circuit for the gauge-fixed Iceberg code.}
    The initialization  circuit for case when we fix the $(k-1)$-th logical qubit to $|+_\mathrm{L}\rangle$ and $k$-th logical qubit in $\ket{0_\mathrm{L}}$ of the Iceberg code which encode $k$ logical qubits.
    This circuit requires a total of $k+4$ CNOT gates and two-qubit depth of $\lceil k/2\rceil+4$.
    }
    \label{fig:initial_state_gauge}
\end{figure}

\subsubsection{Parallelization of single qubit gates}
Many quantum algorithms contain layers of single-qubit rotations acting on each of the qubits. In the standard Iceberg code, these rotations get compiled down to $X$- and $Z$-rotations; from Eq.~\eqref{eq:logical_single_qubit_gate} a single-qubit $X$ rotation by angle $\theta$ on logical qubit $i$ is physically implemented via $e^{-i\frac{\theta}{2} X_t \otimes X_i}$. 
Therefore, every logical $X$ rotation involve the top qubit. 
If a layer contains many logical $X$ rotations, they must be run sequentially because they share the top qubit .
Likewise, all single-qubit $Z$ rotations involve the bottom qubit $b$ and must be run sequentially.

Suppose that instead of using the standard $\llbracket k+2, k, 2\rrbracket$ Iceberg code, which encodes $k$ logical qubits into $k+2$ physical qubits and has code distance 2, we gauge fix $\ell$ qubits to fixed logical states while keeping the number of computational qubits $k$ constant, which in turn gives us a $\llbracket k+\ell+2, k, 2\rrbracket$ code.
Say we fix the logical qubit $g_z$ to be in the $+1$ eigenstate of $\overline{Z}_{g_z} = Z_b Z_{g_z}$.
This is equivalent to saying that logical qubit $g_z$ is gauge-fixed to the $|0\rangle_L$ state. Thus for any other unfixed logical qubit $q$,
\begin{align}
    \overline{Z}_q = Z_b Z_q = Z_b Z_q \cdot Z_b Z_{g_z} = Z_{g_z} Z_q.
\end{align}
So we can use the new gauge-fixed qubit $g_z$ as a proxy bottom qubit.
Similarly, if we gauge-fix logical qubit $g_x$ to $|+\rangle_L$ by adding $X_t X_{g_x}$ to the stabilizer group, for any other unfixed logical qubit $q$ we can write
\begin{align}
    \overline{X}_q = X_t X_q = X_t X_q \cdot X_t X_{g_x} = X_{g_x} X_q.
\end{align}
And so we can use gauge qubit $g_x$ as a proxy top qubit.

We can use these proxy qubits to parallelize our single qubit gate layers.
For example, to perform $\overline{X}_{q_1}(\theta_1)$ and $\overline{X}_{q_2}(\theta_2)$ in the same layer.
We can now run $\overline{X}_{q_1} \cong X_t X_{q_1}$ and $\overline{X}_{q_2} \cong X_{g_x} X_{q_2}$ in parallel, since they share no qubits. 
Given a layer of single-qubit $X$ rotations of width $w_x$ and $n_x$ gauge qubits, the circuit depth of the single-qubit rotation layer is reduced from $w_x$ to $w_x / n_x$. 
The same is true for the $Z$-type checks.

\subsection{Experimental Results}
\label{sec:qed_results}

Experiments were performed on the Quantinuum  H2-1 and H2-2 quantum hardware devices, both with a total of $56$ physical qubits.
We experimentally demonstrate the effect of QED on the shadow overlap and the fidelity. 
For the shadow overlap, we use the improved protocol with single-qubit random Pauli bases measurements. 
For the estimated fidelity, we apply the transformation $\hat{\phi}=2\hat{\omega}-1$, where $\hat{\phi}$ is the estimated fidelity and $\hat{\omega}$ is the empirical estimate of the average shadow overlap. This is approximately normalized shadow overlap in Ref.~\cite{huang2024certifying} up to a $2^{-n}$ correction. 
For $m$-level shadow overlap protocols, the transformation is approximately $\hat{\phi}=\frac{2^m}{2^m-1}(\hat{\omega}-\frac{1}{2^m})$. 
We refer to Ref.~\cite{huang2024certifying} on why this is a reasonable estimate on experimental fidelity.

Table~\ref{tab:shadow_overlap} summarizes the shadow overlap and estimated fidelity for both unencoded (physical) and Iceberg-encoded circuits.
For both devices, we observe that increasing the number of blocks leads to higher shadow overlap and fidelity.
For example, on the H2-1 device, encoding 24 logical qubits into four blocks yields a shadow overlap of $1.02 \pm 0.06$ and an estimated fidelity of $1.04 \pm 0.10$, compared to $0.90 \pm 0.04$ and $0.80 \pm 0.09$ for the physical circuit.
For more details see \Cref{tab:shadow_overlap}.

\begin{table*}[!ht]
\centering
 \begin{tabular}{|c|c|c|c|c|c|c|c|c|c|c|c|c|}
 \hline
 Device & Type & Shots & \makecell{Logical \\qubits} & Blocks & \makecell{Logical \\2-Q depth} &\makecell{Physical \\qubits} &\makecell{Physical \\2-Q depth}
 & \makecell{Physical \\2-Q gates} 
 &\makecell{Acceptance \\rate (\%)}  
 & \makecell{Shadow\\ overlap} & \makecell{Estimated \\Fidelity} \\
 \hline
 H2-1 & Physical & 500 & 24 & N/A & 5 &24 &5 & 60&100 & $0.90 \pm 0.04$ & $0.80 \pm 0.09$ \\
 \hline
 H2-1 & Encoded & 2000 & 24 & 2 & 5 & 30& 96& 432 &  48 & $1.00 \pm 0.03$ & $0.99 \pm 0.07$ \\
 \hline
 H2-1 & Encoded & 2000 & 24  & 4 & 5 &34&61  &444&17 & $1.02 \pm 0.06$ & $1.04 \pm 0.10$ \\
  \hline
 H2-2 & Physical & 500 & 32 & N/A & 5 &32 & 5&80 & 100 & $0.91 \pm 0.03$ & $0.82 \pm 0.07$ \\
  \hline
 H2-2 & Encoded & 2000 & 32 & 2 & 5 & 38 &128 &572&37  & $0.90 \pm 0.03$ & $0.79 \pm 0.06$ \\
  \hline
 H2-2 & Encoded & 2000 & 32 & 4 & 5 & 42&75 &584&39 &$0.99 \pm 0.02$ & $0.98 \pm 0.05$ \\
  \hline
 H2-2 & Physical & 700 & 40 & N/A & 6 & 40&6 &120&100 & $0.91 \pm 0.03$ & $0.82 \pm 0.06$ \\
  \hline
 H2-2 & Encoded & 3000 & 40 & 4 & 6 & 50& 86&724& 23   &$0.97 \pm 0.03$ & $0.94 \pm 0.06$ \\
 \hline
 \end{tabular}
  \caption{\textbf{Shadow overlap results.} 
  Shadow overlap values and acceptance rates for physical (unencoded) and Iceberg-encoded random quantum circuits on H2-1 and H2-2 Quantinuum hardware.}
   \label{tab:shadow_overlap}
\end{table*}

 Table~\ref{tab:shadow_gauge} compares regular QED encoding and gauge-fixing strategies on the H2-2 device. 
 For circuits with 32 logical qubits and two encoded blocks, gauge fixing improves the shadow overlap from $0.90 \pm 0.03$ to $0.97 \pm 0.03$ and the estimated fidelity from $0.79 \pm 0.06$ to $0.94 \pm 0.05$, while maintaining a similar acceptance rate. 
 Although for $40$ logical qubits and $2$ blocks, the gauge-fixed fidelity is lower than the physical fidelity, gauge fixing still provides significant improvements over regular encoded circuits. 
 If experiments with $40$ logical qubits and $4$ gauge-fixed blocks were feasible, it is plausible that the fidelity would surpass that of regular encoded circuits, which already achieve $0.94$ as shown in Table~\ref{tab:shadow_overlap}.
The gauge-fixed $40$ logical qubit, $4$ block experiment was not performed as it required more physical qubits than available on the device. 
Implementing this configuration would require $58$ physical qubits---$4$ blocks of $10$ logical qubits each, with $2$ gauge-fixed and $2$ additional physical qubits per block, plus two ancilla qubits for syndrome checks---exceeding the available qubit count on the H2 devices.

\begin{table*}[!ht]
 \centering
 \begin{tabular}{|c|c|c|c|c|c|c|c|c|c|c|c|c|}
 \hline
 Device & Type & Shots & \makecell{Logical \\qubits} & Blocks & \makecell{Logical \\2-Q depth} &\makecell{Physical \\qubits} &\makecell{Physical \\2-Q depth}
 & \makecell{Physical \\2-Q gates} 
 &\makecell{Acceptance \\rate (\%)}  
 & \makecell{Shadow\\ overlap} & \makecell{Estimated \\Fidelity} \\
 \hline
 H2-2 & Physical & 500 & 32 & N/A & 5 &32&5&80 & 100 & $0.91 \pm 0.03$ & $0.82 \pm 0.07$ \\
 \hline
 H2-2 & Encoded & 2000 & 32 & 2 & 5 & 38&128&572&37 & $0.90 \pm 0.03$ & $0.79 \pm 0.06$ \\
 \hline
 H2-2 & Gauge-fixed & 2000 & 32 & 2 & 5 &42&88&590& 38 & $0.97 \pm 0.03$ & $0.94 \pm 0.05$ \\
 \hline
 H2-2 & Physical & 700 & 40 & N/A & 6 &40 &6&120&100 & $0.91 \pm 0.03$ & $0.82 \pm 0.07$ \\
 \hline
 H2-2 & Encoded & 3000 & 40 & 2 & 6 &46&155&712& 21 & $0.82 \pm 0.03$ & $0.63 \pm 0.07$ \\
 \hline
 H2-2 & Gauge-fixed & 3000 & 40 & 2 & 6 & 50&108&730&16 & $0.88 \pm 0.04$ & $0.75 \pm 0.07$ \\
 \hline
 \end{tabular}
 \caption{\textbf{Gauge fixing.} 
 Comparison of shadow overlap and acceptance rates for regular QED and gauge-fixing strategies on H2-2 hardware. 
 For circuits with 32 logical qubits and 5 layers, gauge fixing yields higher shadow overlap than regular QED encoding. }
  \label{tab:shadow_gauge}
\end{table*}

\newcommand{\skgen}{\mathsf{SKGen}}
\newcommand{\pkgen}{\mathsf{PKGen}}
\newcommand{\sign}{\mathsf{Sign}}
\newcommand{\ver}{\mathsf{Ver}}
\newcommand{\secparam}{\lambda}
\newcommand{\sk}{\mathsf{sk}}
\newcommand{\pk}{\mathsf{pk}}
\newcommand{\negl}{\mathsf{negl}}
\newcommand{\accept}{\textsc{CERTIFIED}}
\newcommand{\reject}{\textsc{FAILED}}
\newcommand{\alice}{\mathcal{A}}
\newcommand{\puzzlegen}{\mathsf{PuzzleGen}}
\newcommand{\puzzlever}{\mathsf{PuzzleVer}}

\clearpage
\newpage
\section{One-time quantum digital signatures and one-way puzzles}\label{sec:microcrypt}

This section elaborates on the discussion at the end of the main text, and formally proves an equivalence between one-time quantum digital signatures and so-called one-way puzzles. This equivalence broadens the potential applications of quantum digital signatures to other kinds of cryptographic tasks.

\subsection{Definitions}

Below, we give formal definitions for one-time quantum digital signatures and one-way puzzles in the inefficient verifier setting.

\begin{definition}[One-Time Signature Scheme]
    A \emph{one-time (quantum) signature scheme} is a tuple of algorithms\\ $(\skgen, \pkgen, \sign, \ver)$ with the following syntax: \begin{itemize}
        \item $\skgen(1^\secparam)$ is a probabilistic polynomial-time (PPT) algorithm that takes as input a security parameter $\secparam$ in unary and outputs a secret key $\sk$.
        \item $\pkgen(\sk)$ is a quantum polynomial-time (QPT) algorithm that takes as input a secret key $\sk$ and outputs a classical public key $\pk$.
        \item $\sign(\sk, m)$ is a PPT algorithm that takes as input a secret key $\sk$ and a message $m$; it outputs a classical signature $\sigma$.
        \item $\ver(\pk, \sigma, m)$ is a (possibly inefficient) classical algorithm that takes as input a public key $\pk$, a signature $\sigma$, and a message $m$; it outputs either $\reject$ or $\accept$.
    \end{itemize}

    It satisfies the following properties:
    \begin{enumerate}
        \item \textbf{Completeness:} There exists a negligible function $\negl$ such that for any message $m$ we have \begin{align}
            \Pr \left[b = \accept : \substack{ \sk \leftarrow \skgen(1^\secparam) \\ \pk \leftarrow \pkgen(\sk) \\ \sigma \leftarrow \sign(\sk, m) \\ b \leftarrow \ver(\pk, \sigma, m) } \right] \ge 1 - \negl(\secparam).
        \end{align}
        \item \textbf{One-Time Security:} There exists a negligible function $\negl$ such that for any QPT algorithms $\alice_1, \alice_2$ we have \begin{align}
            \Pr \left[m' \ne m \land b = \accept : \substack{ \sk \leftarrow \skgen(1^\secparam) \\ \pk \leftarrow \pkgen(\sk) \\ m \leftarrow \alice_1(\pk) \\ \sigma \leftarrow \sign(\sk, m) \\ (\sigma',m') \leftarrow \alice_2(\pk, \sigma) \\ b \leftarrow \ver(\pk, \sigma', m') } \right] \le \negl(\secparam).
        \end{align}
    \end{enumerate}
\end{definition}
    \noindent Note that the only ``quantum" parts are in the public key generation and the assumptions about the adversary.

\begin{definition}[One-Way Puzzle]
    A one-way puzzle is a tuple of algorithms $(\puzzlegen, \puzzlever)$ with the following syntax: \begin{itemize}
        \item $\puzzlegen(1^\secparam)$ is a QPT algorithm that takes as input a security parameter $\secparam$ in unary; outputs a secret key $s$ and a puzzle $p$.
        \item $\puzzlever(s,p)$ is a possibly inefficient classical algorithm that takes as input a secret key $s$ and a puzzle $p$; it outputs $\accept$ or $\reject$.
    \end{itemize}
    It satisfies the following properties:
    \begin{enumerate}
        \item \textbf{Completeness:} There exists a negligible function $\negl$ such that we have \begin{align}
            \Pr \left[b = \accept : \substack{ (s,p) \leftarrow \puzzlegen(1^\secparam) \\ b \leftarrow \puzzlever(s,p) } \right] \ge 1 - \negl(\secparam).
        \end{align}
        \item \textbf{Soundness:} There exists a negligible function $\negl$ such that for any QPT algorithm $\alice$, we have \begin{align}
            \Pr \left[b = \accept : \substack{ (s,p) \leftarrow \puzzlegen(1^\secparam) \\ s' \leftarrow \alice(p) \\ b \leftarrow \puzzlever(s',p) } \right] \le \negl(\secparam).
        \end{align}
    \end{enumerate} 
\end{definition}

\subsection{Equivalence}

Here, we adapt the arguments of \cite{CGG24}, which show the equivalence between one-time signatures and one-way puzzles in the efficient verification, to our setting in which verification may be inefficient.

\begin{theorem}
    One-time quantum digital signature schemes exist if and only if one-way puzzles exist.
\end{theorem}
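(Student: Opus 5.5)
We prove the two directions separately, following \cite{CGG24} and checking at each step that the reductions never need to run $\ver$ or $\puzzlever$ efficiently. The only ingredients are the syntax above and the fact that every adversary we build is QPT whenever the adversary it is built from is QPT.

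\emph{Signatures $\Rightarrow$ puzzles.} Given $(\skgen,\pkgen,\sign,\ver)$, we take a Lamport-style one-bit instance and define $\puzzlegen(1^\secparam)$ as follows. Sample $\sk\leftarrow\skgen(1^\secparam)$ and $\pk\leftarrow\pkgen(\sk)$, then sample a uniformly random message bit $m$ and set $\sigma\leftarrow\sign(\sk,m)$. Output the puzzle $p=(\pk,m,\sigma)$ and the secret $s=\sign(\sk,1-m)$. Define $\puzzlever(s',p)=\ver(\pk,s',1-m)$, which is classical and possibly inefficient, as the definition allows. Completeness of the puzzle follows directly from completeness of the signature scheme applied to message $1-m$.

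For soundness, a QPT puzzle breaker $\alice$ yields a signature forger. The forger's $\alice_1(\pk)$ picks a uniformly random $m$. Its $\alice_2(\pk,\sigma)$ runs $\alice(\pk,m,\sigma)$ and outputs $(s',1-m)$. The forger wins exactly when $\puzzlever$ accepts, so a non-negligible puzzle advantage becomes a non-negligible forging probability. This contradicts one-time security. A QPT $\alice$ thus yields a QPT forger, and no verification is ever run inside the reduction.

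\emph{Puzzles $\Rightarrow$ signatures.} Given $(\puzzlegen,\puzzlever)$, we build Lamport one-time signatures for $\ell$-bit messages. $\skgen$ samples the random coins $r_{i,b}$ for $i\in[\ell]$ and $b\in\{0,1\}$. $\pkgen$ runs $\puzzlegen(1^\secparam;r_{i,b})$ to obtain $(s_{i,b},p_{i,b})$ and publishes $\pk=\{p_{i,b}\}$. One subtlety here is that $\skgen$ must be PPT while $\puzzlegen$ is QPT. We handle it by letting $\sign$ recompute $s_{i,m_i}$ from the stored classical coins. This works only if $\puzzlegen$ is derandomizable given its coins, which it is not in general because it is quantum. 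I would therefore modify the construction so that $\pkgen$ writes the secrets into the key material, or otherwise adopt the convention of \cite{CGG24} (and of Protocol~\ref{prot:single_bit_digital_signature}, where the secret is classical and the quantum part lies only in $\pkgen$). This mismatch of syntax between a QPT $\puzzlegen$ and a PPT $\skgen$ is the step I expect to be the main obstacle. The fallback is to relax $\skgen$ to be QPT, or to define $\sk$ as the output of the quantum generation together with its classical output, so that $\sign$ can read off $s_{i,m_i}$.

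With that settled, $\sign(\sk,m)=(s_{i,m_i})_i$ and $\ver(\pk,\sigma,m)$ accepts iff $\puzzlever(\sigma_i,p_{i,m_i})$ accepts for all $i$. This $\ver$ is inefficient but classical. Completeness follows from a union bound over the $\ell$ puzzles. For security, a forger with message $m'\neq m$ must differ from $m$ at some position $i$. The reduction embeds the challenge puzzle at a uniformly random slot $(i^*,b^*)$ and generates all other puzzles itself, so it knows their secrets and can answer any signing query that avoids slot $(i^*,b^*)$. It aborts if $m_{i^*}=b^*$. Otherwise it outputs $\sigma'_{i^*}$, and it succeeds with probability at least the forging probability divided by $2\ell$. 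This reduction also never runs $\puzzlever$, so it remains QPT.
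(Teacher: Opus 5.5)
Your plan matches the paper's proof in both directions. A signature scheme yields a puzzle by publishing the public key as the puzzle, puzzles yield Lamport-style signatures, and in each reduction the inefficient verifiers are only run inside other verifiers.

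The differences are minor:
\begin{itemize}
\item \emph{Forward direction.} The paper takes $p=\mathsf{pk}$ and $s=\mathsf{sk}$, and lets $\mathsf{PuzzleVer}(s',p)$ sign a fixed message $m_0$ with $s'$ and then run $\mathsf{Ver}$. Its forger queries some $m\neq m_0$ and ignores $\sigma$. Your puzzle also carries $(m,\sigma)$. This is harmless, since one-time security covers one signature, but it is unnecessary. It also makes your forger depend on $\mathcal{A}_1$ passing $m$ to $\mathcal{A}_2$, which the paper's syntax $\mathcal{A}_2(\mathsf{pk},\sigma)$ does not literally provide. Fixing the target message, as the paper does, avoids this.
\item \emph{Reverse direction.} The paper treats only one-bit messages with two puzzles (loss $1/2$) and defers to Lamport for longer messages. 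Your $\ell$-bit slot-guessing argument with loss $1/(2\ell)$ is the standard generalization and is correct.
\end{itemize}

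The obstacle you flag is real, and the paper does not resolve it. Its $\mathsf{SKGen}$ simply runs $\mathsf{PuzzleGen}$ twice. It therefore silently adopts your first fallback (quantum key generation), contradicting its own requirement that $\mathsf{SKGen}$ be PPT. Your other options fail as stated:
\begin{itemize}
\item Recomputing $s$ from stored coins fails for the reason you give.
\item $\mathsf{PKGen}$'s only output is the public key, so writing $s$ there in the clear destroys security.
\end{itemize}
Writing $s$ there masked by the classical key does close the gap, in your construction and in the paper's, without changing the definition:
\begin{itemize}
\item $\mathsf{SKGen}$ samples uniform $r_{i,b}\in\{0,1\}^{L(\lambda)}$, where $L$ is a polynomial bound on the (padded) length of puzzle secrets.
\item $\mathsf{PKGen}(\mathsf{sk})$ runs $\mathsf{PuzzleGen}$ to obtain $(s_{i,b},p_{i,b})$ and publishes $\mathsf{pk}=\{(p_{i,b},c_{i,b})\}$ with $c_{i,b}=s_{i,b}\oplus r_{i,b}$.
\item $\mathsf{Sign}(\mathsf{sk},m)=(r_{i,m_i})_i$.
\item $\mathsf{Ver}$ accepts iff $\mathsf{PuzzleVer}(c_{i,m_i}\oplus\sigma_i,\,p_{i,m_i})$ accepts for every $i$.
\end{itemize}
Completeness is unchanged. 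Each $c_{i,b}$ is uniform and independent of $p_{i,b}$, so your reduction simulates the challenge slot perfectly by publishing $(p,c)$ with a fresh uniform $c$. It answers the signing query with the $r$'s it knows and outputs $c\oplus\sigma'_{i^*}$. The rest of your argument goes through verbatim.
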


\begin{proof}
    We start with the forward direction. Given a one-time signature scheme $(\skgen, \pkgen, \sign, \ver)$, we construct a one-way puzzle $(\puzzlegen, \puzzlever)$ as follows: \begin{itemize}
        \item $\puzzlegen(1^\secparam)$ computes $\sk \leftarrow \skgen(1^\secparam)$ and $\pk \leftarrow \pkgen(\sk)$. It outputs secret key $s = \sk$ and puzzle $p = \pk$.
        \item $\puzzlever(s,p)$ parses $\sk = s$ and $\pk = p$. It computes $\sigma \leftarrow \sign(\sk,m_0)$ and $b \leftarrow \ver(\pk, \sigma, m_0)$ for a fixed message $m_0$. It outputs $b$.
    \end{itemize}

    Completeness follows by the completeness of the signature scheme. To show soundness, suppose there exists $\alice$ that breaks the soundness of the one-way puzzle. We construct $(\alice_1,\alice_2)$ which breaks the one-time security of the signature scheme: \begin{itemize}
        \item $\alice_1(\pk)$ picks any $m \ne m_0$ as its signing query.
        \item $\alice_2(\pk, \sigma)$ ignores $\sigma$. It computes $s \leftarrow \alice(\pk)$ and $\sigma' \leftarrow \sign(s,m_0)$. It outputs $(\sigma', m_0)$. 
    \end{itemize}
    By assumption, $(\alice_1, \alice_2)$ breaks the soundness with the same advantage as $\alice$.

    \par Next, we show the reverse direction. Given a one-way puzzle $(\puzzlegen, \puzzlever)$ we construct a one-time signature scheme $(\skgen, \pkgen, \sign, \ver)$ for one-bit messages below. The argument can be generalized to longer messages as done by Ref. \cite{Lamport2016ConstructingDS}.
    \begin{itemize}
        \item $\skgen(1^\secparam)$ computes $(s_0,p_0) \leftarrow \puzzlegen(1^\secparam)$ and $(s_1,p_1) \leftarrow \puzzlegen(1^\secparam)$ by running $\puzzlegen$ in two independent instances. It outputs $\sk = (s_0,s_1,p_0,p_1)$.
        \item $\pkgen(\sk)$ parses $\sk = (s_0,s_1,p_0,p_1)$ and outputs $\pk = (p_0,p_1)$.
        \item $\sign(\sk, m)$ parses $\sk = (s_0,s_1,p_0,p_1)$ and outputs $\sigma = s_m$.
        \item $\ver(\pk, \sigma, m)$ parses $\pk = (p_0,p_1)$ and computes $b \leftarrow \puzzlever(\sigma, p_m)$. It outputs $b$.
    \end{itemize}

    Correctness follows by the correctness of the one-way puzzle. To show one-time security, suppose there exists $(\alice_1, \alice_2)$ that breaks it. We construct $\alice$ which breaks the soundness of the one-way puzzle: \begin{itemize}
        \item $\alice$ receives a puzzle $p$. It samples a new key-puzzle pair $(s',p')$, then sets $\pk = (p,p')$ or $\pk = (p',p)$ randomly.
        \item Then, $\alice$ computes $m \leftarrow \alice_1(\pk)$. If either $\pk = (p,p')$ and $m=1$, or $\pk=(p',p)$ and $m=0$, $\alice$ continues; otherwise, it aborts.
        \item If $\alice$ does not abort, then $s'$ is a valid signature of $m$. Thus, $\alice$ computes $(\sigma',m') \leftarrow \alice_2(\pk, s')$. Then, $\alice$ outputs $\sigma'$.
    \end{itemize}

    With non-negligible probability, $m' \ne m$ and $\sigma'$ is a valid signature on $m'$. This means that $\puzzlever(\sigma', p)$ accepts with non-negligible probability. Since, the probability that $\alice$ does not abort is $1/2$, this suffices for the proof.

    Note that in the reductions above, the verification algorithms are all inefficient, and they are exclusively run by other verification algorithms, hence not affecting the efficiency of other algorithms.

\end{proof}

\stoptocentries
\putbib[citations]
\end{bibunit}

\end{document}